\documentclass[11pt]{article}

\usepackage[a4paper]{geometry}

\usepackage[english]{babel}
\usepackage[utf8]{inputenc}
\usepackage[T1]{fontenc}
\usepackage{epsfig, float}
\usepackage{appendix}

\usepackage{longtable}
\usepackage{mathcomp}
\usepackage{dsfont}
\usepackage{lmodern}
\usepackage{scrextend}
\usepackage{color}

\usepackage[nottoc]{tocbibind}

\usepackage{amsmath,amssymb,amsfonts,amsthm}
\usepackage{amsfonts}
\usepackage{url}
\usepackage{bbm}
\usepackage{mathrsfs} 

\usepackage{enumerate}
\usepackage[hidelinks]{hyperref}

\newcommand{\norm}[1]{\left\lVert#1\right\rVert}

\newcommand{\R}{\mathbb{R}}
\newcommand{\C}{\mathbb{C}}
\newcommand{\N}{\mathbb{N}}

\renewcommand{\Re}{\operatorname{Re}}
\renewcommand{\Im}{\operatorname{Im}}

\newtheorem{theorem}{Theorem}[section]
\newtheorem{lemma}[theorem]{Lemma}
\newtheorem{proposition}[theorem]{Proposition}
\newtheorem{remark}[theorem]{Remark}
\newtheorem{definition}[theorem]{Definition}



\numberwithin{equation}{section} 




\bibliographystyle{abbrv} 


\usepackage{cancel}

\usepackage[normalem]{ulem}
\usepackage{todonotes}

\definecolor{miguelscolor}{rgb}{.7,.2,.2}
\definecolor{dirkscolor}{rgb}{.2,.2,.7}
\definecolor{felixscolor}{rgb}{.2,.7,.2}


\newcommand{\ac}[1]{}

\date{\today}

\title{\Large{\textsc{Relation between\\the Resonance and the Scattering
Matrix\\in the massless Spin-Boson Model}}}

\author{Miguel Ballesteros\thanks{\texttt{miguel.ballesteros@iimas.unam.mx},
    Instituto de Investigaciones en Matem\'aticas Aplicadas y en Sistemas,
    Universidad Nacional Aut\'anoma de M\'exico}, Dirk-Andr\'e
    Deckert\thanks{\texttt{deckert@math.lmu.de}, Mathematisches Institut
    der Ludwig-Maximilians-Universität München}, Felix
    H\"anle\thanks{\texttt{haenle@math.lmu.de}, Mathematisches Institut
der Ludwig-Maximilians-Universität München}}

\begin{document}
\maketitle

\begin{abstract}
We establish the precise relation between the integral kernel of the scattering
matrix and the resonance in the massless Spin-Boson model which describes the
interaction of a two-level quantum system with a second-quantized scalar field.
For this purpose, we derive an explicit formula for the two-body scattering
matrix.  We impose an ultraviolet cut-off and assume a slightly
less singular behavior of the boson form factor of the relativistic scalar
field but no infrared cut-off.    The purpose of this
work is to bring together scattering
and resonance theory and 
arrive at a similar result as  provided by  Simon in \cite{simonnbody},
 where  it was shown that the singularities of the
meromorphic continuation of the integral kernel of the scattering matrix
are located precisely at the resonance energies.
The corresponding  problem has been open
 in   quantum field theory  ever  since.
  To the best of our knowledge, the presented formula provides the
first rigorous connection between resonance and scattering theory in the sense
of \cite{simonnbody} in a model of quantum field theory. 
\end{abstract}

\section{Introduction}
\label{sec:introduction}

In this paper, we  analyze the massless Spin-Boson model which is a
 non-trivial model of quantum field theory.
It can be seen as a model of a two-level atom interacting with its
second-quantized scalar field, and hence, provides a widely employed
 model  for quantum optics which gives insights into scattering processes
between photons and atoms.  The unperturbed energies of the two-level atom
shall be denoted by  real numbers $0 = e_0<e_1$.  It is
well-known that after switching on the interaction with a massless scalar
field, which may induce transitions between the atom levels, the free ground
state energy $e_0$ is shifted to the interacting ground state energy
$\lambda_0$ while the free excited state with energy $e_1$ turns into a
resonance with complex ``energy'' $\lambda_1$. 

One of the main  mathematical difficulties in the study of the massless
Spin-Boson model is the absence of a spectral gap which does not allow a
straight-forward application of  regular perturbation theory.  Several
techniques have been developed to overcome this difficulty.
There are  two methods that rigorously
address this problem: The so-called
 renormalization group (see e.g.\
\cite{bfs1,bfs2,bfs3,bcfs,bfs100,bbf,fgs100,feshbach,s,f,bffs}) which
was  the first one  used to construct resonances in models of quantum field theory, and
 furthermore,  the so-called multiscale method which was developed in
\cite{pizzo1,pizzo2,bach,bbp} and also successfully applied in various models
of quantum field theory.  In both cases, a family of  spectrally dilated
Hamiltonians is analyzed since this allows for complex eigenvalues.  Our work
draws from the results obtained in a previous article \cite{bdh-res}  which is
build on the latter technique mentioned above.  Beyond  the
construction, we obtained several spectral estimates and analyticity properties
 in \cite{bdh-res}  which are crucial ingredients  for this work.

In addition to the resonance theory, also the scattering theory is
well-established in various models of quantum field theory, e.g., in
\cite{fau1,fau2,fau3,fgs1,fgs2}, and in particular in the massless Spin-Boson
model, e.g., in  \cite{rgk,rk,rk2,derezinski,bkz}.  The purpose of this work is
to bring these two well-developed fields together and to arrive at a similar
result as  provided by  Simon in \cite{simonnbody}.
Therein,  it was shown that
the singularities of the meromorphic continuation of the integral kernel of the
scattering matrix are located precisely at the resonance energies.
 To the best of our knowledge, this question has not yet been 
addressed in models of quantum field theory, which is most probably due to the
fact that quantum field models involve new subtleties as compared to the quantum
mechanical ones. These can however be addressed with the recently developed methods of
multiscale analysis and spectral renormalization (while we rely on the former in this work).  
 We provide a representation of the scattering
matrix in terms of an expectation value of the resolvent of a spectrally
dilated Hamiltonian; see Theorem  \ref{FKcor}  below. The relation of the scattering
matrix and the resonance can then be read of this formula; see  Eqs.\ \eqref{scatteringformulapp} and \eqref{scatteringformulapp1}   below. Loosely put, our results imply that, for the photon momenta $|k'|$ in a neighborhood of $\Re\lambda_1-\lambda_0$, the
leading order (in $g$ for small $g$) of the  integral kernel of the transition matrix fulfills 
\begin{align}
 |T(k,k')|^2\sim 
  \frac{ E_1^2 g^4 }{ ( |k'| + \lambda_0   
 - \Re \lambda_1 )^2
+ g^4 E_1^2},
\label{eq:lorentzian}
\end{align} 
where we define 
\begin{align}
\label{def:e1}
 E_1 :=  g^{-2} \Im \lambda_1 ,
\end{align}
 and it turns out that there are constant numbers $E_I<0$,  $a>0$  and a uniformly bounded function $\Delta\equiv \Delta(g)$ such that $E_1=E_I+g^{ a }  \Delta$.
Heuristically, for an experiment in which a two-level atom is irradiated with
monochromatic incoming light quanta of  momentum $k'\in\R^3$, the relation
\eqref{eq:lorentzian} states that the intensity of  the outgoing light quanta
with momentum $k\in\R^3$ is proportional to  $ |T(k,k')|^2$, which is given as
a Lorentzian function with  maximum at $|k'|=\Re \lambda_1 -\lambda_0$  and
width $2\Im \lambda_1$. This relation is already found as folklore knowledge
in physics text-books.  In this work we give a rigorous derivation in the
model at hand.   On the other hand, the relation between the imaginary value of the
resonance and the decay rate of the unstable excited state  was established
rigorously in several articles \cite{sigal,hasler,sf,bmw}.

In \cite{bfp}, a rigorous  mathematical justification of   Bohr’s frequency condition  is derived, using  an expansion of the scattering amplitudes with respect to powers the finestructur constant for the Pauli-Fierz model. In particular, they calculate the leading order term  and provide an algorithm for computing the other terms.  In \cite{bkz}, the photoelectric effect is studied for a model of an atom with a single bound state,  coupled to the quantized electromagnetic  field.  In their work, they use similar techniques  for estimating time evolution as the ones presented in this manuscript.

\subsection{The Spin-Boson model}
\label{sec:defmodel}
In  this section  we introduce the considered model and
 state preliminary definitions and well-known tools and facts from which we
start our analysis. If the reader is already familiar with the introductory
Sections 1.1 until 1.2 of \cite{bdh-res}, these subsections can be skipped. The
notation is identical and these subsections are only given for the purpose of
self-containedness. \\

The non-interacting Spin-Boson Hamiltonian is defined as
\begin{align}
\label{h0def}
H_0:=K + H_f , \qquad K:= \begin{pmatrix}
e_1 & 0 \\
0 & e_0
\end{pmatrix} ,
\qquad
H_f:=\int \mathrm{d^3}k \, \omega(k) a(k)^* a(k).
\end{align}
We regard $K$  as an idealized free Hamiltonian of a two-level atom. As already
stated in the introduction, its two energy levels are denoted by the real
numbers $0 = e_0 <e_1$. Furthermore, $H_f$ denotes the free
Hamiltonian of a massless scalar field having dispersion relation
$\omega(k)=|k|$, and $a,a^*$ are the annihilation and creation operators on the
standard Fock space which will be defined  in  \eqref{def:a} and \eqref{def:a1}  below. In the following we will
sometimes call $K$ the atomic part, and $H_f$ the free field part of the
Hamiltonian.  The sum of the free two-level atom Hamiltonian $K$ and the free
field Hamiltonian $H_f$ will simply be referred to as the ``free Hamiltonian''
$H_0$.  The interaction term reads
\begin{align}
\label{interaction}
V:= \sigma_1\otimes \left( a(f) + a(f)^*\right) , \qquad   \sigma_1:= \begin{pmatrix}
0 & 1 \\
1 & 0
\end{pmatrix} , 
\end{align}
where the boson form factor  is given by
\begin{align}
f: \R^3 \setminus \{0\}\to \R , \qquad k\mapsto e^{-\frac{k^2}{\Lambda^2}}|k|^{-\frac{1}{2}+\mu} .
\label{eq:f}
\end{align}
Note that the relativistic form factor of a scalar field should rather be $ f(k)=(2\pi)^{-\frac{3}{2}}(2|k|)^{-\frac{1}{2}} $, which however renders the model ill-defined due to the fact that such an $f$ would not be square integrable. This is referred to as  ultraviolet divergence.  In our case, the gaussian factor in \eqref{eq:f} acts as an ultraviolet cut-off for $\Lambda>0$ being the ultraviolet cut-off parameter and in addition the fixed number 
\begin{align}
\label{const:mu}
\mu\in (0,1/2)
\end{align}
 implies a regularization of the infrared singularity at $k=0$ which is a
 technical assumption chosen for this work to keep the proofs more tractable.
 With  additional work, one can also treat the case $\mu=0$ with methods
 described in \cite{bbkm}. The missing factor of
 $2^{-\frac{1}{2}}(2\pi)^{-\frac{3}{2}}$  will be absorbed in the coupling
 constant  $g$  in our notation. Note that the form factor $f$ only depends on the
 radial part of $k$. To emphasize this,  we often write $f(k)\equiv f(|k|)$.

The full Spin-Boson Hamiltonian is then defined as
\begin{align}
\label{eq:H}
H:= H_0 + g V
\end{align}
for some  coupling constant $g>0$ on the
Hilbert space
\begin{align}
\label{hilbertspace}
\mathcal H := \mathcal K \otimes \mathcal F\left[ \mathfrak{h}\right] , \qquad
\mathcal K:= \C^2, 
\end{align}
where 
\begin{align}
\label{fockspace}
\mathcal F\left[ \mathfrak{h}\right] :=  \bigoplus^\infty_{n=0} \mathcal F_n\left[ \mathfrak{h}\right] 
,\qquad
\mathcal F_n\left[ \mathfrak{h}\right] := 
\mathfrak{h}^{\odot n},\qquad 
\mathfrak h:= L^2(\mathbb R^3,\C)
\end{align}
denotes the standard bosonic Fock space, and superscript $\odot n$ denotes the
n-th symmetric tensor product,  where by convention $\mathfrak{h}^{\odot 0}\equiv
\C$. Note that we identify $K\equiv K\otimes 1_{\mathcal F[\mathfrak h]}$ and
$H_f\equiv 1_{\mathcal K}\otimes H_f$ in our notation (see Remark \ref{R} below).  

Due to the direct sum, an element $\Psi \in
\mathcal F[\mathfrak{h}]$
can be represented as a family $(\psi^{ (n)})_{n\in\N_0}$ of wave functions $\psi^{ (n)} \in \mathfrak{h}^{\odot n}$. The state $\Psi$ with $\psi^{ (0)}=1$ and $\psi^{ (n)}=0$ for all $n\geq 1$ is called the vacuum and is denoted by
\begin{align}
\label{Omega}
\Omega:=(1,0,0,\dots)\in \mathcal F\left[ \mathfrak{h}\right] .
\end{align}
We define 
\begin{align}
\label{fock0}
\mathcal F_0:= \Big \{ \Psi=(\psi^{ (n)})_{n\in\N_0}\in
        \mathcal F[\mathfrak h] \,\big|\, 
        \exists  N  \in \N_0   :  
\psi^{ (n)}=  0 \, \forall n\geq N,   \forall n \in \N :
   \psi^{ (n)} \in \mathit S(\R^{3n }, \C)    \Big \},
\end{align}
where $\mathit S(\R^{3n }, \C) $ denotes the Schwartz space of  infinitely differentiable  functions with rapid decay. 

Then, for any $h\in \mathfrak{h}$, we define the operator $a(h): \mathcal F_0 \to \mathcal F_0$ by 
\begin{align}
\label{def:a}
 \left(a(h) \Psi  \right)^{ (n)}(k_1,...,k_n)&=\sqrt{n+1}
   \int \mathrm{d}^3k \, \overline{h(k)}\psi^{ (n+1)}(k,k_1,...,k_n) 
\end{align}
and $a(h)\Omega=0$. The operator $a(h)$ is closable and,  using  a slight abuse of notation, we denote  its   closure by the same symbol $a(h)$ in the following. The operator $a(h)$ is called the annihilation operator. The creation operator is defined as the adjoint of $a(h)$ and we denote it by $a(h)^*$. For $\Psi=(\psi^{ (n)})_{n\in\N_0} \in \mathcal F_0$, we find  that 
\begin{align}
\label{def:a1}
 \left(a(h)^* \Psi  \right)^{ (n)}(k_1,...,k_n)&=
 \frac{1}{\sqrt{n}}\sum^n_{i=1}  h(k_i)
   \psi^{ (n-1)}(k_1,...,\tilde k_i,...,k_n) ,
\end{align}
where the notation  $\tilde \cdot $ means that the corresponding variable is omitted.

 Occasionally,
we shall also use the physics notation and define the point-wise  creation and annihilation operators. The action of the latter in the $n$ boson sector  is to be
understood as:
\begin{align}
    \label{eq:aformal}
    \left(a(k) \Psi  \right)^{ (n)}(k_1,...,k_n)&=\sqrt{n+1}
    \psi^{ (n+1)}(k,k_1,...,k_n) , 
\end{align}    
 for $\Psi=(\psi^{ (n)})_{n\in\N_0}  \in \mathcal F_0$.
The operator $a(k)$ is not closable.
The point-wise creation operator $a(k)^*$ is only defined as a quadratic form on $\mathcal F_0$ in the following sense:
\begin{align}
\label{eq:akstar}
\left\langle \Phi, a(k)^* \Psi \right\rangle = \left\langle a(k)\Phi,  \Psi \right\rangle , \qquad \forall \Phi, \Psi \in \mathcal F_0.
\end{align}
Moreover, we define quadratic forms:
\begin{align}
\mathcal F_0 \times \mathcal F_0 \to \C, \quad (\Phi, \Psi) \mapsto \int \mathrm{d}^3k\, \overline{h(k)} \left\langle \Phi, a(k)\Psi \right\rangle
\end{align}
and 
\begin{align}
\mathcal F_0 \times \mathcal F_0 \to \C, \quad (\Phi, \Psi) \mapsto \int \mathrm{d}^3k\, h(k) \left\langle  \Phi, a(k)^* \Psi \right\rangle .
\end{align}
It is not difficult to see that these quantities are equal to $\left\langle\Phi, a(h)\Psi \right\rangle$ and $\left\langle\Phi, a(h)^*\Psi \right\rangle$, respectively.
The point-wise creation operator $a(k)^*$ is not defined as an operator  but, formally, we can express it in the following way:
\begin{align}
    \left(a(k)^* \Psi  \right)^{ (n)} (k_1,...,k_n)&=\frac{1}{\sqrt{n}}\sum^n_{i=1}
    \delta^{(3)}(k-k_i) \psi^{ (n-1)}(k_1,...,\tilde k_i,...,k_n)  . 
\end{align}
This is the usual formula that physicists use.
Here,   $\delta$ denotes the Dirac's delta tempered distribution acting on the Schwartz space of
test functions. 
Note that $a$ and $a^*$ fulfill the canonical commutation
relations:
\begin{align}
\label{eq:ccr}
  \forall h,l\in\mathfrak{h}, \qquad \left[a(h),a^*(l)   \right]=\left\langle h, l\right\rangle_2 , \qquad \left[a (h),a(l)   \right]=0 , \qquad \left[a^*(h),a^*(l)   \right]=0.
\end{align}
Let us recall some well-known facts about the introduced model. 
Clearly, $K$ is self-adjoint on $\mathcal K$ and its spectrum 
consists of two eigenvalues $e_0$ and $e_1$. The corresponding eigenvectors are
\begin{align}
\label{varphi}
\varphi_0= \left( 0,1 \right)^T \qquad \text{and} \qquad  \varphi_1= \left(1,0\right)^T \qquad \text{with} \qquad K \varphi_i =e_i \varphi_i , \quad i=0,1.
\end{align}
Moreover, $H_f$ is self-adjoint on its natural domain $\mathcal D(H_f)\subset \mathcal F[\mathfrak{h}]$ and its spectrum  $\sigma (H_f)= [0, \infty )$
 is absolutely continuous (see \cite{reedsimon2}). Consequently, the spectrum of $H_0$ is given by
$\sigma (H_0)=  [e_0, \infty )$, and $e_0,e_1$ are eigenvalues embedded in the absolutely continuous part of the spectrum of $H_0$ (see \cite{reedsimon1}).

Finally, also the self-adjointness of the full Hamiltonian $H$ is well-known
(see,  e.g.,  \cite{spohnspin}) and it can be shown using the standard estimate in
Lemma \ref{lemma:standardest1}. 
\begin{proposition}
\label{thm:Hsa}
The operator $gV$ is relatively bounded by $H_0$ with infinitesimal bound and, 
consequently, $H$ is self-adjoint and bounded below on the domain 
\begin{align}
\mathcal D(H) = D(H_0)=\mathcal K \otimes \mathcal D(H_f ),
\end{align}
i.e., there is a constant $b\in\R$ such that
\begin{align}
\label{bbdbelow}
b\leq H .
\end{align}
\end{proposition}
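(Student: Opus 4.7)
The plan is to invoke the Kato--Rellich theorem. For this it suffices to show that $gV$ is infinitesimally small relative to $H_0$, i.e.\ that for every $\varepsilon>0$ there is $C_\varepsilon>0$ with $\|gV\Psi\|\le \varepsilon\|H_0\Psi\|+C_\varepsilon\|\Psi\|$ for all $\Psi\in\mathcal D(H_0)$. Granted this, Kato--Rellich yields both self-adjointness and boundedness from below of $H$ on $\mathcal D(H_0)$, and since $K$ is a bounded operator on the finite-dimensional space $\mathcal K$ we have $\mathcal D(H_0)=\mathcal K\otimes \mathcal D(H_f)$.

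The first step is to check that the form factor $f$ and its infrared-weighted companion $f/\sqrt{\omega}$ both lie in $L^2(\R^3)$. The Gaussian factor $e^{-k^2/\Lambda^2}$ controls the ultraviolet; near the origin, $|f(k)|^2\sim |k|^{-1+2\mu}$ and $|f(k)/\sqrt{\omega(k)}|^2\sim |k|^{-2+2\mu}$, which, together with the three-dimensional volume element $r^2\,dr$, are integrable precisely because $\mu>0$. This is the single place where the infrared regularization \eqref{const:mu} enters. With this integrability in hand, Lemma \ref{lemma:standardest1} (the standard annihilation/creation estimate) supplies bounds of the schematic form
\[
\|a(f)\Psi\|\le \|f/\sqrt{\omega}\|_2\,\|H_f^{1/2}\Psi\|,\qquad \|a(f)^*\Psi\|\le \|f/\sqrt{\omega}\|_2\,\|H_f^{1/2}\Psi\|+\|f\|_2\,\|\Psi\|.
\]
Since $\|\sigma_1\|_{\mathrm{op}}=1$, these combine to give $\|gV\Psi\|\le C_1\|H_f^{1/2}\Psi\|+C_2\|\Psi\|$ with constants linear in $g$.

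The second step is to trade the square-root estimate for a genuine $H_0$-bound with arbitrarily small prefactor. Spectral calculus applied to the elementary inequality $\sqrt{x}\le \delta x+1/(4\delta)$ (valid for $x\ge 0$ and any $\delta>0$) yields $\|H_f^{1/2}\Psi\|\le \delta\|H_f\Psi\|+(4\delta)^{-1}\|\Psi\|$. Writing $H_f=H_0-K$ with $K$ bounded on $\mathcal K$ gives $\|H_f\Psi\|\le \|H_0\Psi\|+\|K\|\,\|\Psi\|$, so that choosing $\delta$ small enough makes the $H_0$-prefactor in the estimate of $\|gV\Psi\|$ as small as desired. Kato--Rellich then delivers the claim, including the lower bound \eqref{bbdbelow}.

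There is no real obstacle here: the standard estimate is an input, and the remaining steps are routine. The one substantive point is that the $\mu>0$ assumption is exactly what makes $f/\sqrt{\omega}\in L^2$, which is the crucial ingredient for the standard bound on $a(f)$ and hence for infinitesimal relative boundedness in the massless case.
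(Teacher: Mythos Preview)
Your argument is correct and is precisely the approach the paper indicates: the paper does not give a detailed proof of Proposition~\ref{thm:Hsa} but refers to the standard estimate in Lemma~\ref{lemma:standardest1} (together with the well-known reference \cite{spohnspin}), and your Kato--Rellich argument based on that estimate is exactly the intended one. The only point worth recording explicitly is the verification that $f,\,f/\sqrt{\omega}\in L^2(\R^3)$ thanks to $\mu>0$, which you have done.
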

\begin{remark}\label{R} 
    In  this  work we omit
    spelling out   identity operators  whenever unambiguous. 
    For every vector spaces $V_1$,  $V_2$ and
    operators $ A_1 $ and $A_2$ defined on $V_1$ and $V_2$, respectively, we
    identify \begin{equation}\label{iden} A_1 \equiv A_1 \otimes \mathbbm
        1_{V_2}, \hspace{2cm}  A_2  \equiv \mathbbm 1_{V_1} \otimes A_2 .
    \end{equation}
    In order to simplify our notation further, and whenever
    unambiguous, we do not utilize specific  notations for every inner product
    or norm that we  employ.    
\end{remark}

\subsection{Access to the resonance: Complex dilation}
\label{sec:dil}
It is known (e.g., \cite{spohnspin}) that the only eigenvalue in the spectrum
of $H$ is 
\begin{align}
\label{def:gs}
\lambda_0:=\inf \sigma(H)
\end{align}
while the rest of the spectrum is absolutely continuous.  This implies that
there is no stable excited state in the massless Spin-Boson model.
Heuristically, the reason for this is that the atomic energy of the excited
state $e_1$ turns into what can be seen as a complex ``energy'' $\lambda_1$
with strictly negative imaginary part once the interaction is switched on (see
e.g.\ \cite{bach,bbp}). This complex energy $\lambda_1$ is referred to as 
resonance energy  and its imaginary part is responsible for the decay of the
excited state (see e.g.\ \cite{sigal,hasler}).    

Note that the ground state
$\Psi_{\lambda_0}$ of $H$ corresponding to ground state energy $\lambda_0$, i.e.,
\begin{align}
\label{gsprop}
H\Psi_{\lambda_0}=\lambda_0\Psi_{\lambda_0} ,
\end{align}
 has
already been constructed, e.g., in \cite[Theorem 1]{spohnspin}, \cite[Theorem 
1]{hasler1} and \cite[Theorem 3.5]{bbkm}.
Since $H$ on $\mathcal H$ is a
self-adjoint operator, $\lambda_1$ should rather be thought of as a complex
eigenvalue of $H$ on a bigger space than $\mathcal H$.  This prevents us from
being able to calculate the resonance energy directly by regular perturbation
theory on $\mathcal H$. The standard way to nevertheless get access to such a
resonance without leaving the underlying Hilbert space is the method of complex
dilation which will be introduced next.  We start by defining a family of
unitary operators on $\mathcal H$ indexed by $\theta \in\R$.
\begin{definition} 
\label{complextransf}
For $\theta \in \mathbb R$, we define the unitary
    transformation
\begin{align}
u_\theta: \mathfrak{h}&\to \mathfrak{h}
, \qquad \psi(k) \mapsto e^{-\frac{3\theta}{2}} \psi(e^{-\theta}k) .
\end{align}
Similarly, we define its canonical lift $U_\theta: \mathcal F [\mathfrak{h}]\to
\mathcal F [\mathfrak{h}]$ by the lift condition $U_\theta a(h)^* U_\theta^{-1}=a(u_\theta h)^*$, $h\in\mathfrak{h}$,   and $U_\theta \Omega=\Omega$.  This defines $U_\theta$
uniquely up to a phase which we  choose to equal one. With slight abuse of
notation, we also denote $\mathbbm 1_{\mathcal K}\otimes U_\theta$ on $\mathcal
H$ by the same symbol $U_\theta$. 
\end{definition}
 Thereby, we define the family of transformed Hamiltonians, for $\theta \in
 \R$,
\begin{align}\label{Hthetaaaa}
H^\theta :=U_\theta H U_\theta^{-1} =K + H^\theta_f +g V^\theta,
\end{align}
where
\begin{align}
    \label{eq:Hftheta}
H_f^\theta:= \int \mathrm{d^3}k \, \omega^\theta(k) a^*(k) a(k) , \qquad
V^\theta:= \sigma_1 \otimes  \left(a(f^{\overline \theta})+
a(f^{\theta})^*  \right)
\end{align} 
and 
\begin{align}
\label{def:thetafncts}
\omega^\theta(k):= e^{-\theta}|k|, \qquad f^\theta: \R^3\setminus \{0\}\to\R , \quad k\mapsto e^{-\theta (1+\mu)} e^{-e^{2\theta}\frac{k^2}{\Lambda^2}}|k|^{-\frac{1}{2}+\mu}.
\end{align}
  Eqs.\ \eqref{def:thetafncts}, \eqref{eq:Hftheta} and the
right hand side of Eq.\ \eqref{Hthetaaaa} can be defined for complex 
$\theta$. 
If  $ |\theta| $ is small enough,  $ K + H^\theta_f +g V^\theta $ is a closed
(non self-adjoint) operator. However, the middle term in Eq.\ \eqref{Hthetaaaa}
is not necessarily correct because although $ U_{\theta} $ can be defined for
complex  $\theta$, it turns out to be an unbounded operator, and $  U_\theta H
U_\theta^{-1} $ might not be densely defined.

We say that 
$ \Psi $ is an analytic vector  if the map $ \theta
 \mapsto \Psi^\theta := U_\theta \Psi $  has an analytic continuation  from an open connected set in the real line to a (connected) domain in the complex plane. In general we will not specify their domains of analyticity (it will be clear from the context). It is well-known that there is a dense set of entire vectors (they are analytic in $\mathbb{C}$).   This result has been proven in a variety of similar models, for example, in
\cite{bach,jaksic}. For the sake of completeness, we provide a proof in
Appendix \ref{app:dil}. Furthermore, we define the open disc 
\begin{align}
\label{eq:def-disc}
D(x, r):=\left\{ z\in\C : |z-x|<r  \right\}   \qquad  x\in \C , r>0 ,
\end{align}
and note that for $\theta \in  D(0, \pi/16)$ we have
\begin{align}
\norm{ V^\theta \left( H_0 +1  \right)^{-\frac{1}{2}}}&\leq 
\norm{f^\theta}_2 +2\norm{f^\theta/\sqrt{\omega}}_2
\label{eq:Vtheta-est}
\end{align}
which is guaranteed by the standard estimate \eqref{eq:Vest} given in
Appendix~\ref{app:sa}, since \eqref{def:thetafncts} together with the special
choice $\theta \in  D(0, \pi/16)$  imply that  $f^\theta , f^\theta/\sqrt{\omega}
\in \mathfrak{h}$. Hence, for $\theta\in D(0,\pi/16)$ 
the operators $H^\theta$ are densely defined and closed. Moreover, the analytic properties
of this family of operators
in $g$ and $\theta$ are known:
\begin{lemma}
\label{typea}
The family $\left\lbrace  H^\theta \right\rbrace_{\theta\in \R}$ of unitary
equivalent, self-adjoint operators with $\mathcal D(H^\theta)=\mathcal D(H)$
extends to an analytic family of type A for $\theta\in D(0, \pi/16)$. 
\end{lemma}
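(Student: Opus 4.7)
The plan is to verify the three conditions in Kato's definition of an analytic family of type A on $D(0,\pi/16)$: (i) the domain $\mathcal D(H^\theta)$ is $\theta$-independent and equal to $\mathcal D(H_0)$; (ii) each $H^\theta$ is a closed operator; and (iii) for every $\Psi\in\mathcal D(H_0)$, the map $\theta\mapsto H^\theta\Psi$ is holomorphic. For real $\theta$ the unitary $U_\theta$ acts by pure momentum dilation and therefore preserves $\mathcal D(H_0)=\mathcal K\otimes\mathcal D(H_f)$, so self-adjointness of $H^\theta$ with domain $\mathcal D(H)=\mathcal D(H_0)$ follows at once from Proposition \ref{thm:Hsa}. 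The substance of the lemma is the extension to complex $\theta$, where I would abandon the conjugation description and work exclusively with the explicit right-hand side of (\ref{Hthetaaaa}).

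The first observation I would use is that $H_f^\theta=e^{-\theta}H_f$, as a direct consequence of $\omega^\theta(k)=e^{-\theta}\omega(k)$ in (\ref{def:thetafncts}). Hence $H_f^\theta$ is closed on $\mathcal D(H_f)$ for every $\theta\in\C$, $K+H_f^\theta$ is closed on $\mathcal D(H_0)$ since $K$ is bounded, and $\theta\mapsto(K+H_f^\theta)\Psi$ is entire for $\Psi\in\mathcal D(H_0)$. Using $\Re(e^{-\theta})\ge\cos(\pi/16)>0$ on $D(0,\pi/16)$, the graph norms of $K+H_f^\theta$ and of $H_0$ on $\mathcal D(H_0)$ are equivalent, uniformly on compact subsets of the disk. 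I would then invoke (\ref{eq:Vtheta-est}) together with the interpolation $\|(H_0+1)^{1/2}\Psi\|^2\le\|\Psi\|\,\|(H_0+1)\Psi\|$ and the elementary inequality $ab\le\varepsilon a^2+(4\varepsilon)^{-1}b^2$ to conclude that $gV^\theta$ is infinitesimally bounded relative to $K+H_f^\theta$. Kato--Rellich then makes $H^\theta$ closed on $\mathcal D(H_0)$, which settles (i) and (ii).

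For (iii) I would first check that $\Re(e^{2\theta})>0$ on $D(0,\pi/16)$, so that $f^\theta$ and $f^\theta/\sqrt{\omega}$ lie in $\mathfrak h$; differentiation under the integral, with the point-wise derivative majorised uniformly on compact subdisks, then shows that $\theta\mapsto f^\theta$ is a holomorphic $\mathfrak h$-valued map, as is $\theta\mapsto f^\theta/\sqrt{\omega}$. Combining this with the identity $\overline{f^{\bar\theta}(k)}=f^\theta(k)$, the integrand inside $a(f^{\bar\theta})\Psi$ involves the holomorphic kernel $f^\theta$ rather than its conjugate; applying the standard bounds on $a(\cdot)$ and $a(\cdot)^*$ to the difference quotients of $V^\theta\Psi$ and passing to the limit inside the norm yields holomorphy of $\theta\mapsto V^\theta\Psi$ for $\Psi\in\mathcal D(H_0)$. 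Together with the entire behaviour of $(K+H_f^\theta)\Psi$, this proves (iii).

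The main subtlety is really the bookkeeping of anti-linear versus linear dependence in the creation--annihilation pair: one must notice that the $\bar\theta$ in $a(f^{\bar\theta})$ is exactly what cancels the anti-linearity of $a(\cdot)$ in its argument, so that both summands of $V^\theta$ end up with holomorphic, rather than anti-holomorphic, $\theta$-dependence. Once that point is recognised, the proof reduces to the standard combination of (\ref{eq:Vtheta-est}), Kato--Rellich, and differentiation under the integral.
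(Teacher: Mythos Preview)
Your proposal is correct and complete. The paper itself gives no proof of this lemma; it simply cites \cite[Theorem~4.4]{bach} for the Pauli--Fierz model and remarks that the argument adapts. What you have written is precisely the standard adaptation: constant domain via $H_f^\theta=e^{-\theta}H_f$, closedness via the relative bound~(\ref{eq:Vtheta-est}) and Kato--Rellich, and vector-valued holomorphy of $\theta\mapsto f^\theta$ in $\mathfrak h$ together with the observation $\overline{f^{\bar\theta}}=f^\theta$ to handle the anti-linearity of $a(\cdot)$. One cosmetic point: the explicit lower bound you quote, $\Re(e^{-\theta})\ge\cos(\pi/16)$, is not quite right on the full disk (the real part of $\theta$ enters through the factor $e^{-\Re\theta}$), but the correct uniform bound $\Re(e^{-\theta})\ge e^{-\pi/16}\cos(\pi/16)>0$ serves the same purpose, and in any case you only need positivity plus local uniformity, which you state correctly.
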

The above result was proven for the Pauli-Fierz model in  \cite[Theorem
4.4]{bach}, and with small effort that proof can be adapted to our
setting. 
\begin{lemma}
    \label{lemma:specdilh0} 
    Let $\theta\in \C$.  Then, 
        $\sigma (H^\theta_0)= 
 \left\lbrace e_i + e^{-\theta} r : r\geq 0 , i=0,1 \right\rbrace$.  
\end{lemma}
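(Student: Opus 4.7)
\textbf{Proof plan for Lemma \ref{lemma:specdilh0}.} The key observation that makes the computation essentially trivial is that $H_f^\theta$ is nothing but a complex scalar multiple of $H_f$. Indeed, directly from the definitions,
\begin{equation*}
H_f^\theta \;=\; \int \mathrm{d}^3 k\, \omega^\theta(k)\, a^*(k) a(k) \;=\; e^{-\theta}\int \mathrm{d}^3 k\, |k|\, a^*(k) a(k) \;=\; e^{-\theta} H_f,
\end{equation*}
acting on the same domain $\mathcal{D}(H_f)$. The plan, then, is to first identify $\sigma(H_f^\theta)$, then reduce to the atomic two-by-two block structure.

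\emph{Step 1: Spectrum of $H_f^\theta$.} Since $H_f$ is self-adjoint with $\sigma(H_f) = [0,\infty)$ (a fact already recalled in the text after Proposition~\ref{thm:Hsa}) and $e^{-\theta} \neq 0$ for every $\theta \in \mathbb{C}$, the elementary identity $H_f^\theta - z = e^{-\theta}(H_f - e^{\theta} z)$ implies that $H_f^\theta - z$ is boundedly invertible iff $e^{\theta} z \notin \sigma(H_f)$. Hence $\sigma(H_f^\theta) = e^{-\theta} \sigma(H_f) = \{e^{-\theta} r : r \geq 0\}$.

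\emph{Step 2: Reduction via the atomic factor.} Because $\mathcal{K} = \mathbb{C}^2$ and $K = \diag(e_1, e_0)$ in the basis $\{\varphi_1, \varphi_0\}$ of \eqref{varphi}, the Hilbert space splits orthogonally as $\mathcal{H} = (\varphi_1 \otimes \mathcal{F}[\mathfrak{h}]) \oplus (\varphi_0 \otimes \mathcal{F}[\mathfrak{h}])$, and on this decomposition
\begin{equation*}
H_0^\theta \;=\; K + H_f^\theta \;\cong\; (e_1 + H_f^\theta) \,\oplus\, (e_0 + H_f^\theta),
\end{equation*}
both summands being closed operators on $\mathcal{D}(H_f)$. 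The spectrum of a direct sum is the union of the spectra, and the spectrum of $c + T$ equals $c + \sigma(T)$, yielding
\begin{equation*}
\sigma(H_0^\theta) \;=\; \bigcup_{i=0,1}\bigl(e_i + \sigma(H_f^\theta)\bigr) \;=\; \bigl\{ e_i + e^{-\theta} r : r \geq 0,\; i = 0, 1 \bigr\},
\end{equation*}
which is the claim.

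There is no real obstacle to the argument; the only mild subtlety is that for complex $\theta$ the operator $H_f^\theta$ is not self-adjoint, so one cannot directly quote spectral theorem results, but since $H_f^\theta$ is an explicit scalar multiple of a self-adjoint operator, the trivial algebraic identity in Step 1 suffices. Closedness of $H_0^\theta$ on the stated domain, which is needed to make the spectrum well-defined, has already been noted in the text preceding the lemma.
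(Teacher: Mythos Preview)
Your proof is correct and follows essentially the same route as the paper: identify $H_f^\theta = e^{-\theta} H_f$ to compute $\sigma(H_f^\theta)$, then combine with $\sigma(K)=\{e_0,e_1\}$. The only cosmetic difference is that the paper invokes the spectral theorem for two commuting normal operators, whereas you spell out the two-by-two block decomposition explicitly; in this finite-dimensional atomic setting these are the same argument, and your version is arguably cleaner since it sidesteps any worry about normality of $H_f^\theta$ for complex $\theta$.
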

We provide a proof in Appendix \ref{app:dil}. 
For sufficiently small coupling constants and  
for $\theta \in \mathcal S$, where $\mathcal S$ is   the   subset of the complex plane defined in    Eq.\  \eqref{def:setS}      below,
it has been shown that $H^\theta$ has two non-degenerate eigenvalues
$\lambda^\theta_0$ and $\lambda^\theta_1$ with corresponding rank one
projectors denoted by $P^\theta_0$ and $P^\theta_1$, respectively; see, e.g.,
\cite[Proposition 2.1]{bdh-res}.
Note that there the $\theta$-dependence was omitted in the notation. For convenience of the reader, we make it explicit in this paper.
 The corresponding dilated eigenstates can,
therefore, be written as
\begin{align}
\label{eq:gsvec}
\Psi^\theta_{\lambda_i}:=  P_i^\theta    \varphi_i\otimes \Omega  , \qquad i=0,1 ,
\end{align}
where the eigenstates $\varphi_i$ of the free atomic system    are  given in
\eqref{varphi}, and $\Omega$ is the bosonic vacuum defined in \eqref{Omega}.
In our notation $\Psi^\theta_{\lambda_i}$ is not necessarily normalized.  We
know from \cite[Theorem 2.3]{bdh-res} that the eigenvalues $\lambda^\theta_i$
are independent of $\theta$
 as long as  $\theta $ belongs to the set $\mathcal S$, 
 and therefore, we suppress it in our notation
writing $\lambda^\theta_i\equiv\lambda_i$. Note that this is not true for the
eigenstates $\Psi^\theta_{\lambda_i}$.   In \cite{bdh-res} (as well as in   Eq.\   \eqref{def:setS}    below)
we choose an open connected set $\mathcal{S}$ that does not include $0$ (the imaginary parts of the points in this set are bounded from below by a fixed positive constant). We chose such a set in order to have a single set $\mathcal{S}$  for the cases $i = 0 $  and $i =1$, because we want to keep our notation as simple as possible (otherwise a two cases formulation would propagate all over our papers).  However, the fact that $0$ is not contained in $ \mathcal{S} $ is only necessary for the case $i = 1$ (the resonance -   due to the self-adjointness of $H$ the
state $\Psi^\theta_{\lambda_1}$ can not even exist  for
$\theta=0$).  For the case $i= 0$ (the ground state) we can choose instead a connected open set containing $0$. In this set, it is still valid that $\lambda^\theta_0$ does not depend on $\theta$ and, therefore, it equals the ground state energy, and  $\Psi^{\theta=0}_{\lambda_0}= \Psi_{\lambda_0}$ - as introduced above.  This is explained in    \cite[Remark 2.4]{bdh-res}.

\subsection{Scattering theory}
\label{sec:scattering}
Finally, we give a short review of scattering theory which will be necessary to
state the main results  in
Section~\ref{sec:mainresult}.

The first obstacle in formulating a scattering theory of a second-quantized
system lies in the definition of the wave operators. Unlike in first-quantized
quantum theory, where one defines the scattering operator to be $S:=\Omega^*_+
\Omega_-$ with the wave operators $ \Omega_{\pm} $ given by the strong limits
$\Omega_{\pm}:=s\mbox{-}\lim\limits_{t\rightarrow \pm \infty}
e^{itH}e^{-itH_0}$, in quantum field theory, the corresponding wave operators do not exist in a
straight forward sense.  Instead, one establishes the existence of the
asymptotic annihilation and creation operators first, which can then be used to
define the wave operators.

\begin{definition}[Basic components of scattering theory]
\label{defasymptop}
We denote the dense subspace of compactly supported, smooth, and complex-valued
functions on $\mathbb
R^3\setminus \{0 \}$ in $\mathfrak h$  by
\begin{align}
    \label{def:h0}
    \mathfrak{h}_0 :=\mathit C_c^\infty (\mathbb R^3\setminus \{0 \},\C) \subset
    \mathfrak{h}.
\end{align}
Furthermore, we define the following objects:
\begin{enumerate}
    \item[(i)]  
    For $h\in\mathfrak{h}_0$ and $\Psi \in \mathcal K\otimes \mathcal D(H_f^{1/2})$, the asymptotic annihilation operators
\begin{align}
    \label{asymptop}
    a_\pm(h)\Psi :=\lim\limits_{t\to\pm \infty}a_t(h)\Psi, \quad
    a_t(h):=e^{itH}a(h_t) e^{-itH},
    \quad
  h_t(k):=h(k) e^{ - it\omega(k)} .
\end{align}
The existence of this limit is proven in Lemma \ref{lemmadiff} (i) below.
Moreover, we define the asymptotic creation operators
 $a_\pm^*(h)$ as the respective adjoints.
\item[(ii)] The asymptotic Hilbert spaces 
\begin{align}
\label{asympthilbert}
\mathcal H^\pm :=\mathcal K^\pm \otimes \mathcal F\left[\mathfrak{h}\right]
\quad \text{where} \quad  \mathcal K^\pm:=\left\lbrace \Psi\in \mathcal H :
a_\pm(h) \Psi=0 \,\,\,\, \forall h\in \mathfrak{h}_0    \right\rbrace .
\end{align} 
\item[(iii)] The wave operators 
\begin{align}
\label{intertwining}
&\Omega_\pm :\mathcal H^\pm \to \mathcal H
\\ \notag
&\Omega_\pm \Psi \otimes a^*(h_1)...a^*(h_n) \Omega:=a^*_\pm (h_1)...a^*_\pm (h_n) \Psi, \quad h_1,...,h_n \in \mathfrak{h}_0,  \quad \Psi \in \mathcal K^\pm .
\end{align}
\item[(iv)] The scattering operator  
 $S:=\Omega^*_+\Omega_-$.
\end{enumerate}
\end{definition}
The limit operators $a_\pm$ and $a_\pm^*$ are called asymptotic
outgoing/ingoing annihilation and creation operators. The existence of the
limits in \eqref{asymptop},
their properties, especially that $\Psi_{\lambda_0}\in\mathcal K^\pm$ and
$\Omega_\pm$ are well-defined, are well-known facts (see e.g.
\cite{fau1,fau2,fau3,fgs1,fgs2,rgk,rk,rk2,derezinski,bkz}). For the convenience
of the reader,  Lemma \ref{lemmadiff}  collects all relevant facts and we provide simplified proofs
for our setting in Appendix~\ref{app:welldef}. 
We can thus define the following two-body scattering matrix coefficients:
\begin{align}
\label{eq:2bodyscat}
S(h,l)= \norm{\Psi_{\lambda_0}}^{-2}\left\langle
a^*_+(h)\Psi_{\lambda_0},a^*_-(l) \Psi_{\lambda_0} \right\rangle, \qquad \forall
h,l\in \mathfrak h_0 ,
\end{align}
where the factor $\norm{\Psi_{\lambda_0}}^{-2}$ appears due to the fact that,
as already mentioned above,  in our notation, the ground state
$\Psi_{\lambda_0}$ is not necessarily normalized.  In addition, it will be
convenient to work with the corresponding two-body transition matrix
coefficients given by
\begin{align}
T(h,l)=S(h,l) -\left\langle h , l \right\rangle_2 \qquad \forall h,l\in \mathfrak h_0 .
\label{eq:Tmatrix}
\end{align}
These matrix coefficients carry a ready physical interpretation as transition
amplitudes of the scattering process in which an incoming boson with wave
function $l$ is scattered at the two-level atom into an outgoing boson with
wave function $h$.  Notice that the transition matrix coefficients of
multi-photon processes can be defined likewise but in this work we focus on
one-photon processes only.

 It has been shown in \cite{spohnspin} that
the spectrum of $H$   contains only  one eigenvalue $\lambda_0$ (and it is non-degenerate), 
namely the ground state energy, and the  rest of the spectrum of $H$ is
absolutely continuous.  
In case  that   asymptotic completeness holds, i.e.
\begin{align}
\mathcal K^\pm = \text{Ran} \left(\chi_{\text{pp}}(H)\right),
\end{align} 
 all one-boson processes are of the
form \eqref{eq:2bodyscat}. Here, 
  $\text{Ran} \left(\chi_{\text{pp}}(H)\right)$ denotes  the  states associated
with pure points in the spectrum of $H$.

Asymptotic completeness has actually been proven in
\cite{rgk,rk,rk2} for the Hamiltonian $H$ defined in \eqref{eq:H}, however,
with coupling functions $f\in \mathit C^3_c(\R^3\setminus\{0\},\C)$, i.e., the  functions 
that are three times continuously differentiable and have compact support.
 In our case, we need an analytic continuation of our Hamiltonian in order to study resonances. This implies that the coupling function  $f$ cannot be  compactly supported (see  \eqref{eq:f}), 
 however it belongs to the  Schwartz space.  We expect asymptotic completeness also to hold in our
case, although our results do not depend  on it.  

\section{Main result}
\label{sec:mainresult}

We are now able to state our main results. The corresponding proofs will be
provided in Section \ref{sec:proof-mainresult} after we review a list of
necessary results of a previous work \cite{bdh-res} in
Section~\ref{sec:strategyproof}.

First, we state a definition that we use for our main result
\begin{definition}
\label{def:G}
 Using solid angles $\mathrm d\Sigma, \mathrm d\Sigma'$, we define, for all $h,l\in\mathfrak{h}_0$,
        \begin{align}
            \label{eq:G-def}
            G: \R \to \C , \qquad r \mapsto G(r):=
            \begin{cases}
                \int \mathrm{d}\Sigma \mathrm{d}\Sigma' \,  r^4  \overline{h(r,\Sigma)} l(r,\Sigma') f(r)^2  \qquad &\text{for} \quad r\geq 0
                \\
                0 \quad &\text{for} \quad r<0 .
            \end{cases} 
        \end{align}
\end{definition}  
We recall the definition $E_1  =g^{-2}  \Im \lambda_1 $ given in  \eqref{def:e1}.   It follows from Eqs.\ \eqref{EI} and \eqref{eq:impartres1}  below that $E_1=E_I +g^{ a  }\Delta $ where  $a>0$,  $\Delta\equiv \Delta(g)$ is  uniformly bounded and $E_I<0$ is the constant defined in \eqref{EI}. 
This implies that 
\begin{equation}\label{Im}
E_1 \leq  -  \boldsymbol{c} < 0,
\end{equation}
for some constant $ \boldsymbol{c} $ that does not depend on $g$ (for small enough $g$).

Our  main result provides
a relation between the scattering matrix element and the complex
dilated resolvent of the Hamiltonian.
\begin{theorem}[Scattering formula]
\label{FKcor}
 There is a constant $\boldsymbol g>0$ such that for every $g\in (0,\boldsymbol g]$,   $ \theta $ in the set
    $\mathcal{S} $ defined  in \eqref{def:setS} below,
and
 for all $h,l\in\mathfrak{h}_0$, the
two-body transition matrix coefficients are given by
 \begin{align} 
\label{scatteringformulapp}
T(h,l)= & T_{P}(h,l) + R(h,l) ,
\end{align} 
where 
\begin{align} 
\label{scatteringformulapp1}
T_{P}(h,l):= &    4 \pi i g^2  \norm{\Psi_{\lambda_0}}^{-2}  \int \mathrm{d}r     \,G(r)
\left(  \frac{ \Re \lambda_1 -  \lambda_0 }{ \big (  r + \lambda_0 - \lambda_1 \big ) \big (  r -  \lambda_0 + \overline{\lambda_1}    \big )  }   \right)  
\\  =  &  \notag   M  \int \mathrm{d}r     \,G(r)
\Big (  \frac{ E_1 g^2 }{ \big (  r + \lambda_0   
 - \Re \lambda_1
- i g^2 E_1  
 \big ) \big (  r -  \lambda_0 + \overline{\lambda_1}    \big )  }  \Big )  , 
\end{align}
  and there is a constant  $C(h,l)$ (that does not depend on $g$) such that 
\begin{align}  
\label{scatteringkernel}
|R(h,l) |\leq C(h,l)  g^3 |\log g|  .
\end{align}
Here, we use the notation
\begin{align}
\label{constM}
 M   :=     4 \pi i (  \Re \lambda_1 - \lambda_0 ) E_1^{-1 }  \norm{\Psi_{\lambda_0}}^{-2}.
\end{align}
 $  T_P(h,l) $ is the leading term   in terms of powers of $g$ for small $g$, and $R(h,l)$ is regarded as the error term.   This is   justified by Remark \ref{rem:order} below.

Our proof permits us to find an explicit formula for the dependence of $C(h,l)$ on $ h $ and $l$, see Remark \ref{Constant} below.
\end{theorem}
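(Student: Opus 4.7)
The strategy is a Cook-type (Duhamel) expansion of the scattering amplitude, a conversion of the resulting time integrals into matrix elements of the resolvent of the dilated Hamiltonian $H^\theta$, and a spectral extraction of the pole at the resonance $\lambda_1$.

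First I reduce the problem to a double time integral. Since $\Psi_{\lambda_0}\in\mathcal K^+\cap\mathcal K^-$, the asymptotic CCR (Lemma~\ref{lemmadiff}) give $\langle a_+^*(h)\Psi_{\lambda_0},a_+^*(l)\Psi_{\lambda_0}\rangle=\langle h,l\rangle_2\,\|\Psi_{\lambda_0}\|^2$, so that
\[ T(h,l)\,\|\Psi_{\lambda_0}\|^2=\langle a_+^*(h)\Psi_{\lambda_0},\,(a_-^*(l)-a_+^*(l))\Psi_{\lambda_0}\rangle. \]
Using $H\Psi_{\lambda_0}=\lambda_0\Psi_{\lambda_0}$ together with the free-field cancellation $[H_0,a^*(l_t)]+a^*(\partial_t l_t)=0$, a direct differentiation yields
\[ \tfrac{d}{dt}\bigl(a_t^*(l)\Psi_{\lambda_0}\bigr)=ig\,\langle f,l_t\rangle_{\mathfrak h}\,e^{-it\lambda_0}\,e^{itH}\sigma_1\Psi_{\lambda_0}. \]
Writing $a_\pm^*(l)\Psi_{\lambda_0}=a^*(l)\Psi_{\lambda_0}\pm\int_0^{\pm\infty}\tfrac{d}{dt}a_t^*(l)\Psi_{\lambda_0}\,dt$, the difference $(a_-^*-a_+^*)(l)\Psi_{\lambda_0}$ becomes a single integral over $\mathbb R$. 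Performing the same Duhamel expansion for $a_+^*(h)\Psi_{\lambda_0}$ and inserting both expressions produces a linear-in-$g$ boundary term whose dominant piece cancels $\langle h,l\rangle_2$ in~\eqref{eq:Tmatrix}, plus a quadratic-in-$g$ double integral of the propagator $e^{i(s-t)H}$ sandwiched between two copies of $\sigma_1\Psi_{\lambda_0}$.

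Second, I convert the time integrals into resolvents and dilate. Writing $\langle f,l_t\rangle_{\mathfrak h}=\int\mathrm d^3k\,\overline{f(k)}l(k)e^{-it\omega(k)}$, Fubini and the Laplace identity $\int_0^\infty e^{-it(z-H)}\,dt=i(z-H)^{-1}$ (valid for $\Im z>0$) transform each time integral into $\langle\Psi_{\lambda_0},\sigma_1(H-\lambda_0-\omega(k))^{-1}\sigma_1\Psi_{\lambda_0}\rangle$ smeared over $k$. By unitarity of $U_\theta$ for real $\theta$, analyticity from Lemma~\ref{typea}, and the construction in \cite{bdh-res}, this equals $\langle\Psi_{\lambda_0}^{\overline\theta},\sigma_1(H^\theta-\lambda_0-\omega(k))^{-1}\sigma_1\Psi_{\lambda_0}^\theta\rangle$ for $\theta\in\mathcal S$, where the rotated spectrum of $H^\theta$ makes the resolvent meromorphic with poles at $\lambda_0$ and $\lambda_1$. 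Decomposing
\[ (H^\theta-z)^{-1}=\frac{P_1^\theta}{\lambda_1-z}+\frac{P_0^\theta}{\lambda_0-z}+Q^\theta(z), \]
the $P_1^\theta$-piece contributes the leading pole; the $P_0^\theta$-piece is annihilated by $P_0^\theta\Psi_{\lambda_0}^\theta=\Psi_{\lambda_0}^\theta$ followed by the complementarity $P_0^\theta P_1^\theta=0$ only up to $O(g)$ corrections coming from $\sigma_1$ mixing the eigenspaces; and $Q^\theta(z)$ is uniformly bounded by the spectral estimates of \cite{bdh-res}. Since $\sigma_1\varphi_0=\varphi_1$ coincides at $g=0$ with the free $\lambda_1$-eigenvector $\varphi_1\otimes\Omega$, the matrix element $\langle\Psi_{\lambda_0}^{\overline\theta},\sigma_1 P_1^\theta\sigma_1\Psi_{\lambda_0}^\theta\rangle$ equals $1+O(g)$. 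Combined with $\mathrm d^3 k=r^2\,dr\,d\Sigma$, the definition~\eqref{eq:G-def} of $G$, and the partial-fraction identity $\tfrac{\Re\lambda_1-\lambda_0}{(r+\lambda_0-\lambda_1)(r-\lambda_0+\overline{\lambda_1})}=\tfrac12\bigl[\tfrac1{r+\lambda_0-\lambda_1}-\tfrac1{r-\lambda_0+\overline{\lambda_1}}\bigr]$ used to symmetrise the two Duhamel directions, this yields exactly~\eqref{scatteringformulapp1}.

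The main obstacle will be establishing the remainder bound $|R(h,l)|\le C(h,l)\,g^3|\log g|$. Three sources of error must be controlled: (a) the boundary pieces from the Duhamel step, where commuting $a(h)$ across $e^{itH}$ produces an extra factor of $g$ on top of the explicit one, so that this part is $O(g^2)$ before being absorbed by further Duhamel reduction; (b) the $P_0^\theta$-contribution, which is $O(g)$ smaller than the leading term because $\sigma_1\Psi_{\lambda_0}^\theta$ is perturbatively orthogonal to $\Psi_{\lambda_0}^\theta$; and (c) the reduced-resolvent term involving $Q^\theta(\lambda_0+\omega(k))$, which is controlled by the uniform bounds of \cite{bdh-res}. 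The logarithmic factor arises from the infrared endpoint of the $r$-integration against the Lorentzian whose effective width is $|\Im\lambda_1|\sim g^2$. The most delicate technical steps will be (i) interchanging the $k$-integration with the analytic continuation in $\theta$ (justifying contour deformation inside the Laplace representation), (ii) proving the uniform boundedness of $Q^\theta(\lambda_0+\omega(k))$ across the entire range of $k$ using the spectral information for $H^\theta$ in the cones emerging from $\lambda_0$ and $\lambda_1$, and (iii) controlling the tails $|t|\to\infty$ of the initial Duhamel integrals, which relies crucially on the infrared regularisation $\mu>0$ ensuring $f\in\mathfrak h$ and on decay of $\langle f,h_t\rangle_{\mathfrak h}$ inherited from $h\in\mathfrak h_0$.
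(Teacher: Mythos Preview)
Your overall architecture---Duhamel reduction to a double time integral, conversion to resolvent matrix elements, dilation, extraction of the $\lambda_1$ pole---matches the paper's.  But the step where you control the remainder has a genuine gap.

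The claim that the reduced resolvent $Q^\theta(z)=(H^\theta-z)^{-1}-\frac{P_0^\theta}{\lambda_0-z}-\frac{P_1^\theta}{\lambda_1-z}$ is \emph{uniformly bounded} by the spectral estimates of \cite{bdh-res} is false, and this is exactly the point where the multiscale input is indispensable.  The continuous spectrum of $H^\theta$ lies in cones $\mathcal C_m(\lambda_0)$ and $\mathcal C_m(\lambda_1)$ with vertices at the eigenvalues; for real $z$ near $\Re\lambda_1$ the distance from $z$ to $\mathcal C_m(\lambda_1)$ is only of order $|\Im\lambda_1|\sim g^2$, and the resolvent bounds of \cite{bdh-res} (Eqs.~\eqref{dedo}, \eqref{resmainnn}) all carry a multiscale factor $\boldsymbol C^{\,n+1}$ that blows up as $n\to\infty$.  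To place the real axis outside the relevant cone one needs $n=n_0\sim|\log g|$ levels, which introduces a factor $g^{-\iota}$; combined with the $g^{-2}$ from the distance, the pointwise operator norm of $Q^\theta(z)$ can be as large as $g^{-2-\iota}$.  With only the explicit $g^2$ from Duhamel this would not even be small.

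The paper does \emph{not} bound $Q^\theta$ in operator norm.  Instead it (i) uses the structural fact $\|\overline{P_1^{\overline\theta}}\sigma_1\Psi_{\lambda_0}^{\overline\theta}\|\le Cg$ (Eq.~\eqref{buena1}, coming ultimately from $\sigma_1\varphi_0=\varphi_1$ and Lemma~\ref{lemma:orth}) to gain one more power of $g$ on the remainder side, and (ii) controls the \emph{integrated} quantity $\int|h(z)|\,\|(H^\theta-z)^{-1}\sigma_1\Psi_{\lambda_0}^\theta\|\,dz$ rather than the supremum.  This integration is done via a dyadic decomposition of the real line into intervals $I_l$ on which the $l$-th multiscale bound applies (Lemma~\ref{kkk11}), summable because $\boldsymbol C^{\,l}\rho_l^{\mu/4}$ is geometric; near $e_1$ one chooses the fixed level $n_0$ as above and integrates $\int_{\mathcal L}|r-\lambda_1|^{-1}dr=O(|\log g|)$ (Lemma~\ref{malhe} and Remark~\ref{RRRR}).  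That---not the infrared endpoint---is where the logarithm in \eqref{scatteringkernel} originates.

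A secondary issue: your Laplace step $\int_0^\infty e^{-it(z-H)}dt=i(z-H)^{-1}$ is invoked for $z=\lambda_0+\omega(k)\in\sigma(H)$, where the integral does not converge and the undilated resolvent does not exist.  The paper handles this by first truncating $s\in[q,Q]$, representing $e^{-isH}$ as a contour integral of the dilated resolvent (Lemma~\ref{laplace}), and only then sending $q\downarrow 0$, $Q\to\infty$ with the choice $q=g$; the resolvent of $H^\theta$ \emph{is} defined on the contour because dilation has pushed the continuous spectrum off the real axis.
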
 
\begin{remark}
\label{rem:order}
 The scattering processes described  by the transition matrix in \eqref{scatteringformulapp} clearly depend on the incoming and outgoing photon states, $ l$ and $h$. This is well understood from the  physics as well as the mathematics perspectives. For example, it can be read from  \eqref{def:G} that if $ l  $ is supported in a ball of radius $t$ and $h$ is supported in its complement, then the principal term $T_{P}(h, l )$ vanishes and only higher order terms (with respect to powers of $g$) contribute to the scattering process. The quantity $T_{P}(h, l )$ is the only one that might produce scattering processes of order $g^2 $ since the remainder is of order  $ g^3 |\log g|   $.   If an experiment is  appropriately prepared, then such an scattering process will be observed and the term describing this is $T_P(h,l)$.  This justifies why we call it the leading order (or principal) term. In Appendix  \ref{app:order}  give an example of a large class of functions $h$ and $l$ that make $ T_P(h,l ) $ larger or equal than a strictly positive constant times $g^2$.  In particular, we prove that this happens when the corresponding function  $G$ is positive and strictly positive at 
$ \Re \lambda_1 - \lambda_0  $.   
\end{remark}
\begin{remark}
By Eqs.\ \eqref{scatteringformulapp1} and \eqref{eq:G-def}, we can express 
 the principal term $T_P(h,l)$ in terms of an integral kernel:
\begin{align}
\label{scat_intker1}
T_P(h,l)=\int\mathrm{d}^3k \mathrm{d}^3k' \, \overline{h(k)}l(k')\delta(|k|-|k'|)T_P(k,k') ,
\end{align} 
where 
\begin{align}
\label{scat_intker}
T_P(k,k')=   M    f(k)f(k')  \left(  \frac{ E_1 g^2 }{ \big (  |k'| + \lambda_0   
 - \Re \lambda_1
- i g^2 E_1  
 \big ) \big (  |k'| -  \lambda_0 + \overline{\lambda_1}    \big )  }  \right) .
\end{align}
Eq. \eqref{scat_intker1} is important, because it allows us to calculate the leading order of the scattering cross section. It is proportional to the modulus squared of
$   T_P(k,k') $:
\begin{equation}
\label{Tsquared}
|T_P(k,k')|^2 =\left( \frac{|  M   |^2 |   f(k) |^2 |f(k')  |^2}{ |   |k'| -  \lambda_0 + \overline{\lambda_1}   |^2  } \right)    \frac{ E_1^2 g^4 }{ ( |k'| + \lambda_0   
 - \Re \lambda_1 )^2
+ g^4 E_1^2  } .   
\end{equation}
 For momenta $|k'| $ in a neighborhood of  $\Re \lambda_1 - \lambda_0$, the behavior in the expression  above is dominated by the Lorentzian function. As expected,  there is a maximum  when the energy of  the incoming photons is close to the difference of the resonance and the ground state energies of the system and the width of this peak is controlled by the imaginary part of the resonance $\Im\lambda_1$.  
\end{remark}

Note that the Dirac's delta distribution in \eqref{scat_intker1} is to be understood as the expression in \eqref{scatteringformulapp1}. Notice that \eqref{scat_intker} is not defined
for $k=0$ or $k'=0$. However, since we take $h,l\in \mathcal
C^\infty_c(\R^3\setminus \{0\},\C)$, the expression \eqref{scat_intker1}
is well-defined.  Similar
distribution kernels in a  related model have been studied in
\cite{bkz,bfp}. 
\begin{remark}\label{porquee}
    In this work we denote by $ C $ any
generic (indeterminate) constant that might change from line to line. This
constants do not depend on the coupling constant and the auxiliary parameter
   $ n$ introduced in
Section \ref{Saux}.
\end{remark} 

\section{Known results on spectral properties and resolvent estimates} 
\label{sec:strategyproof}

In this section we present  results about the spectrum of the dilated
Spin-Boson Hamiltonian and resolvent estimates proven   in our previous paper
\cite{bdh-res}. Here, we do not repeat proofs but  give
precise references     for them.   We collect only properties and estimates
     for  the model  under consideration that are necessary for
    the proofs of our main theorems.   

 Throughout   this paper we address   the case of small coupling, i.e., we 
    assume  the coupling constant $g$ to be sufficiently
    small.  The restrictions on the coupling constant only stem from the
    requirements needed to prove the results reviewed in this section, i.e.,
    the ones considered in \cite{bdh-res}. We do not  explicitly 
    specify how small the coupling constant must be 
    but  give  precise  references from which  such bounds  can be inferred. This  issue  is addressed
  by the next definition:

\begin{definition}[Coupling Constant]\label{gggg}
      Throughout this work  we  assume that  $ g
    \leq  \boldsymbol{g} $, where $0<  
    \boldsymbol{g} $  
      satisfies    Definition   4.3  and Eq.\ (5.58) in \cite{bdh-res}, the Fermi-golden rule
    (see Eqs.\ \eqref{eq:impartres1} and \eqref{eq:impartres} below) and Eq.\
    \eqref{ground1} below.      
\end{definition}  
 We denote the imaginary part of the dilation parameter $\theta$ by 
\begin{align}
\label{eq:nu}
\nu : = \Im \theta
\end{align}
and  assume that $\theta $ belongs to the set 
\begin{align}
\label{def:setS}
 \mathcal S:=\left\{\theta\in\C: -10^{-3} <   \Re \theta < 10^{-3} \text{ and }
\boldsymbol{\nu} < \Im \theta  < \pi/16 \right\} ,
\end{align}
 where $\boldsymbol \nu \in (0, \pi/16)$ is a fixed number (see \cite[Definition 1.4]{bdh-res}). 

\subsection{Spectral estimates}\label{Sspec}

We know from \cite[Proposition 2.1]{bdh-res} that the Hamiltonian $H^{\theta}$  has two eigenvalues $\lambda_0 $ and $ \lambda_1 $ in small neighborhoods of $ e_0 $ and $e_1$, respectively.   Loosely put, $ e_0 $ turns into the ground state $ \lambda_0$ and $ e_1 $ tuns into the resonance $ \lambda_1$ once  the interaction is tuned on. Both $\lambda_0$ and $  \lambda_1$ do not depend on $\theta$ provided that $ \theta \in \mathcal S$ and in the case of  $ \lambda_0  $ we can take $ \theta  $ in a neighborhood of $0$ and, therefore, infer that $ \lambda_0 $ is real and gives the ground state  energy. This is  proven   in    \cite[Theorem 2.3]{bdh-res} and \cite[Remark 2.4]{bdh-res}.   

In \cite[Theorem 2.7]{bdh-res},   we give a very  sharp 
estimation of the  location of the  spectrum of 
$ H^\theta $. We prove, among other things  that,
locally, in neighborhoods of $ \lambda_0 $ and $
\lambda_1$, its spectrum is contained in cones with
vertices at   $ \lambda_0 $ and $ \lambda_1$.  To make this statement more precise we  need to  introduce some
 more  concepts and notation.  There are two auxiliary parameters that play  an  important role  in our constructions:
\begin{align}\label{rhos}
\rho_0 \in (0,1), \qquad  \rho \in  (0,\min e_1 / 4  ), 
\end{align}
 which also satisfy the conditions in
\eqref{dorm2} below. In order to specify the spectral properties of $
H^{\theta} $ we  define some regions in the complex plane:  
\begin{definition}
\label{def:regionsAB}
For fixed $\theta\in \mathcal S$, we set $\delta = e_1- e_0 = e_1$ and define the regions 
\begin{align}
\label{region:A}
A:&=
A_1\cup A_2\cup A_3  ,
\end{align}
where
\begin{align}
A_1:&=\left\{  z\in\C : \Re z <e_0-\delta/2 \right\}
\\
A_2:&= \left\{  z\in\C : \Im z >\frac{1}{8}\delta \sin (\nu) \right\}
\\
A_3:&= \left\{  z\in\C : \Re z >e_1+\delta/2 , \Im z \geq -\sin (\nu/2) \left(\Re (z) -(e_1+\delta/2)   \right)\right\} ,
\end{align}
and for $i=0,1$, we define
\begin{align}
\label{region:Bi1}
B_i^{(1)}:=\left\{  z\in\C : |\Re z-e_i| \leq \frac{1}{2}\delta, -\frac{1}{2}
\rho_1 \sin(\nu)\leq \Im z \leq \frac{1}{8}\delta \sin (\nu)  \right\} .
\end{align}
These regions are depicted in Figure \ref{fig:regionsAB}.
\end{definition}
\begin{figure}[h]
\centering
\includegraphics[width=\textwidth]{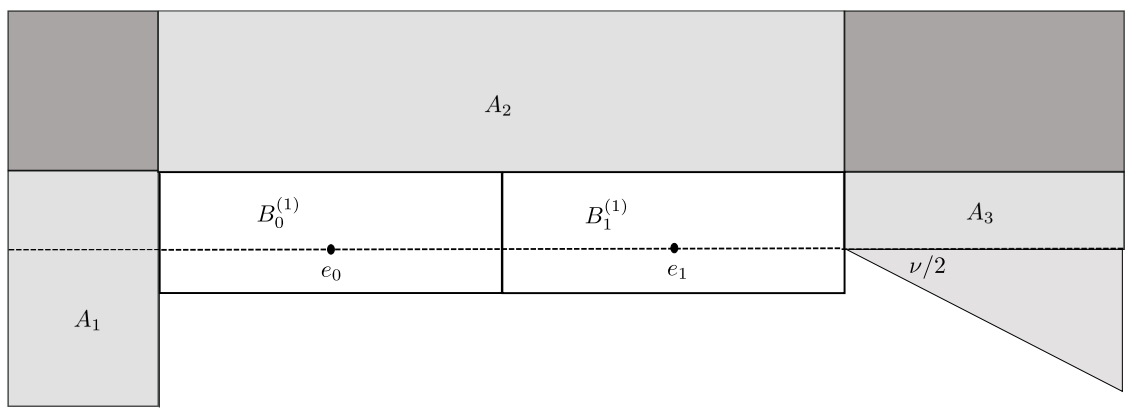}
  \caption{An illustration of the subsets of the complex plane introduced in
  Definition \ref{def:regionsAB}.}
    \label{fig:regionsAB}       
\end{figure}
 For a fixed  $ m \in \mathbb{N}, \:  m \geq 4,$ we define the cone
\begin{align}
\label{eq:defcone}
\mathcal C_m(z) :=\left\{  z+xe^{-i\alpha} : x\geq 0 ,
|\alpha-\nu |\leq  \nu/m \right\} .
\end{align} 
It follows from the induction scheme in \cite[Section 4]{bdh-res} that $ \lambda_i \in  B_i^{(1)} $, and moreover,
 \cite[Theorem 2.7]{bdh-res}  together with \cite[Lemma 3.13]{bdh-res}  yields
\begin{align}\label{spectrum}
\sigma(H^{\theta}) \subset \mathbb{C} \setminus  \Big [  A \cup \big ( B_0^{(1)} \setminus  \mathcal C_m(\lambda_0)   \big ) \cup  \big ( B_1^{(1)} \setminus  \mathcal C_m(\lambda_1)   \big )      \Big ].
\end{align}
As we mention above, we have $ \lambda_0 \in \mathbb{R}$.
The imaginary part of $\lambda_1$ can be also estimated (see \cite[ Remark
2.2]{bdh-res}  --  Fermi golden rule):
 Recalling \eqref{eq:f}, we  define
 \begin{align}
 \label{EI}
 E_I:=-4\pi^2 (e_1-e_0)^2 |f(e_1-e_0)|^2 .
 \end{align}
  Then, for $g$ small enough, there are  constants $C,  a   >0$  such that
\begin{align}
\label{eq:impartres1}
\left|  \Im \lambda_1 -g^2 E_I \right| \leq g^{2+ a } C .
\end{align}
This implies that, for $g$ small enough, there is   constant  $\boldsymbol{c}>0$ such that
\begin{align}
\label{eq:impartres}
\Im \lambda_1 <-g^2  \boldsymbol{c} <0 .
\end{align}

\subsection{Auxiliary (infrared cut-off) Hamiltonians } \label{Saux}

 Some of the bounds in Section \ref{sec:proof-mainresult} employ a certain approximation of the Hamiltonian
$H^\theta$ by Hamiltonians with infrared cut-offs. The strategy will be the following: A   
mathematical expression that depends on  $ H^{\theta} $ is replaced by a
corresponding one that depends on  a particular  infrared cut-off
Hamiltonian. We  then  analyze the infrared cut-off expression and  estimate the difference between
both expressions. The construction of a sequence of infrared cut-off
Hamiltonians $(H^{(n), \theta})$  such that, as $n$ tends to infinity,  the
cut-off is removed is called multiscale analysis. In \cite{bdh-res}, we
present the full details of  this  method and derive several results. Here, we
only use some of  those  results and only present  the  notation necessary to review this part of \cite{bdh-res}. The infrared cut-off
 Hamiltonians  $ H^{(n), \theta} $  are parametrized 
by a sequence of numbers (see also  \eqref{rhos} and \eqref{dorm2})
\begin{align}\label{rho22}
\rho_n :=  \rho_0 \rho^{n},
\end{align}
 where  the Hamiltonians $ H^{(n), \theta} $   are  defined
by
\begin{align}
\label{Hntheta}
H^{(n),\theta} :&= K  +  H^{(n),\theta}_f +g V^{(n),\theta}=:H^{(n),\theta}_0+g V^{(n),\theta}
\\
\label{Hfntheta}
H_f^{(n),\theta}:&= \int_{\R^3\setminus \mathcal B_{\rho_n}} \mathrm{d^3}k \, \omega^\theta(k) a^*(k) a(k) , \quad \omega^\theta(k)= e^{-\theta}|k|
\\
\label{Vntheta}
V^{(n),\theta}:&=\sigma_1 \otimes \int_{\R^3\setminus \mathcal B_{\rho_n}}\mathrm{d^3}k \,\left(f^{\theta}(k)a(k)+ f^\theta(k) a^*(k) \right) ,
\\
f^\theta: & \,\R^3\setminus \{0\}\to\R , \quad k \mapsto e^{-\theta (1+\mu)} e^{-e^{2\theta}\frac{k^2}{\Lambda^2}}|k|^{-\frac{1}{2}+\mu} ,
\end{align}
on the Hilbert space
\begin{align}
\label{hilbertn}
\mathcal H^{(n)}:=\mathcal K \otimes \mathcal F[\mathfrak{h}^{(n)}] , \quad
\mathfrak{h}^{(n)}:=L^2(\R^3\setminus \mathcal B_{\rho_n},\C), \quad  \mathcal B_{\rho_n}:=\left\{  x\in\R^3  : |x|<\rho_n \right\} .
\end{align}
Additionally, we  define 
\begin{align}
\label{Htilde}
    \tilde H^{(n),\theta} := H^{\theta}_0 +gV^{(n),\theta}
\end{align}
 and  fix the Hilbert  spaces 
\begin{align}
\label{hilbertrest}
\mathfrak{h}^{(n,\infty)}:= L^2(\mathcal B_{\rho_n} ) \quad \text{and} \quad  \mathcal F[\mathfrak{h}^{(n,\infty)}], 
\end{align}
 defined as in \eqref{fockspace} with $  \mathfrak{h}^{(n,\infty)} $ instead of  $  \mathfrak{h} $, 
with vacuum states $ \Omega^{(n , \infty)} $ and corresponding orthogonal projections 
$ P_{ \Omega^{(n , \infty)}}$. 
Note that
$\mathcal H \equiv \mathcal H^{(n)}\otimes \mathcal
F[\mathfrak{h}^{(n,\infty)}]$.

In \cite[Proposition 2.1]{bdh-res} and \cite[Theorem 4.5]{bdh-res}, we prove  that, for each $n\in\N$, $H^{(n), \theta}$
has isolated eigenvalues $ \lambda_i^{(n)} $ in certain neighborhoods of $ e_i $, for
$ i \in \{0,1 \}$, respectively. The fact that
 these eigenvalues  are isolated permits us to define their
corresponding  Riesz projections which are denoted  by 
\begin{align}
\label{projections}
P_i^{(n)} \equiv P_i^{(n), \theta}. 
\end{align}
In  \cite[Proposition 2.1]{bdh-res}, we prove that this sequence of projections  converges   to the projection associated to the eigenvalue $\lambda_i$, i.e.,
\begin{align}\label{ana}
P_{i}^{\theta} \equiv  P_i =  \lim_{n \to \infty}P_i^{(n),\theta}  \otimes P_{\Omega^{(n, \infty)}}  , 
\end{align} 
and that the latter is analytic with respect to $\theta$ (see  \cite[Theorem 2.3]{bdh-res}).      
Furthermore, it follows from \cite[Remark 5.11]{bdh-res} that
\begin{align}\label{projj}
\Big \|  P^\theta_i -  P_i^{(n),\theta}  \otimes P_{\Omega^{(n, \infty)}} \Big \| \leq  2  \frac{g}{\rho} \rho_n^{\mu /2} \leq     \rho_n^{\mu /2}. 
\end{align}
This  together with \cite[Lemma 3.6]{bdh-res} implies that there is a constant $C$ such that
\begin{align}\label{projj1}
\Big \|  P^\theta_i -   P_{ \varphi_i } \otimes P_{\Omega} \Big \| \leq C g,
\end{align}
and in addition,  we know from  \cite[Lemma 4.7]{bdh-res}  that
\begin{align}\label{projj2}
\Big \|   P_i^{(n),\theta}   \Big \| \leq 3,
\end{align}
for every $n\in\N$.  Finally, \cite[Lemma 5.1]{bdh-res} yields that for all $n\in\N$
\begin{align}\label{ground}
| \lambda_i -  \lambda_i^{(n)} | \leq 2 g \rho_n^{1+ \mu /2}.
\end{align}
This together with \cite[Lemma 3.10]{bdh-res}, which states that there is a constant $C $ such that
$  | e_i -  \lambda_i^{(1)} |  <  C g$, proves   that there is a constant $  C  $ such that, for every $n \in \mathbb{N}$ and for g sufficiently small, we have
\begin{align}\label{ground1}
| \lambda^{(n)}_i - e_i | \leq C g \leq 10^{-3} e_1 , \qquad | \lambda_i - e_i | \leq C g \leq 10^{-3} e_1.
\end{align}

\subsection{Resolvent estimates}\label{Sress}

In  \cite{bdh-res}, we derive bounds for the resolvent of $  H^{\theta}$ in     
 $            \Big [  A \cup \big ( B_0^{(1)} \setminus  \mathcal C_m(\lambda_0)   \big ) \cup  \big ( B_1^{(1)} \setminus  \mathcal C_m(\lambda_1)   \big )      \Big ]   $, see  \eqref{spectrum}.  The region $A$ is far away from the spectrum, and therefore,  resolvent estimates in this  region are easy. In \cite[Theorem 3.2]{bdh-res},  we prove that there is a constant $C$ such that 
\begin{align}\label{resA}
\Big \| \frac{1}{H^{\theta} - z}  \Big \| \leq C \frac{1}{ |z  - e_1|}, \qquad
\forall z \in A. 
\end{align} 
Resolvent estimates in the regions $   B_0^{(1)} \setminus  \mathcal C_m(\lambda_0)      $ and  $   B_1^{(1)} \setminus  \mathcal C_m(\lambda_1)      $ are much more complicated because these regions share boundaries with the spectrum.

In  \cite[Theorem 5.5]{bdh-res}, we prove that, for $i\in\{0,1\}$,
  $B_i^{(1)} \setminus  \mathcal C_m \left(\lambda_i^{(n)} + (1/4)
\rho_n e^{-i\nu}\right) \setminus \{ \lambda^{(n)}_i  \}  $ is contained in the resolvent set of $H^{(n), \theta}$ 
and that there is a constant $ \boldsymbol C$ such that 
\begin{align}\label{dedo}
\norm{\frac{1}{ H^{(n), \theta}-z}  \overline{P_{i }^{(n),\theta}} }  \leq 
  \boldsymbol{   C}^{n+1}  \frac{1}{ {\rm dist} ( z, \mathcal{C}_m(\lambda_i^{(n)} +  
(1/4)\rho_n e^{-i\nu} )       )} ,
\end{align} 
for every $ z  \in  B_i^{(1)} \setminus  \mathcal C_m \left(\lambda_i^{(n)} +  
(1/4)\rho_n e^{-i\nu}\right) $, where $ \overline{P_{i }^{(n),\theta}} =  1 - P_{i }^{(n),\theta}   $. Here, the symbol ${\rm dist}$ denotes the Euclidean distance in $\C$. In  \cite{bdh-res}, we select   the auxiliary numbers $ \rho$
and $ \rho_0 $ satisfying  $   \boldsymbol{C}^8 \rho_0^{\mu} \leq 1, $  and  $ \boldsymbol{C}^4 \rho^{\mu} \leq 1/4        $. In this paper we assume the stronger    conditions 
\begin{align}\label{dorm2}
\boldsymbol{C}^8 \rho_0^{\mu} \leq 1, \qquad \boldsymbol{C}^8 \rho^{\mu} \leq 1/4 ,  \qquad   (\text{and hence} \qquad  \boldsymbol{C} \rho^{ \frac{1}{2}  \iota (1 + \mu/4)} \leq 1),    
\end{align}  
where  
\begin{align}\label{dorm222}
\iota = \frac{\mu/4 }{ ( 1+ \mu/4 ) } \in (0, 1).   
\end{align}  
The constant $ \boldsymbol{C} $ is larger that $ 10^6$, it is specified in
Definition 4.1 and Eq.\ (5.58) in \cite{bdh-res}, however, its precise form is not relevant in this paper (in \cite{bdh-res}, we do not intend to calculate optimal constants, because this would make the  work harder to read).  From the inequalities above and  Eq.\ \eqref{rhos} we obtain that, for very $n \in \mathbb{N}$: 
\begin{align}\label{rhoneuno}
\rho_n \leq 10^{-6} e_1. 
\end{align}
Finally, we prove in  \cite[Theorem 5.9]{bdh-res} that
the set  $  \in   B^{(1)}_i \setminus \mathcal C_m(\lambda_i^{(n)} -  e^{-i \nu} \rho_n^{1+ \mu/4} )  $
 is contained in the resolvent set  of both
 $   H^{ \theta} $ and $ \tilde H^{(n), \theta} $ 
and for all $z $ in this set  there is a constant $C$  such that: 
\begin{align} \label{diffet}
\norm{ \frac{1}{ H^{\theta}-z}     -   \frac{1}{ \tilde H^{(n), \theta}-z}   }
\leq   g  C \bold C^{2n+2} \frac{1}{\rho_n} 
   \rho_n^{  \frac{\mu}{2}} \leq g C   \frac{1}{\rho_n} 
   \rho_n^{  \frac{\mu}{4}}, 
\end{align} 
where we use \eqref{dorm2}. 
Notice that Eq.\ \eqref{dedo} implies that there is a constant $ C$ such that 
\begin{align}\label{dedo1}
\norm{\frac{1}{ H^{(n), \theta}-z}  \overline{P_{i }^{(n),\theta}} }  \leq C  \boldsymbol{C}^{n+1} \frac{1}{\rho_n}, 
\end{align} 
for every $  z     \in   B^{(1)}_i \setminus \mathcal C_m(\lambda_i^{(n)}  )    $.
Moreover,  \cite[Theorem 2.6]{bdh-res} implies that there is a constant $C$ such that
\begin{align}\label{resmain}
  \norm{\frac{1}{H^\theta - z}}  \leq C \boldsymbol{C}^{n+1} \frac{1}{ \rho_n^{1 + \mu/4}},
\end{align}
for every $z \in B^{(1)}_i  \setminus \mathcal{C}_m(\lambda_i^{(n)}- \rho_n^{1+ \mu/4} e^{-i\nu})$  and 
\begin{align}\label{resmainnn}
  \norm{\frac{1}{H^\theta - z}}  \leq C \boldsymbol{C}^{n+1} \frac{1}{{ \rm dist} \Big ( z, \mathcal{C}_m(\lambda_i  \Big ) },
\end{align}
for every $z \in B^{(1)}_i  \setminus \mathcal{C}_m(\lambda_i -  2\rho_n^{1+ \mu/4} e^{-i\nu})$.

\section{Proof of the main result}
\label{sec:proof-mainresult}

In the remainder of this work we provide the proofs of the   main result  Theorem~\ref{FKcor}. This section has three parts: In
Section~\ref{sec:proof-prelim}, we derive a preliminary formula for the
scattering matrix coefficients (see Theorem \ref{intker} below). This formula
together with several technical ingredients provided in 
Sections~\ref{sec:proof-techingredients} and \ref{kin} will pave the way for the proofs of
the main results given in Section~\ref{main}.

\subsection{Preliminary scattering formula}
\label{sec:proof-prelim}
In Theorem~\ref{intker} below we derive a preliminary formula for scattering
processes with one incoming and outgoing asymptotic photon. A  related formula
was already employed in \cite{spohnrad}. In order to derive it rigorously we
need several properties of the asymptotic creation and annihilation operators. The necessary properties are collected  in
Lemma~\ref{lemmadiff}. They have already been proven for a range of models in
several works \cite{fau1,fau2,fau3,fgs1,fgs2,rgk,rk,rk2,derezinski,bkz}. For
convenience of the reader we provide a self-contained proof of
Lemma~\ref{lemmadiff} in the Appendix~\ref{app:welldef}.

\begin{lemma}
\label{lemmadiff}
Let $\Psi\in \mathcal K \otimes D(H_f^{1/2})$ and $h,l\in\mathfrak h_0$.  The asymptotic
creation and annihilation operators $a_\pm^*, a_\pm$ defined in
Definition~\ref{defasymptop} have the following properties:
\begin{enumerate}[(i)]
    \item The limits $a^\#_\pm(h)\Psi=\lim_{t\to\pm\infty}a^\#_t(h)\Psi$ exist,   where $ a^\# $ stands for $a$ or $a^*$. 
    \item The next equalities holds true:
        \begin{align}
            \label{a_+}
            a_+(h)\Psi=a(h)\Psi - ig\int_0^\infty \mathrm{d}s\, 
         e^{isH}
        \langle h_s,f\rangle_2\,
            \sigma_1 e^{-isH}
            \Psi,
            \\
            \label{a_-}
            a_-(h)\Psi=a(h)\Psi + ig\int^0_{-\infty} \mathrm{d}s\, 
        e^{isH}
        \langle h_s,f\rangle_2\,
            \sigma_1
            e^{-isH} \Psi.
        \end{align} 
		We point out to the reader that   the integrals  above are convergent since it can be shown by    integration by parts   that there is constant $C$ such that $| \langle h_s,f\rangle_2 |\leq C/(1+s^2)$ for $s\in\R$ (see \eqref{eq:stat-phase} below).
    \item The following pull-through formula holds true:
        \begin{align}
        \label{inta}
             e^{-isH} a_- (h)^*\Psi= a_- (h_s)^* e^{-isH}\Psi.
        \end{align}
    \item The equality $a_\pm(h)\Psi_{\lambda_0}=0$ holds true, i.e.,
        $\Psi_{\lambda_0}\in\mathcal K^\pm$.
    \item The following commutation relation holds:
        $\langle a_\pm(h)^*\Psi_{\lambda_0}, a_\pm(l)^*\Psi_{\lambda_0}\rangle
        =\langle h, l\rangle_2 \| \Psi_{\lambda_0}    \|^2 $.
    \item There is a finite constant $C(h)>0$ such that for all 
        $t\in\R$
        \begin{align}
            \norm{a_t(h)^*(H_f+1)^{-\frac{1}{2}}} ,
            \norm{a_t(h)(H_f+1)^{-\frac{1}{2}}} \leq C(h) .
        \end{align}
\end{enumerate}
\end{lemma}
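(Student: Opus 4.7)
My plan is to prove the six items in the order stated, with Cook's method for (i)–(ii) forming the computational backbone and the remaining items following from (i)–(ii) combined with standard manipulations.

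For (i) and (ii) I would compute $\tfrac{d}{dt}a_t(h)\Psi$ directly for $\Psi\in \mathcal K\otimes D(H_f^{1/2})$:
\begin{align*}
\tfrac{d}{dt}a_t(h)\Psi \;=\; e^{itH}\bigl(i[H, a(h_t)] + a(\dot h_t)\bigr)e^{-itH}\Psi.
\end{align*}
Since $a$ is antilinear and $\dot h_t = -i\omega h_t$, we have $a(\dot h_t)=ia(\omega h_t)$, which cancels $i[H_f, a(h_t)]=-ia(\omega h_t)$; since $[K, a(h_t)]=0$ only the interaction contributes, yielding
\begin{align*}
\tfrac{d}{dt}a_t(h)\Psi \;=\; -ig\,\langle h_t, f\rangle_2\, e^{itH}\sigma_1 e^{-itH}\Psi.
\end{align*}
Because $h\in C^\infty_c(\R^3\setminus\{0\})$ has compact support bounded away from the origin, one integration by parts in $\langle h_t, f\rangle_2 = \int \overline{h(k)}e^{it|k|}f(k)\, d^3k$ (the phase $|k|$ has no critical points on $\mathrm{supp}\, h$) gives $|\langle h_t, f\rangle_2|\leq C(1+t^2)^{-1}$. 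Cook's argument then yields both the existence of the norm limits and the integral formulas \eqref{a_+}, \eqref{a_-}. The computation for $a^*$ is identical up to conjugation since $\sigma_1$ is self-adjoint.

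For (iii), I would observe that $h_s(k)=h(k)e^{-is\omega(k)}$ implies $(h_s)_{t-s}=h_t$, whence
\begin{align*}
e^{-isH}a_t(h)^* e^{isH} \;=\; e^{i(t-s)H}a(h_t)^* e^{-i(t-s)H} \;=\; a_{t-s}(h_s)^*,
\end{align*}
and taking $t\to -\infty$ gives \eqref{inta}. For (iv), I would show that $a(h_t)\Psi_{\lambda_0}\to 0$ in norm by a non-stationary-phase argument: for fixed occupation variables the inner integral defining $(a(h_t)\Psi_{\lambda_0})^{(n)}$ is an oscillatory integral in $k$ against $e^{it|k|}$ with no critical point on $\mathrm{supp}\, h$; after one integration by parts (first on a dense approximating subspace from $\mathcal F_0$, then extended using the uniform bound from (vi)) and dominated convergence in the remaining variables one obtains norm decay. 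Combined with $e^{-itH}\Psi_{\lambda_0}=e^{-it\lambda_0}\Psi_{\lambda_0}$, this gives $a_t(h)\Psi_{\lambda_0}=e^{-it\lambda_0}e^{itH}a(h_t)\Psi_{\lambda_0}\to 0$.

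For (v), I would take $t$-limits of the exact CCR $[a_t(h), a_t(l)^*]=\langle h_t, l_t\rangle_2=\langle h, l\rangle_2$ using (i) (vector-valued limits on $\mathcal K \otimes D(H_f^{1/2})$), obtaining the asymptotic CCR $[a_\pm(h),a_\pm(l)^*]=\langle h, l\rangle_2$; then
\begin{align*}
\langle a_\pm(h)^*\Psi_{\lambda_0}, a_\pm(l)^*\Psi_{\lambda_0}\rangle = \langle\Psi_{\lambda_0}, a_\pm(l)^* a_\pm(h)\Psi_{\lambda_0}\rangle + \langle h, l\rangle_2 \|\Psi_{\lambda_0}\|^2 = \langle h, l\rangle_2 \|\Psi_{\lambda_0}\|^2
\end{align*}
by (iv). For (vi), I would use the standard bound $\|a(h_t)(H_f+1)^{-1/2}\|\leq \|h/\sqrt{\omega}\|$ together with a uniform-in-$t$ domain-comparison: the infinitesimal relative boundedness of $gV$ stated in Proposition \ref{thm:Hsa} gives $c,C$ with $H_f\leq C(H+c)$, whence $\|H_f^{1/2}e^{-itH}\Phi\|^2\leq C\langle \Phi, (H+c)\Phi\rangle$ uniformly in $t$; pushing this through $a_t(h)=e^{itH}a(h_t)e^{-itH}$ and doing the analogous computation for $a_t(h)^*$ (which needs the $\|h\|$ term as well) yields (vi).

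The main obstacle I anticipate is the norm-decay $a(h_t)\Psi_{\lambda_0}\to 0$ needed for (iv), since it requires enough regularity of the $n$-boson components of $\Psi_{\lambda_0}$ in momentum space to carry out the non-stationary-phase integration by parts and subsequent dominated-convergence step uniformly in the spectator variables. The clean route is to first prove it on the dense subspace $\mathcal F_0$ (where the integrand is Schwartz in each variable), and then remove the approximation using the uniform bound from (vi); the remaining items are then essentially algebraic consequences of (i)–(iv).
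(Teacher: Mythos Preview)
Your approach is essentially the same as the paper's: Cook's method for (ii) (and hence (i)), the algebraic pull-through for (iii), non-stationary phase on a dense subspace plus a density argument for (iv), CCR plus (iv) for (v), and the domain comparison $H_f\lesssim H+c$ for (vi). Two small points deserve care: a single integration by parts in $\langle h_t,f\rangle_2$ only gives $O(1/|t|)$, which is not integrable, so you need the two-fold integration by parts the paper uses to obtain $C(1+t^2)^{-1}$; and the strong derivative $\tfrac{d}{dt}a_t(h)\Psi$ is not available for general $\Psi\in\mathcal K\otimes D(H_f^{1/2})$ (you need $\Psi\in D(H)$ and that $a(h_t)e^{-itH}\Psi$ stays in $D(H)$), which is exactly why the paper first runs the computation on the dense subspace $\mathcal H_0=\mathcal K\otimes\mathcal F_{\mathrm{fin}}[\mathfrak h_0]$ and then passes to the limit using the uniform bound from (vi)---the same density mechanism you already invoke for (iv).
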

\begin{definition}
\label{def:fourier-distri}
Let $\mathit S(\R,\C)$ denote the Schwartz space of functions with rapid
  decay.  For all $u\in\mathit
S(\R,\C)$, we define the Fourier transform of a function and its inverse 
\begin{align}
\mathfrak{F}[u](x):= \int_\R \mathrm{d}s \,  u(s)e^{-isx},
\qquad
\mathfrak{F}^{-1}[u](x):= (2
\pi)^{-1}\int_\R \mathrm{d}s \,  u(s)e^{isx} .
\end{align}
Note the factor $(2\pi)^{-1}$ which is not  the normalization factor of     the standard definition of the
inverse Fourier transform. However, it is  convenient in our
 setting  (see e.g.\ \cite{reedsimon1}). 
\end{definition}
\begin{theorem}[Preliminary Scattering Formula]
   \label{intker}
   For $h,l\in \mathfrak{h}_0$, the two-body transition matrix coefficient
   $T(h,l)$ defined in \eqref{eq:Tmatrix} fulfills
\begin{align}
    T(h,l)=  \lim\limits_{t\to - \infty}\int \mathrm{d^3}k \mathrm{d^3}k' \,
\overline{h(k)} l(k') \delta(\omega(k)-\omega(k')) T_t(k,k')
\label{T}
\end{align}
for the integral kernel
\begin{align}
\label{intkernel}
T_t(k,k')=-2\pi i g f(k)  \norm{\Psi_{\lambda_0}}^{-2}{\langle \sigma_1 \Psi_{\lambda_0},
a_t(k')^* \Psi_{\lambda_0}\rangle}.
\end{align}
\end{theorem}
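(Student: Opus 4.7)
My plan is to begin from the definition \eqref{eq:Tmatrix} and exploit Lemma \ref{lemmadiff}(iv)--(v). The vanishing $a_\pm(h)\Psi_{\lambda_0}=0$ from (iv) combined with the commutation relation (v) yields
\[
T(h,l)\,\|\Psi_{\lambda_0}\|^2 = \langle (a_+^*(h)-a_-^*(h))\Psi_{\lambda_0},\, a_-^*(l)\Psi_{\lambda_0}\rangle,
\]
since $\|\Psi_{\lambda_0}\|^{-2}\langle a_-^*(h)\Psi_{\lambda_0}, a_-^*(l)\Psi_{\lambda_0}\rangle = \langle h,l\rangle_2$ cancels the subtraction in \eqref{eq:Tmatrix}. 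A Cook-type computation, namely taking adjoints of the two identities in Lemma \ref{lemmadiff}(ii) and subtracting, gives
\[
a_+^*(h)-a_-^*(h) = ig\int_{-\infty}^{\infty}\! ds\, \langle f, h_s\rangle_2\, e^{isH}\sigma_1 e^{-isH},
\]
the integral converging absolutely in norm on the domain of $H_f^{1/2}$ thanks to the stationary-phase bound $|\langle f,h_s\rangle_2|\le C/(1+s^2)$ cited in the theorem. Applied to $\Psi_{\lambda_0}$, the spectral identity $e^{-isH}\Psi_{\lambda_0}=e^{-is\lambda_0}\Psi_{\lambda_0}$ reduces this to a single Bochner integral of $e^{is(H-\lambda_0)}\sigma_1\Psi_{\lambda_0}$.

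Next, I would substitute into the inner product and invoke the pull-through (iii), $e^{-isH}a_-^*(l)\Psi_{\lambda_0} = e^{-is\lambda_0}a_-^*(l_s)\Psi_{\lambda_0}$, so that the $e^{\pm is\lambda_0}$ factors cancel and
\[
T(h,l)\,\|\Psi_{\lambda_0}\|^2 = -ig\int_{-\infty}^\infty\! ds\, \langle h_s,f\rangle_2\, \langle\sigma_1\Psi_{\lambda_0}, a_-^*(l_s)\Psi_{\lambda_0}\rangle.
\]
Writing $\langle h_s,f\rangle_2 = \int d^3k\, \overline{h(k)}f(k)e^{is\omega(k)}$ (since $f$ is real) and, formally under the $l$-smearing, $a_-^*(l_s)\Psi_{\lambda_0} = \int d^3k'\, l(k')e^{-is\omega(k')}a_-^*(k')\Psi_{\lambda_0}$, an application of Fubini combined with
\[
\int_{-\infty}^{\infty}\! ds\, e^{is(\omega(k)-\omega(k'))} = 2\pi \delta(\omega(k)-\omega(k'))
\]
would produce
\[
T(h,l) = -2\pi ig\,\|\Psi_{\lambda_0}\|^{-2}\!\int\! d^3k\, d^3k'\, \overline{h(k)} l(k') f(k)\delta(\omega(k)-\omega(k'))\langle\sigma_1\Psi_{\lambda_0}, a_-^*(k')\Psi_{\lambda_0}\rangle.
\]
Finally I would identify the distribution $k'\mapsto\langle\sigma_1\Psi_{\lambda_0}, a_-^*(k')\Psi_{\lambda_0}\rangle$ with $\lim_{t\to-\infty}\langle\sigma_1\Psi_{\lambda_0}, a_t(k')^*\Psi_{\lambda_0}\rangle$ and pull the limit outside the $d^3k'$-integral to obtain the claimed formula with $T_t(k,k')$.

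The hard part will be rigorously justifying these formal manipulations, since $a^*(k')$ is only a quadratic form on $\mathcal F_0$. Throughout I would consistently work with operators smeared against $l,h\in\mathfrak{h}_0$ and verify the following three points: (a) absolute convergence of the $ds$-integral using $|\langle h_s,f\rangle_2|\le C/(1+s^2)$; (b) the Fubini swap of $ds$ with the $d^3k\,d^3k'$-integration, using this integrability together with the uniform-in-$t$ bound $\|a_t(h)^*(H_f+1)^{-1/2}\|\le C(h)$ from Lemma \ref{lemmadiff}(vi); and (c) the interchange of the $t\to-\infty$ limit with the $d^3k'$-integral via dominated convergence, again invoking Lemma \ref{lemmadiff}(vi) together with the smoothness and compact support of $l$ away from the origin.
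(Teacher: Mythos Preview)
Your overall strategy coincides with the paper's: both routes derive the identity
\[
T(h,l)\,\|\Psi_{\lambda_0}\|^2 = -ig\int_{-\infty}^\infty ds\, \langle h_s,f\rangle_2\, \langle\sigma_1\Psi_{\lambda_0}, a_-^*(l_s)\Psi_{\lambda_0}\rangle
\]
via Lemma~\ref{lemmadiff}(ii)--(v) and then collapse the $s$-integral to the energy-shell delta.

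There is, however, a genuine gap in your justification plan~(b). The Fubini swap of $ds$ against $d^3k\,d^3k'$ cannot be carried out: the decay $|\langle h_s,f\rangle_2|\le C/(1+s^2)$ is produced by the stationary-phase cancellation \emph{after} integrating over~$k$, so once you unfold to pointwise $k,k'$ the integrand behaves like $e^{is(\omega(k)-\omega(k'))}$, which has modulus one and is not absolutely $ds$-integrable. Consequently the identity $\int ds\, e^{is(\omega(k)-\omega(k'))}=2\pi\delta(\omega(k)-\omega(k'))$ is not a Fubini consequence but a distributional statement that needs an independent argument.

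The paper handles this differently and you will need the same extra ingredient. It first interchanges the $ds$-integral with the limit $t\to-\infty$ (by dominated convergence at the level of the \emph{smeared} inner product, using exactly the $1/(1+s^2)$ bound and Lemma~\ref{lemmadiff}(vi)), and then, for fixed~$t$, approximates both $\Psi_{\lambda_0}$ and $e^{-itH}\sigma_1\Psi_{\lambda_0}$ by vectors in $\mathcal H_0=\mathcal K\otimes\mathcal F_{\mathrm{fin}}[\mathfrak h_0]$. After this approximation every object is a finite sum of $C_c^\infty$ functions in the radial variable, and the $s$-integral becomes a product of honest one-dimensional Fourier transforms to which Plancherel applies, yielding the function $W$ of \eqref{def:W1st}. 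Smearing in $h,l$ alone, as you propose, is insufficient because $\Psi_{\lambda_0}\notin\mathcal H_0$; the approximation of the \emph{states} (and the use of Plancherel rather than a formal delta) is the missing step.
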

The integral in \eqref{T}
is to be understood as
\begin{align}
T(h,l)= -2\pi ig \norm{\Psi_{\lambda_0}}^{-2}\bigg\langle 
 \sigma_1  \Psi_{\lambda_0}, a_-(W)^* \Psi_{\lambda_0} \bigg\rangle 
 \label{eq:Tprecise1}
\end{align}
for $W\in\mathfrak{h}_0$ given by
\begin{align}
\label{def:W1st}
\R^3\ni k\mapsto W( k):=|k|^2 l(k) \int\mathrm{d}\Sigma \, \overline{h(|k|,\Sigma)}f(|k|,\Sigma) 
\end{align}
using spherical coordinates $k=(|k|,\Sigma)$ with $\Sigma$ being the solid
angle. 
\begin{proof}
Let $h,l\in \mathfrak{h}_0$. Thanks to Lemma~\ref{lemmadiff} (i) and the fact
that the ground state $\Psi_{\lambda_0}$ lies in $\mathcal D(H)= \mathcal
K\otimes \mathcal D(H_f)$, c.f.\ \cite[Theorem 1]{spohnspin} and
Proposition~\ref{thm:Hsa},
the transmission matrix coefficient given in \eqref{eq:Tmatrix}, i.e.,
\begin{align}
    T(h,l)= S(h,l)- \left\langle h,l \right\rangle_2
    = \norm{\Psi_{\lambda_0}}^{-2}\langle
    a_+(h)^*\Psi_{\lambda_0},a_-(l)^*\Psi_{\lambda_0} \rangle
    - \left\langle h,l \right\rangle_2
    \label{eq:T1}
\end{align}
is well-defined. 
Lemma~\ref{lemmadiff} (iv) and (v) implies that
\begin{align}
    (\ref{eq:T1})= \norm{\Psi_{\lambda_0}}^{-2}\langle
    [a_+(h)^*-a_-(h)^*]\Psi_{\lambda_0},a_-(l)^*\Psi_{\lambda_0} \rangle.
    \label{eq:T2}
\end{align}
Using Lemma~\ref{lemmadiff} (ii), we obtain
\begin{align}
    (\ref{eq:T1})
    =-i g\norm{\Psi_{\lambda_0}}^{-2} \int_{-\infty}^\infty \mathrm{d}s 
    \langle
    \Psi_{\lambda_0},e^{isH}\sigma_1 e^{-isH} a_-(l)^*\Psi_{\lambda_0} \rangle
    \langle h_s,f\rangle_2
    .
    \label{eq:T3}
\end{align}
Finally, we use Lemma~\ref{lemmadiff} (iii) to get
\begin{align}
    (\ref{eq:T1}) \notag 
    &=-i g\norm{\Psi_{\lambda_0}}^{-2}\int_{-\infty}^\infty \mathrm{d}s
    \left\langle
    e^{-isH} \Psi_{\lambda_0},\sigma_1 a_-(l_s)^*e^{-isH} \Psi_{\lambda_0} \right\rangle 
    \langle h_s,f \rangle_2
    \\
    &=-i g\norm{\Psi_{\lambda_0}}^{-2}\int_{-\infty}^\infty \mathrm{d}s 
    \left\langle
     \sigma_1  \Psi_{\lambda_0}, a_-(l_s)^*\Psi_{\lambda_0} \right\rangle 
    \langle h_s,f\rangle_2
.
\end{align}
We insert the definition of the asymptotic creation operator
in \eqref{asymptop} to find
\begin{align}
(\ref{eq:T1})
=-i g \norm{\Psi_{\lambda_0}}^{-2} \int_{-\infty}^\infty \mathrm{d}s
\lim_{t\to-\infty}\left\langle
    \sigma_1 { \Psi_{\lambda_0}}, a_t(l_s)^* { \Psi_{\lambda_0}} \right\rangle
\langle h_s,f\rangle_2.
\label{eq:ds-integrand}
\end{align}
Next, it is possible to interchange the $\mathrm{d}s$ integral and the limit
$t\to-\infty$. This can be seen as follows.  A two-fold partial 
integration  implies that there is a constant $C$ such that,   for all $s\in\R$, we get
\begin{align}
  |\langle h_s,f\rangle_2|  \leq C  \frac{1}{1 + |s|^2} 
   .
    \label{eq:pint}
\end{align}  
By applying
Lemma~\ref{lemmadiff} (vi), we infer that there is a
finite constant $C_{\eqref{eq:const-a}}(l)>0$ such that for all 
$s\in\R$
\begin{align}
    |\langle \sigma_1 { \Psi_{\lambda_0}}, a_t(l_s)^* { \Psi_{\lambda_0}} \rangle |
    &\leq
    \|\sigma_1\Psi_{\lambda_0}\|\, \|a_t(l_s)^* (H_f+1)^{-\frac
    12}\|\,\|(H_f+1)^{\frac12}{ \Psi_{\lambda_0}}\|
    \notag
    \\
    &\leq 
    C_{\eqref{eq:const-a}}(l)\| { \Psi_{\lambda_0}}  \|\| { \Psi_{\lambda_0}}  \|_{H_f}
    \label{eq:const-a}
\end{align}
holds true. Both estimates, \eqref{eq:pint} and \eqref{eq:const-a}, give an
integrable bound of the ds-integrand in 
\eqref{eq:ds-integrand} that is uniform in $t$.
Hence, by dominated convergence, we have the equality
\begin{align}
  (\ref{eq:T1})   &
=-i g \norm{\Psi_{\lambda_0}}^{-2} \lim_{t\to-\infty}\int_{-\infty}^\infty \mathrm{d}s
\left\langle
    \sigma_1    { \Psi_{\lambda_0}   }  , a_t(l_s)^* 
     { \Psi_{\lambda_0}}   \right\rangle
\langle h_s,f\rangle_2
\notag\\
 & = 
-i g \norm{\Psi_{\lambda_0}}^{-2} \lim_{t\to-\infty}  e^{-i t \lambda_0 } \int_{-\infty}^\infty \mathrm{d}s   
      \left\langle
    e^{- i t H }\sigma_1    { \Psi_{\lambda_0}}  , a( l_{s+ t} )^* 
    { \Psi_{\lambda_0}} \right\rangle
\langle h_s,f\rangle_2,
    \label{ddd0}
\end{align}
where in the last step we have inserted definition \eqref{asymptop} and
exploited the ground state property 
  \eqref{gsprop}.

In order to rewrite this integral in form of \eqref{T}-\eqref{intkernel}, or
more precisely, \eqref{eq:Tprecise1}-\eqref{def:W1st}, we
shall use the following approximation argument.  Let 
\begin{align}
    \label{eq:H0}
   \mathcal H_0:=\mathcal K\otimes
\mathcal F_{\text{fin}}[\mathfrak h_0] 
\end{align}
be the set of states with only finitely
many bosons, i.e., 
\begin{align}
\label{def:denseF}
    \mathcal F_{\text{fin}}[\mathfrak h_0]:= \Big \{ \Psi=(\psi^{ (n)})_{n\in\N_0}\in
        \mathcal F[\mathfrak h] \,\big|\, 
        \exists  N  \in \N_0   : &  
\psi^{ (n)}=  0 \, \forall n\geq N, \\ \notag &  \forall n \in \mathbb{N} :
   \psi^{ (n)} \in C_c^{\infty}(\mathbb{R}^{3n } \setminus \{ 0 \}, \mathbb{C})    \Big \}. 
\end{align}
Note that $\mathcal H_0$ is a dense subset of $\mathcal H$ with respect to the
norm in $ \mathcal{H} $ and it is dense in the domain of $H_f$ with respect to
the graph norm of the operator $H_f$ defined by
$\norm{\cdot}_{H_f}:=\norm{H_f\cdot}+\norm{\cdot}$.  Hence,  for $t\in\R$,
 there are  sequences $(\Psi_m)_{m\in\N}$,   $(
\Phi_m^t)_{m\in\N}$ in $\mathcal H_0$ with
$\norm{\Psi_m-\Psi_{\lambda_0}}_{H_f}\to 0$, as $m \to\infty$, and
$\norm{ \Phi_m^t-    e^{- i t H }\sigma_1    \Psi_{\lambda_0}
}  \to 0$, as $m \to\infty$. Then, Lemma \ref{lemma:standardest1}, applied in
the same fashion as 
in \eqref{eq:const-a},  implies
that
\begin{align}
\lim_{m \to \infty}   \left\langle
     \Phi_m^t  , a( l_{s+ t} )^* 
     \Psi_{m} \right\rangle =  \left\langle
    e^{- i t H }\sigma_1    { \Psi_{\lambda_0}}  , a( l_{s+ t} )^* 
    { \Psi_{\lambda_0}} \right\rangle,
\end{align}
uniformly  in   $s$. 
 Thanks to the bound
\eqref{eq:pint}, we may apply 
dominated convergence theorem to conclude that 
\begin{align} 
\lim_{m \to \infty }  \int_{-\infty}^\infty \mathrm{d}s &   
      \left\langle
     \Phi_m^t   , a( l_{s+ t} )^* 
     \Psi_{m} \right\rangle 
    \langle h_s,f\rangle_2
  =   \int_{-\infty}^\infty \mathrm{d}s   
      \left\langle
    e^{- i t H }\sigma_1    { \Psi_{\lambda_0}}  , a( l_{s+ t} )^* 
    { \Psi_{\lambda_0}} \right\rangle
    \langle h_s,f\rangle_2
    .
\label{ddd}
\end{align}
Now, we study the integrals in the left hand side of Eq.\ \eqref{ddd}. The
advantage of the sequences  $(\Psi_m)_{m\in\N}$,   $(  \Phi_m^t)_{m\in\N}$  is
that they allow to use point-wise annihilation operators in the following
manner: \begin{align}\label{ddd1}
\int_{-\infty}^\infty \mathrm{d}s &   
      \left\langle
     \Phi_m^t  , a( l_{s+ t} )^* 
     \Psi_{m} \right\rangle    
\langle h_s, f\rangle_2  \\ \notag & = 
    \int_{-\infty}^\infty \mathrm{d}s \int d^3 k' e^{-is \omega(k')}   e^{-it \omega(k')}   l(k')     
      \left\langle
    a(k')   \Phi_m^t   ,  
     \Psi_{m} \right\rangle 
\int
    \mathrm{d^3}k \,\overline{h(k)} f(k)    e^{is\omega(k)} 
    \\ \notag &  = \int_{- \infty}^{\infty} ds \Big [ \Big ( \int_{- \infty}^{\infty }    \mathrm{d}r  \, e^{isr} \Theta(r) u(r) \Big ) \Big ( \int_{- \infty}^{\infty}  \mathrm{d}r'\, 
    e^{-isr'}  \Theta(r')  v_m^t (r') \Big ) \Big ] ,  
\end{align}   
where $\Theta$ is the Heaviside function and we use {spherical coordinates and} the abbreviations
\begin{align}\label{ddd2}
u(r):=r^2\int \mathrm{d}\Sigma \, \overline{h(r,\Sigma)} f(r,\Sigma)
\quad \text{and} \quad 
 v_m^t (r'):= e^{-it r' }  r'^2\int \mathrm{d}\Sigma' \, l(r',\Sigma') \left\langle
a(r',\Sigma')  \Phi^t_m , \Psi_m\right\rangle.
\notag
\end{align}  
By  definition,  $v^t_m $  and $u$
belong to $C_c^{\infty}( \mathbb{R} \setminus \{ 0 \} ) $ so that
the integrals with respect to  $r$ and $r'$ above can be regarded as Fourier
transform,  introduced in Definition \ref{def:fourier-distri},  i.e.,
\begin{align} 
{\eqref{ddd1}}
= \int_{-\infty}^\infty \mathrm{d}s \,
\overline{\mathfrak F \left[ \overline{\Theta  u}   \right]} (s)
{\mathfrak F \left[\Theta  { v_m^t} \right] (s)}    
\end{align}
holds true.
 Plancherel's identity yields for all $t\in\R$
\begin{align} 
\label{ddd3} 
     (\ref{ddd1})  & = 2\pi \int_{-\infty}^{\infty}   dr' \,
   {   \Theta  u } \Theta  v_m^t (r') 
     \notag \\
     &  =
    2\pi  \int_{0}^\infty  dr' \, {r'}^2  \int \mathrm{d}\Sigma
    \,\overline{ h(r',\Sigma) } f(r',\Sigma) e^{-it
    r' }  {r'}^2 \int \mathrm{d}\Sigma' \, l( r',\Sigma') \left\langle a(r',\Sigma') 
        \Phi_m^t,
        \Psi_m\right\rangle \notag \\ & =  2\pi  \left\langle a(  W_t )
         \Phi_m^t,
    \Psi_m\right\rangle = 2\pi\left\langle  \Phi_m^t, a(  W_t )^*  \Psi_m\right\rangle
\end{align}
where we have used the definition of $W$ in \eqref{def:W1st} and the definition
\eqref{asymptop}, in particular, the notation $W_t(k)=W(k)e^{-it\omega(k)}$.
Using Lemma \ref{lemma:standardest1}, applied in
the same fashion as in
 \eqref{eq:const-a}, allows to carry   out the limit $m\to\infty$ which
results in
\begin{align}
    \eqref{ddd}
    =
    \lim_{m\to\infty}\eqref{ddd3} = 
    2\pi\left\langle e^{ - itH}\sigma_1\Psi_{\lambda_0} 
    ,
     \ a(  W_t )^*
    \Psi_{\lambda_0}\right\rangle.
\end{align}
This together with \eqref{ddd0}  and Lemma~\ref{lemmadiff}     guarantees
\begin{align}
    \eqref{eq:T1} &= 
-i g \norm{\Psi_{\lambda_0}}^{-2} \lim_{t\to-\infty}  e^{-i t \lambda_0 }
 2\pi\left\langle
    e^{-itH}\sigma_1\Psi_{\lambda_0}, a(  W_t )^*
    \Psi_{\lambda_0}\right\rangle
    \\
    &=
-2\pi i g \norm{\Psi_{\lambda_0}}^{-2} \lim_{t\to-\infty}  
 \left\langle
    \sigma_1\Psi_{\lambda_0}, a_t(  W )^*
    \Psi_{\lambda_0}\right\rangle
    \\
    &=
-2\pi i g \norm{\Psi_{\lambda_0}}^{-2} 
 \left\langle
    \sigma_1\Psi_{\lambda_0}, a_-(  W )^*
    \Psi_{\lambda_0}\right\rangle ,
\end{align}
which concludes the proof.
\end{proof}

\subsection{Technical ingredients}
\label{sec:proof-techingredients}
Here, we derive some technical results which will be applied in the proof of
the main results in Section \ref{main}.  The statements in this
section will mostly be formulated without motivation, however, their importance
will become clear later in the proofs of the main results. 

\subsubsection{General results}
\begin{lemma}
\label{lemma:orth}
For $n\in\N$ and  $\theta \in\mathcal S$, we have 
\begin{align}
P^{(n),\theta}_{0}\sigma_1P^{(n),\theta}_{0}=0 .
\end{align}
\end{lemma}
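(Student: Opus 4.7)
The plan is to exploit a parity symmetry of $H^{(n),\theta}$ together with the fact that $\sigma_1$ is parity-odd. Let $N$ denote the boson number operator on $\mathcal F[\mathfrak h^{(n)}]$ and introduce the bounded unitary involution
\begin{align*}
\Pi := \sigma_3 \otimes (-1)^{N} \qquad \text{on} \qquad \mathcal H^{(n)},
\end{align*}
where $\sigma_3:=\diag(1,-1)$. The first step is to check that $\Pi$ commutes with $H^{(n),\theta}$: the atomic part $K$ is diagonal so $[K,\sigma_3]=0$; $H_f^{(n),\theta}$ preserves each $k$-particle sector and hence commutes with $(-1)^{N}$; and in $V^{(n),\theta}=\sigma_1\otimes (a(f^{\overline\theta})+a(f^\theta)^*)$ the spin factor $\sigma_1$ anticommutes with $\sigma_3$, while each of $a(f^{\overline\theta})$ and $a(f^\theta)^*$ anticommutes with $(-1)^{N}$, so the two minus signs cancel.

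Since $P_0^{(n),\theta}$ is defined by the Riesz formula on a small contour around the isolated eigenvalue $\lambda_0^{(n)}$ of $H^{(n),\theta}$ (cf.\ Proposition~2.1 of \cite{bdh-res}), commutation of $\Pi$ with the resolvent transfers to $[\Pi,P_0^{(n),\theta}]=0$. Because $P_0^{(n),\theta}$ has rank one, its range is a one-dimensional $\Pi$-invariant subspace, so $\Pi P_0^{(n),\theta}=\pm P_0^{(n),\theta}$. To pin down the sign I would use continuity of $P_0^{(n),\theta}$ in the coupling constant (the same analytic perturbation framework of \cite{bdh-res} that underlies \eqref{projj1} works on the cut-off space $\mathcal H^{(n)}$ as well) together with the elementary fact that at $g=0$ one has $P_0^{(n),\theta}=P_{\varphi_0}\otimes P_{\Omega^{(n)}}$, on which $\Pi$ acts as $-1$ (since $\sigma_3\varphi_0=-\varphi_0$ and $(-1)^{N}\Omega^{(n)}=\Omega^{(n)}$). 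As the sign in $\Pi P_0^{(n),\theta}=\pm P_0^{(n),\theta}$ is locally constant in $g$, I conclude $\Pi P_0^{(n),\theta}=P_0^{(n),\theta}\Pi=-P_0^{(n),\theta}$ throughout the small-coupling regime fixed in Definition~\ref{gggg}.

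The remainder is algebra: $\Pi\sigma_1\Pi=(\sigma_3\sigma_1\sigma_3)\otimes 1=-\sigma_1$, so that writing $P:=P_0^{(n),\theta}$ and combining $\Pi P=P\Pi=-P$ with $\Pi\sigma_1=-\sigma_1\Pi$ one finds
\begin{align*}
P\sigma_1 P \;=\; -(\Pi P)\sigma_1 P \;=\; -P\Pi\sigma_1 P \;=\; P\sigma_1\Pi P \;=\; -P\sigma_1 P,
\end{align*}
which forces $P\sigma_1 P=0$. The only non-trivial point is the identification of the correct sign in $\Pi P_0^{(n),\theta}=\pm P_0^{(n),\theta}$; once this is in place, no further spectral input is required and the algebraic identity does the rest.
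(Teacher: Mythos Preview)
Your proof is correct. The paper does not give its own argument but simply cites \cite[Lemma~2.1]{bbkm}; the parity-symmetry route via $\Pi=\sigma_3\otimes(-1)^N$ that you take is the natural one and almost certainly what is done there as well.

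One simplification worth noting: the sign determination you single out as ``the only non-trivial point'' is in fact unnecessary. Once $[\Pi,P]=0$ and $P$ is rank one, you have $\Pi P=P\Pi=cP$ for some $c\in\{\pm1\}$, and then
\[
P\sigma_1 P \;=\; c^{-1}(P\Pi)\sigma_1 P \;=\; -c^{-1}P\sigma_1(\Pi P) \;=\; -c^{-1}c\,P\sigma_1 P \;=\; -P\sigma_1 P,
\]
regardless of the value of $c$. So the continuity-in-$g$ argument and the evaluation at $g=0$ can be dropped entirely; the lemma follows from $[\Pi,H^{(n),\theta}]=0$, the rank-one property of $P_0^{(n),\theta}$, and the anticommutation $\Pi\sigma_1\Pi=-\sigma_1$ alone.
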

The statement has already been proven in \cite[Lemma 2.1]{bbkm}.

 Next, we prove a representation formula of the evolution operator similar
to the Laplace transform representation (see, e.g., \cite{bach}).
\begin{lemma}
\label{laplace}
For $\epsilon>0$ and sufficiently large $R >0$, we consider the concatenated contour
$\Gamma(\epsilon,R):=\Gamma_{-}(\epsilon,R)\cup
\Gamma_{c}(\epsilon)\cup \Gamma_{d}(R)$ (see Figure \ref{fig:CurveGamma}),
where
\begin{align}
\Gamma_{-}(\epsilon,R)&:=[-R, \lambda_0 -\epsilon]\cup [\lambda_0 +\epsilon,R
], 
\notag \\
\Gamma_{d}(R)&:=\left\lbrace -R -ue^{i\frac{\nu}{4}} :u\geq 0 \right\rbrace
\cup \left\lbrace R +ue^{-i\frac{\nu}{4}} :u\geq 0 \right\rbrace ,
\notag \\
\Gamma_{c}(\epsilon)&:=\left\{ \lambda_0 -\epsilon e^{-it}: t\in [0,\pi]
\right\}.
\label{Gamma-parts}
\end{align}  
The  orientations of the contours in \eqref{Gamma-parts} are given by the arrows depicted in Figure
\ref{fig:CurveGamma}.   
Then, for all analytic vectors $\phi,\psi\in\mathcal H$
(analytic in a --  connected --  domain containing   $0$)
    and $t>0$ the following identity holds true:
\begin{align}
\left\langle \phi , e^{-itH} \psi  \right\rangle =
\frac{1}{2\pi i}\int_{\Gamma(\epsilon,R)}\mathrm{d}z\, e^{-itz} \left\langle \psi^{\overline{\theta}} , \left( H^\theta-z  \right)^{-1} \phi^\theta \right\rangle  .
\label{eq:laplace}
\end{align}
\end{lemma}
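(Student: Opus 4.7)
The plan is to combine the Riesz--Dunford functional calculus for the non-self-adjoint operator $H^\theta$ with an analytic continuation argument in the dilation parameter $\theta$, closing the open contour $\Gamma(\epsilon,R)$ by an arc at infinity in the lower half-plane.

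First, I would verify that for $\theta\in\mathcal{S}$ and sufficiently large $R$, the entire contour $\Gamma(\epsilon,R)$ lies in the resolvent set of $H^\theta$. The spectral inclusion \eqref{spectrum} shows that $\sigma(H^\theta)$ is contained in the two cones $\mathcal{C}_m(\lambda_i)$, which open at angle $\nu$ into the lower half-plane, together with the isolated eigenvalue $\lambda_0\in\mathbb{R}$. The upper semicircle $\Gamma_c(\epsilon)$ encloses only $\lambda_0$; the real-axis segments $\Gamma_-(\epsilon,R)$ avoid $\sigma(H^\theta)$ because the essential spectrum has been rotated away from $\mathbb{R}$; and the outgoing rays $\Gamma_d(R)$, which descend at angle $\nu/4<\nu$, remain uniformly separated from the cones for large $|z|$. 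The resolvent estimates \eqref{resA}, \eqref{resmain} and \eqref{resmainnn} then give a uniform bound on $\|(H^\theta-z)^{-1}\|$ along $\Gamma(\epsilon,R)$, so the integrand in \eqref{eq:laplace} is a well-defined continuous function of $z$.

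Next, I would close $\Gamma(\epsilon,R)$ by an arc $\Gamma_R^\infty$ in the lower half-plane joining $R+ue^{-i\nu/4}$ back to $-R-ue^{i\nu/4}$ as $u\to\infty$. The resulting closed curve $\widetilde\Gamma$ encircles the full spectrum $\sigma(H^\theta)$ (namely $\lambda_0$, the resonance $\lambda_1$, and the two rotated essential-spectrum rays) exactly once. Applying the Riesz--Dunford functional calculus to the entire function $z\mapsto e^{-itz}$ and the closed sectorial operator $H^\theta$ yields an operator identity of the form
\begin{equation*}
    e^{-itH^\theta}\;=\;\tfrac{1}{2\pi i}\oint_{\widetilde\Gamma}e^{-itz}(H^\theta-z)^{-1}\,dz,
\end{equation*}
with the sign determined by the orientation fixed by Figure \ref{fig:CurveGamma}. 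The contribution from $\Gamma_R^\infty$ vanishes as the arc is pushed to infinity because $|e^{-itz}|=e^{t\Im z}$ decays exponentially for $t>0$ as $\Im z\to-\infty$, while the resolvent norm grows at most algebraically on the far-field region $A$ by \eqref{resA}. Pairing the resulting operator identity between the analytic vectors $\psi^{\overline\theta}$ (antilinear slot) and $\phi^\theta$ (linear slot) produces exactly the right-hand side of \eqref{eq:laplace}.

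Finally, I would identify the dilated matrix element obtained on the left-hand side with $\langle\phi,e^{-itH}\psi\rangle$ by analytic continuation in $\theta$. By Lemma \ref{typea} together with the analytic-vector assumption on $\phi$ and $\psi$, the map $\theta\mapsto\langle\psi^{\overline\theta},e^{-itH^\theta}\phi^\theta\rangle$ is holomorphic on a connected open set that includes $\mathcal{S}$ and a real neighbourhood of $0$. For real $\theta$ in that neighbourhood, $U_\theta$ is unitary, so the unitary equivalence $H^\theta=U_\theta H U_\theta^{-1}$ combined with $U_\theta^\ast=U_\theta^{-1}$ (and the self-adjointness of $H$) expresses the matrix element in terms of $H$ and identifies it with $\langle\phi,e^{-itH}\psi\rangle$, independently of $\theta$. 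Uniqueness of analytic continuation then propagates the equality throughout $\mathcal{S}$.

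The main technical obstacle is the uniform vanishing of the arc contribution at infinity: one must combine (i) the exponential smallness of $e^{-itz}$ in the lower half-plane for $t>0$, (ii) a uniform bound on $\|(H^\theta-z)^{-1}\|$ along the arc (available from \eqref{resA} once the arc is chosen inside the far-field region $A$, using that the angles $\nu/4$ and $\nu$ are strictly ordered), and (iii) uniform-in-$\theta$ control on $\phi^\theta$ and $\psi^{\overline\theta}$ for $\theta$ in compact subsets of $\mathcal S$. A dominated convergence argument then yields the limit and completes the proof.
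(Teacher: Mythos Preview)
Your strategy has a genuine gap at the arc-at-infinity step. You assert that the closing arc $\Gamma_R^\infty$ can be taken inside the far-field region $A$ so that \eqref{resA} applies, but this is impossible. Inspecting Definition~\ref{def:regionsAB}, the set $A$ meets the lower half-plane only where $\Re z<e_0-\delta/2$ (via $A_1$) or where $\Re z>e_1+\delta/2$ and $\Im z$ lies \emph{above} the line of slope $-\sin(\nu/2)$ (via $A_3$); in the central strip $e_0-\delta/2\le \Re z\le e_1+\delta/2$ there is no part of $A$ below the real axis. Since the spectral cones $\mathcal C_m(\lambda_i)$ emanate from $\lambda_0,\lambda_1$ in that very strip and extend to infinity at angle $-\nu$, any closed contour encircling $\sigma(H^\theta)$ must cross the strip \emph{below} those cones, i.e.\ outside $A$ and outside all regions for which the paper supplies resolvent bounds. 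The ordering $\nu/4<\nu$ only guarantees that the rays $\Gamma_d(R)$ themselves lie in $A$; it says nothing about the connecting arc. A numerical-range or $m$-sectoriality argument could in principle produce the missing bound, but that is an additional nontrivial input you would have to supply, and it is not available in the paper.

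There is a second, related issue: you invoke the operator identity $e^{-itH^\theta}=\tfrac{1}{2\pi i}\oint e^{-itz}(H^\theta-z)^{-1}\,dz$ and then the analyticity of $\theta\mapsto\langle\psi^{\overline\theta},e^{-itH^\theta}\phi^\theta\rangle$. For complex $\theta$ the operator $H^\theta$ is neither self-adjoint nor a priori the generator of a bounded group, so $e^{-itH^\theta}$ has no independent meaning here; the contour integral cannot be identified with it without further work.

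The paper circumvents both problems by never writing $e^{-itH^\theta}$ at all. It starts from the \emph{self-adjoint} $H$, uses the spectral theorem together with the scalar identity $e^{-itE}=\tfrac{1}{2\pi i}\int_{\mathbb R+i\epsilon}\frac{e^{-itz}}{it(E-z)^2}\,dz$ to obtain an integral of $\langle\psi,(H-z)^{-2}\phi\rangle$ along a horizontal line in the upper half-plane (the square of the resolvent gives absolute convergence without any exponential decay), then performs the analytic continuation in $\theta$ at the level of the resolvent matrix element---where type-A analyticity is available---deforms the contour entirely within $A\cup\overline{\mathbb C^+}$ to $\Gamma(\epsilon,R)$, and only at the end integrates by parts in $z$ (now using the exponential decay of $e^{-itz}$ along $\Gamma_d$) to reduce $(H^\theta-z)^{-2}$ to $(H^\theta-z)^{-1}$. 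No arc below the spectral cones is ever needed.
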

\begin{figure}[h] 
\centering
\includegraphics[width=\textwidth]{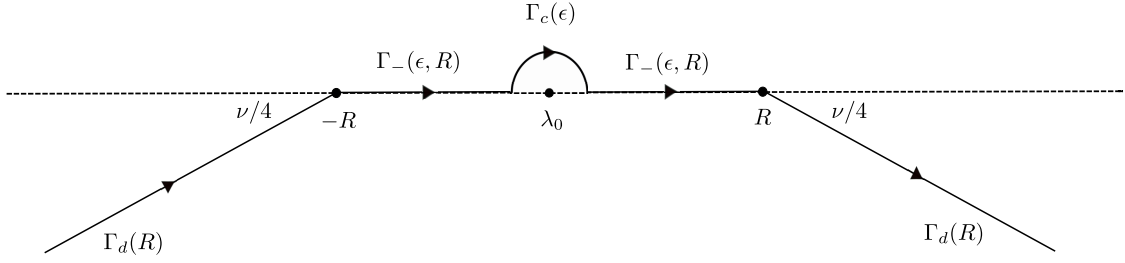}
  \caption{An illustration of the contour
  $\Gamma(\epsilon,R):=\Gamma_{-}(\epsilon,R)\cup \Gamma_{c}(\epsilon)\cup
  \Gamma_{d}(R)$. }
    \label{fig:CurveGamma}       
\end{figure}
\begin{proof}
    Let $t>0$ and $\epsilon>0$. 
    We define a contour $\hat \Gamma(\epsilon):= \R+i\epsilon$ with a
    mathematical negative orientation if the contour were closed in the lower
    complex plane.
    As an application of the residue theorem closing the contour in the
    lower complex plane, we observe for all $E\in\R$
    \begin{align}
        \frac{1}{2\pi i} \int_{\hat \Gamma(\epsilon)}\mathrm{d}z\, \frac{
        e^{-itz} }{it(E-z)^2}= e^{-itE}
    \end{align}
    holds true.
    Thanks to the spectral theorem we may write for all $\psi\in\mathcal H$
    \begin{align}
        \left\langle \psi , e^{-itH} \psi \right\rangle  
        =\int_{\sigma(H)}\left\langle \psi , \mathrm{d}P_E \psi \right\rangle e^{-itE} 
        = 
        \frac{1}{2\pi i}
        \int_{\sigma(H)}
        \int_{\hat \Gamma(\epsilon)}\mathrm{d}z\,
        \left\langle \psi ,
            dP_E \psi \right\rangle 
        \frac{
            e^{-itz} }{it (
        E-z  )^{2}}.
        \label{eq:interchange}
    \end{align}
    Next, we may interchange the order of the integrals by the Fubini-Tonelli
    Theorem since the following integral 
    is finite: 
    \begin{align}
        \int_{\sigma(H)}
        \left\langle \psi ,
        dP_E \psi \right\rangle
        \int_{\hat \Gamma(\epsilon)}\mathrm{d}z\,
        \left|
        \frac{
            e^{-itz} }{it (
        E-z  )^{2}}
        \right|
     \leq
        \frac{e^{t\epsilon}}{t}
        \int_{\sigma(H)}
        \left\langle \psi ,
        dP_E \psi \right\rangle
        \int_{-\infty}^\infty dx\,
        |
        x-i\epsilon |^{-2}
        <\infty.
    \end{align}
Hence, after the interchange we may apply the spectral theorem
    again to find
    \begin{align}
        \eqref{eq:interchange}
        &=
        \frac{1}{2\pi i}
        \int_{\hat \Gamma(\epsilon)}\mathrm{d}z
        \int_{\sigma(H)}
        \,
        \left\langle \psi ,
            dP_E \psi \right\rangle 
        \frac{
            e^{-itz} }{it (
        E-z  )^{2}}
        =
        \frac{1}{2\pi i}
        \int_{\hat \Gamma(\epsilon)}\mathrm{d}z\,
        \frac{e^{-itz}}{it}
        \left\langle \psi ,
        \frac{1}{(H-z)^2} \psi \right\rangle .
    \end{align}
    Exploiting the polarization identities we recover for all
    $\psi,\phi\in\mathcal H$ the identity
    \begin{align}
        \left\langle \psi , e^{-itH} \phi \right\rangle  
        =
        \frac{1}{2\pi i}
        \int_{\hat \Gamma(\epsilon)}\mathrm{d}z\,
        \frac{e^{-itz}}{it}
        \left\langle \psi ,
        \frac{1}{(H-z)^2} \phi \right\rangle.
        \label{eq:sandwich}
    \end{align}
   
The fact that the family $ H^{\theta} $ is an analytic family of type $A$
implies that the operator valued function 
\begin{align}\label{corr1}
\theta \mapsto \frac{ 1}{H^{\theta} - z }
\end{align}
is analytic for all $z$ in the resolvent set of
$H^{\theta}$. A detailed and self-contained   exposition of this topic is
presented in  \cite[Section 7]{bdh-res}.  It is straight forward to prove that
for real $ \theta $  
\begin{align}\label{corr2prima}
\frac{ 1}{H^{\theta} - z } = U^\theta \frac{ 1}{   H  - z } ( U^\theta
)^{-1}.
\end{align}
 For complex $ \theta$, however,  this expression is not necessarily correct (due to a problem
of domains of unbounded operators).  Nevertheless,  Eqs.\ \eqref{corr1} and \eqref{corr2prima}
imply that the function 
\begin{align}\label{corr2}
\theta \mapsto  \left\langle \psi^{\overline\theta} ,
        \frac{1}{(H^\theta-z)^2} \phi^\theta \right\rangle,
\end{align}
  where $   \phi^\theta   = U^{\theta}  \phi,  \:   \psi^{\overline{\theta}}   = U^{\overline{\theta}}  \psi   , $  
is analytic and it coincides with  $ \left\langle \psi ,
        \frac{1}{(H-z)^2} \phi \right\rangle $ for real $\theta$, because in
        this case $U^\theta $ is unitary. Hence, we conclude that 
\begin{align} \label{corr3}
\left\langle \psi^{\overline\theta} ,
        \frac{1}{(H^\theta-z)^2} \phi^\theta \right\rangle = 
        \left\langle \psi ,
        \frac{1}{(H-z)^2} \phi \right\rangle
\end{align}
  for every $\theta$ in a connected (open) domain containing $0$ such that \eqref{corr2} is analytic in this domain. We obtain: 
    \begin{align}
        \eqref{eq:sandwich}=
        \frac{1}{2\pi i}
        \int_{\hat \Gamma(\epsilon)}\mathrm{d}z\,
        \frac{e^{-itz}}{it}
        \left\langle \psi^{\overline\theta} ,
        \frac{1}{(H^\theta-z)^2} \phi^\theta \right\rangle
        \label{eq:dil-h-sandwich}
    \end{align}
Eqs.\ \eqref{spectrum}  and \eqref{eq:impartres}   imply that the only spectral point of $ H^{\theta} $ on the real line is $\lambda_0^{\theta}$ and all other spectral points have strictly negative imaginary part. Therefore, the operator valued function  
    \begin{align}
        \label{eq:laplace-ana}
        A \cup  { \mathbb{C}^{+} }   \ni z \mapsto
        \frac{1}{H^\theta-z},
    \end{align}
  where $  \mathbb{C}^+ =  \{  x+ i y |  x \in \mathbb{R}, y >  0 \} $,   is  analytic.
  Moreover,  for $ R \geq e_1 + \delta = 2 e_1 $, $\Gamma_d (R)$ is contained in the  region
  $A $, and hence, it follows from \eqref{resA}  that there is a constant  $C $ such that
    \begin{align}
        \label{eq:laplace-ana1} 
        \norm{\frac{1}{H^\theta-z}}\leq
        \frac{C}{| z - e_1 |}
           \qquad \forall
        z\in \Gamma_d.
    \end{align} 
    Due to the analyticity, we may deform the integration contour from $\hat
    \Gamma(\epsilon)$ to $\Gamma(\epsilon,R)$ which gives:
    \begin{align}
        \eqref{eq:dil-h-sandwich}
        =
        \frac{1}{2\pi i}
        \int_{\Gamma(\epsilon, R)}\mathrm{d}z\,
        \frac{e^{-itz}}{it}
        \left\langle \psi^{\overline\theta} ,
        \frac{1}{(H^\theta-z)^2} \phi^\theta \right\rangle .
    \end{align}
    Now we observe that the integrand on the right-hand side features an exponential decay for large $|\Re
    z|$ thanks to the factor $e^{-itz}$ in the integrand and the definition of
    $\Gamma_d(\epsilon, R)$. In particular, the decay in $|z|$ provided by the
    resolvent, i.e., bound \eqref{eq:laplace-ana1}, is not necessary anymore to make the integral converge. We may
    therefore perform an integration by parts.
    Note that, for $z$ in  $A \cup \mathbb{C}^{+}$,  we have
    \begin{align}
        \frac{d}{d z}
        \left\langle \psi^{\overline\theta} ,
        \frac{1}{(H^\theta-z)} \phi^\theta \right\rangle
        =
        \left\langle \psi^{\overline\theta} ,
        \frac{1}{(H^\theta-z)^2} \phi^\theta \right\rangle
    \end{align}
    which is implied by the resolvent identity 
    \begin{align}
       \left\langle \psi^{\overline\theta} ,
        \frac{1}{(H^\theta-(z+  u))} \phi^\theta \right\rangle 
        -
       \left\langle \psi^{\overline\theta} ,
        \frac{1}{(H^\theta-z)} \phi^\theta \right\rangle 
        =
       \left\langle \psi^{\overline\theta} ,
       \frac{1}{(H^\theta-(z+  u))}  u\frac{1}{(H^\theta-z)} \phi^\theta \right\rangle .  
    \end{align} 
    Moreover, the boundary terms of the
    partial integration resulting from the piece-wise concatenation of contours, i.e., 
   $\Gamma(\epsilon,R)=\Gamma_{-}(\epsilon,R)\cup
\Gamma_{c}(\epsilon)\cup \Gamma_{d}(R)$,
    cancel and the ones
    at $|\Re z|\to\infty$ vanish because of the exponential decay. In conclusion,
    the identity
    \begin{align}
        \eqref{eq:dil-h-sandwich}
        =
        \frac{1}{2\pi i}
        \int_{\Gamma(\epsilon, R)}\mathrm{d}z\,
        e^{-itz}
        \left\langle \psi^{\overline\theta} ,
        \frac{1}{H^\theta-z} \phi^\theta \right\rangle 
    \end{align}
    holds true which proves the claim.
\end{proof}

\subsection{Key ingredients}\label{kin}

The next definition is motivated by a simple
geometric argument  which we give
in the following for the convenience of the reader: take a
cone of the form $\mathcal C_m ( \lambda_0^{(n)} - x  e^{-i \nu }) $, $x > 0$,
where $m$ is a fixed (arbitrary) number  greater or equal than $4$. Although $m$ is arbitrary, our estimates and constants depend on it.
The distance between the vertex of the cone and the intersection of the line  $
\lambda_0^{(n)} - i x \sin(\nu) + \mathbb{R} $ with the cone is $$  \sqrt{ \Big
( \frac{2 x \sin(\nu) }{ \tan\big (  (1 - 1/m)\nu \big )}\Big ) ^2 +  (2 x
\sin(\nu))^2   } \leq 4 x  \frac{\sin(\nu)}{  \sin \big (  (1 - 1/m)\nu \big )}
\leq  8 x. $$ To obtain the last inequality we use the sum of angles formula for
$ \sin(\nu) $, writing $   \nu = (\nu - \nu/m) + \nu/m  $. Then, we have that
the   distance between $ \lambda_0^{(n)}  $  and the  line segment described
above is smaller than $  8 x $.  
\begin{definition}\label{epsilonn} 
For every $n \in \mathbb{N}$,
we define 
\begin{align}
\label{eq:epsn}
\epsilon_n  : =  20 \rho_n^{1 + \mu/ 4 }  .
\end{align}
It follows from \eqref{rhoneuno} and \eqref{ground1} that for every 
 $ n \in \mathbb{N} $ 
\begin{align}
D(\lambda_0, 2  \epsilon_n ) \subset B_0^{(1)} . 
\end{align}
The geometric argument given above together with
$ |\lambda_0^{(n)} - \lambda_0 | \leq  10^{-2}   \rho_n^{1+ \mu/2} $ (see Definition \ref{gggg} and \eqref{ground}) yields that, for all $n\in\N$ 
and a fixed (arbitrary)  $m\geq 4$, 
\begin{align}\label{carita}
\mathcal C_m ( \lambda_0^{(n)} - 2 \rho_{n}^{1+ \mu/4}  e^{-i \nu } ) \cap
 \Big ( \overline{\mathbb{C}^+} + \lambda_0^{(n)} - i 2 \sin(\nu) \rho_n^{1 + \mu /4 } \Big ) \subset    D(\lambda_0,   \epsilon_n )  \subset 
 D(\lambda_0,  2 \epsilon_n ) \subset B_0^{(1)} 
\end{align}
 and
\begin{align}\label{caritaputa}
\mathcal C_m ( \lambda_0 - 2 \rho_{n}^{1+ \mu/4}  e^{-i \nu } ) \cap
 \Big ( \overline{\mathbb{C}^+} + \lambda_0 - i 2 \sin(\nu) \rho_n^{1 + \mu /4 } \Big ) \subset    D(\lambda_0,   \epsilon_n )  \subset 
 D(\lambda_0,  2 \epsilon_n ) \subset B_0^{(1)} .
\end{align}
Note that \eqref{ground} 
and the fact that  $\lambda_0 \in\R $ imply that 
\begin{align}\label{Upsilon}
\Im \lambda_0^{(n)} - 2 \sin(\nu) \rho_n^{1 + \mu /4 } \leq 2 g \rho_n^{1+ \mu/2} 
-   2 \sin(\nu) \rho_n^{1 + \mu /4 }  < 0, \hspace{1cm} \forall n \in \mathbb{N}, 
\end{align} 
for small enough $g$ (see Definition 4.3  in \cite{bdh-res}). 
Eq.\ \eqref{carita}  implies that for every  
   $n \in \N$  
\begin{align}
 \Gamma_c(\epsilon_n) \subset   B_0^{(1)} \setminus   \mathcal C_m ( \lambda_0^{(n)} - 2 \rho_{n}^{1+ \mu/4}  e^{-i \nu } ).    
\end{align}
\end{definition}
\begin{lemma}\label{kkk1}
 For all $n\in\N$,  
a fixed (arbitrary)  $m\geq 4$ 
and $\theta\in\mathcal S$, there is a constant   $ C $ 
(that depends on $m$) 
such that
\begin{align}\label{nn0}
 \Big \| \frac{1}{H^{\theta}-z} \sigma_1 
 \Psi_{\lambda_0}^{\theta}    \Big \| \leq  C \boldsymbol{C}^{n+1} \frac{1}{\rho_n},
\end{align}
for every $ z \in  B_0^{(1)} \setminus  \mathcal{C}_{m}( \lambda_0^{(n)} -  \rho_n^{1+ \mu/ 4} e^{-i \nu}) $,  and hence, for every $ z \in  B_0^{(1)} \setminus  \mathcal{C}_{m}( \lambda_0 - 2 \rho_n^{1+ \mu/ 4} e^{-i \nu}) $, see \cite[Theorem 5.10]{bdh-res}. 
\end{lemma}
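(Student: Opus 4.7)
The strategy is to approximate both the resolvent $(H^\theta-z)^{-1}$ and the state $\Psi_{\lambda_0}^\theta$ by their infrared cut-off counterparts, then exploit the orthogonality relation from Lemma~\ref{lemma:orth} together with the projected resolvent estimate \eqref{dedo1}. Introduce $\Psi_0^{(n),\theta}:=P_0^{(n),\theta}(\varphi_0\otimes\Omega^{(n)})$, and view $\Psi_0^{(n),\theta}\otimes\Omega^{(n,\infty)}$ as an element of $\mathcal H$ via the identification $\mathcal H\equiv\mathcal H^{(n)}\otimes\mathcal F[\mathfrak h^{(n,\infty)}]$. By \eqref{projj}, the remainder $R_n:=\Psi_{\lambda_0}^\theta-\Psi_0^{(n),\theta}\otimes\Omega^{(n,\infty)}$ satisfies $\|R_n\|\leq\rho_n^{\mu/2}$. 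I will decompose
\begin{align*}
\tfrac{1}{H^\theta-z}\sigma_1\Psi_{\lambda_0}^\theta
=\Bigl(\tfrac{1}{H^\theta-z}-\tfrac{1}{\tilde H^{(n),\theta}-z}\Bigr)\sigma_1\bigl(\Psi_0^{(n),\theta}\otimes\Omega^{(n,\infty)}\bigr)
+\tfrac{1}{\tilde H^{(n),\theta}-z}\sigma_1\bigl(\Psi_0^{(n),\theta}\otimes\Omega^{(n,\infty)}\bigr)
+\tfrac{1}{H^\theta-z}\sigma_1 R_n
\end{align*}
and estimate the three summands separately.

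The central observation is that the subspace $\mathcal H^{(n)}\otimes\{\Omega^{(n,\infty)}\}$ is invariant under $\tilde H^{(n),\theta}=H_0^\theta+gV^{(n),\theta}$ and the operator restricts to $H^{(n),\theta}$ there, since $H_f^{(n,\infty),\theta}\Omega^{(n,\infty)}=0$. Consequently the middle summand equals $\bigl((H^{(n),\theta}-z)^{-1}\sigma_1\Psi_0^{(n),\theta}\bigr)\otimes\Omega^{(n,\infty)}$. Now Lemma~\ref{lemma:orth} delivers
\begin{align*}
P_0^{(n),\theta}\sigma_1\Psi_0^{(n),\theta}=P_0^{(n),\theta}\sigma_1 P_0^{(n),\theta}(\varphi_0\otimes\Omega^{(n)})=0,
\end{align*}
so $\sigma_1\Psi_0^{(n),\theta}=\overline{P_0^{(n),\theta}}\,\sigma_1\Psi_0^{(n),\theta}$. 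Since $P_0^{(n),\theta}$ is a Riesz projection for $H^{(n),\theta}$, it commutes with $(H^{(n),\theta}-z)^{-1}$, and the middle summand becomes
\begin{align*}
\Bigl(\tfrac{1}{H^{(n),\theta}-z}\overline{P_0^{(n),\theta}}\,\sigma_1\Psi_0^{(n),\theta}\Bigr)\otimes\Omega^{(n,\infty)},
\end{align*}
whose norm is controlled by \eqref{dedo1} and the uniform bound $\|\sigma_1\Psi_0^{(n),\theta}\|\leq\|P_0^{(n),\theta}\|\leq 3$ from \eqref{projj2}, giving the targeted $C\boldsymbol C^{n+1}/\rho_n$.

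It remains to absorb the two error summands into the same bound. The third summand is estimated via \eqref{resmain} by $C\boldsymbol C^{n+1}\rho_n^{-1-\mu/4}\cdot\rho_n^{\mu/2}=C\boldsymbol C^{n+1}\rho_n^{\mu/4-1}\leq C\boldsymbol C^{n+1}/\rho_n$ because $\rho_n<1$. The first summand is estimated via \eqref{diffet} by $gC\rho_n^{-1}\rho_n^{\mu/4}\cdot\|\sigma_1(\Psi_0^{(n),\theta}\otimes\Omega^{(n,\infty)})\|\leq C/\rho_n$ for $g$ small. Combining the three contributions yields the stated inequality. The ``hence'' extension to $z\in B_0^{(1)}\setminus\mathcal C_m(\lambda_0-2\rho_n^{1+\mu/4}e^{-i\nu})$ is a purely geometric consequence of $|\lambda_0-\lambda_0^{(n)}|\leq 2g\rho_n^{1+\mu/2}\ll\rho_n^{1+\mu/4}$, which guarantees that this set is contained in the one already treated. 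The main obstacle is the bookkeeping of errors: the crucial margin is that the approximation gain $\rho_n^{\mu/2}$ in \eqref{projj} strictly beats the resolvent loss $\rho_n^{-\mu/4}$ in \eqref{resmain}, which is precisely why the exponents $\mu/2$ and $\mu/4$ are chosen as they are in the multiscale construction.
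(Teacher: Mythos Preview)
Your proof is correct and follows essentially the same approach as the paper's: both approximate $\Psi_{\lambda_0}^\theta$ by $P_0^{(n),\theta}\otimes P_{\Omega^{(n,\infty)}}(\varphi_0\otimes\Omega)$ via \eqref{projj}, replace $(H^\theta-z)^{-1}$ by $(\tilde H^{(n),\theta}-z)^{-1}$ via \eqref{diffet} and \eqref{resmain}, reduce to $(H^{(n),\theta}-z)^{-1}$ on the vacuum subspace, and then use Lemma~\ref{lemma:orth} together with \eqref{dedo1} for the main term. The only cosmetic difference is that the paper packages your first and third summands into a single ``telescopic sum'' bound \eqref{nn2t}, whereas you treat them separately; your explicit remark on why the exponent $\mu/2$ in \eqref{projj} must beat the $\mu/4$ loss in \eqref{resmain} is a nice addition not spelled out in the paper.
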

\begin{proof} We take   $ z \in  B_0^{(1)} \setminus  \mathcal{C}_{m}(  \lambda_0^{(n)}  - \rho_n^{1+ \mu/ 4} e^{-i \nu}) $ 
and recall  the definition 
$  \Psi^\theta_{\lambda_0} = P_0^{\theta}   \varphi_0 \otimes \Omega$. Then, Eq.\ \eqref{projj}  yields
\begin{align}\label{nn1}
\|  \Psi^\theta_{\lambda_0} -  P_0^{(n ),\theta} \otimes P_{\Omega^{(n , \infty)}}   \varphi_0 \otimes \Omega \| \leq  \rho_n^{   \mu / 2}.
\end{align} 
This  together with Eqs.\  \eqref{diffet}, 
\eqref{resmain},  \eqref{dorm2} and \eqref{projj2} implies that there is a constant $C$ such that 
(we use a
    telescopic sum argument) 
\begin{align}
 & \Big \|    \frac{1}{H^{\theta}-z} \sigma_1  
 \Psi_{\lambda_0}^{\theta}     -  
  \frac{1}{\tilde H^{(n), \theta}-z} \sigma_1 
  P_0^{(n), \theta} \otimes P_{\Omega^{(n , \infty)}}   \varphi_0 \otimes \Omega  \Big \|  
  \notag \\
  & \leq  \norm{\frac{1}{H^{\theta}-z} -\frac{1}{\tilde H^{(n), \theta}-z}}\norm{P_0^{(n ),\theta}} + \norm{\frac{1}{H^{\theta}-z}} \norm{\Psi^\theta_{\lambda_0} -  P_0^{(n ),\theta} \otimes P_{\Omega^{(n , \infty)}}   \varphi_0 \otimes \Omega}
  \notag \\
  &\leq
  C \boldsymbol{C}^{n+1} \frac{1}{\rho_n}  . \label{nn2t} 
\end{align}
 The fact (see Remark \eqref{R}) that 
\begin{align}\label{sss1}
 \Big ( \frac{1}{\tilde H^{(n), \theta}-z} \sigma_1 \Big ) 
  \Big ( P_0^{(n ),\theta} \otimes P_{\Omega^{(n , \infty)}} \Big)   \varphi_0 \otimes \Omega
  = \Big ( \big ( \frac{1}{ H^{(n), \theta}-z} \sigma_1 \big ) \otimes  P_{\Omega^{(n , \infty)}}  \Big )
   P_0^{(n ),\theta}      \varphi_0 \otimes \Omega  
\end{align}
guarantees that there is a constant $C$ such that
\begin{align}\label{nn3}
\Big \|   \frac{1}{\tilde H^{(n), \theta}-z} \sigma_1 
  P_0^{(n), \theta} \otimes P_{\Omega^{(n , \infty)}}   \varphi_0 \otimes \Omega  \Big \|   
\leq  \Big \|  \Big (  \overline{ P_0^{(n ), \theta} }  \frac{1}{ H^{(n), \theta}-z}  \Big ) \otimes   P_{\Omega^{(n , \infty)}}  \Big \|   \leq C \frac{\boldsymbol{C}^{n+1}}{ \rho_n }.
\end{align}
 Here, we use  Eqs.\ \eqref{dedo1}, \eqref{projj2} and Lemma \ref{lemma:orth}. 
\end{proof}
\begin{remark}\label{RRRR}  
Set $c \equiv c_g := {\rm Im} \lambda_1$. Notice that  there is a strictly positive  constant $\boldsymbol{c} $ (independent of $g$) with $c_g  \leq - g^2 \boldsymbol{c}   $, for small enough $g$ (see \eqref{eq:impartres}). Then, for  all real numbers $b> a$    and every $x  \in \{ 0, 1 \} $,  
\begin{align}
\label{rodos1}
&\int_{a}^{b} \mathrm{d}r\, \frac{1}{| r -  \lambda_1|^{1+ x}}   = \int_{a}^{b} \mathrm{d}r\, \frac{1}{g^{2(1+x)}} \frac{1}{ (  (r- {\rm Re} \lambda_1)/g^2)^2 +  (c/g^2)^2 )^{(1+ x)/2}   } \\ \notag &    \leq   \frac{1}{g^{2x}} \int_{(a - {\rm Re} \lambda_1  )/g^2}^{(b  - {\rm Re} \lambda_1  )/g^2} \mathrm{d}y\, \frac{1}{ ( y^2 +  \boldsymbol{c}^2  )^{(1+x)/2}}  \notag \\   &  \leq 2 \frac{1}{g^{2x}} \int_{0}^1 \mathrm{d}\tau\,  \frac{1}{ \boldsymbol{c}^{1+x}}      +2  \frac{1}{g^{2x}} \int_{1}^{  1 +   |   b  - {\rm Re} \lambda_1    |/g^2  +  | a  - {\rm Re} \lambda_1   |/g^2  }  \mathrm{d} \tau\,\frac{1}{\tau^{1+x}}  \notag
\\ &  \notag
\leq 
C  \begin{cases}    \frac{1}{ g^{2}  } , &  \text{if $x = 1$}, \\
|\log(g) |, &  \text{if $x = 0$},  \end{cases}
\end{align}
 where $C$ does not depend on $g$  (for small enough $g$). 
\end{remark}
\begin{lemma}\label{malhe}
Set $\mathcal{L}: = B^{(1)}_1  \cap \mathbb{R}$.  Then, for $g>0$ sufficiently small and $\theta \in \mathcal S$,  
\begin{align}\label{nooo}
\int_{\mathcal{L}}  \mathrm{d}r \,  \norm{ \frac{1}{ H^{\theta} - r   }  \sigma_1 \Psi^\theta_{\lambda_0}   } \leq C |\log g| , 
\end{align}
where $C$ is a constant that does not depend on $g$. 
\end{lemma}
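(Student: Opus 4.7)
The plan is to extract the pole at the resonance energy $\lambda_1$ by means of the rank-one Riesz projection $P_1^\theta$, which commutes with $(H^\theta - r)^{-1}$, and to control the resulting remainder with the approximation tools of Section~\ref{Sress}. Writing
\begin{align*}
\frac{1}{H^\theta - r}\sigma_1\Psi^\theta_{\lambda_0} \;=\; \frac{P_1^\theta\,\sigma_1 \Psi^\theta_{\lambda_0}}{\lambda_1 - r} \;+\; \frac{1}{H^\theta - r}\,\overline{P_1^\theta}\,\sigma_1 \Psi^\theta_{\lambda_0},
\end{align*}
we treat the two pieces separately. For the pole term, the bound $\|P_1^\theta\|\leq 1+Cg$ from \eqref{projj1} together with $\|\sigma_1\|=1$ gives $\|P_1^\theta\sigma_1\Psi^\theta_{\lambda_0}\|\leq C$, and Remark~\ref{RRRR} (with $x=0$) directly yields $\int_{\mathcal L}dr/|\lambda_1-r|\leq C|\log g|$, which is the target bound.

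The harder part is the residual. The key observation is that $\sigma_1\Psi^\theta_{\lambda_0}$ is almost proportional to the resonance eigenvector $\Psi^\theta_{\lambda_1}$: combining $\sigma_1\varphi_0 = \varphi_1$ with the bounds $\|P_i^\theta - P_{\varphi_i}\otimes P_\Omega\|\leq Cg$ from \eqref{projj1} for $i=0,1$, a short computation yields $\sigma_1\Psi^\theta_{\lambda_0}=\Psi^\theta_{\lambda_1}+O(g)$ in norm. Since $\overline{P_1^\theta}\Psi^\theta_{\lambda_1}=0$ and $\|\overline{P_1^\theta}\|$ is uniformly bounded, we conclude $\|\overline{P_1^\theta}\sigma_1\Psi^\theta_{\lambda_0}\|\leq Cg$.

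It remains to bound $\|(H^\theta-r)^{-1}\|$ on $\mathcal L$. We choose $n=n(g)\in\mathbb N$ to be the smallest integer with $2\rho_n^{1+\mu/4}\sin\nu\leq |\Im\lambda_1|$; by \eqref{eq:impartres} this forces $n\leq C\log(1/g)$. For this $n$, the shifted-cone vertex $\lambda_1 - 2\rho_n^{1+\mu/4}e^{-i\nu}$ lies in the closed lower half-plane, so every real $r\in\mathcal L$ avoids $\mathcal C_m(\lambda_1-2\rho_n^{1+\mu/4}e^{-i\nu})$ and \eqref{resmainnn} gives $\|(H^\theta-r)^{-1}\|\leq C\boldsymbol C^{n+1}/\mathrm{dist}(r,\mathcal C_m(\lambda_1))$. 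A geometric argument analogous to the one following Definition~\ref{epsilonn} then shows $\mathrm{dist}(r,\mathcal C_m(\lambda_1))\geq c_m|r-\lambda_1|$ for real $r$ (since the cone sits in the lower half-plane and $r$ lies above it). The inequality $\log\boldsymbol C\leq\mu\log(1/\rho)/8$ implied by \eqref{dorm2} translates into $\boldsymbol C^{n+1}\leq Cg^{-\mu/(4+\mu)}$, so that
\begin{align*}
\int_{\mathcal L}\left\|\frac{1}{H^\theta-r}\,\overline{P_1^\theta}\sigma_1\Psi^\theta_{\lambda_0}\right\| dr \;\leq\; Cg^{1-\mu/(4+\mu)}\int_{\mathcal L}\frac{dr}{|r-\lambda_1|} \;\leq\; Cg^{4/(4+\mu)}|\log g|,
\end{align*}
again by Remark~\ref{RRRR}. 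Since $\mu\in(0,1/2)$ makes the exponent $4/(4+\mu)$ strictly positive, this contribution is subdominant to $|\log g|$.

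The main obstacle is balancing the multiplicative blow-up $\boldsymbol C^{n+1}$ in the resolvent estimate against the $g$-smallness inherited from the near-parallelism of $\sigma_1\Psi^\theta_{\lambda_0}$ and $\Psi^\theta_{\lambda_1}$. The numerical constraint $\boldsymbol C^8\rho^\mu\leq 1/4$ of \eqref{dorm2} is exactly what tips this balance in our favor for every $\mu>0$; it is also the step where genuinely new information beyond Lemma~\ref{kkk1} enters, since the resonance sector requires estimates in a neighborhood of $\lambda_1$ rather than $\lambda_0$.
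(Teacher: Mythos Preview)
Your proof is correct and follows essentially the same route as the paper: the decomposition via $P_1^\theta+\overline{P_1^\theta}$, the observation $\|\overline{P_1^\theta}\sigma_1\Psi^\theta_{\lambda_0}\|\leq Cg$ from \eqref{projj1}, the choice of $n$ so that the shifted cone $\mathcal C_m(\lambda_1-2\rho_n^{1+\mu/4}e^{-i\nu})$ misses the real line, the bound $\boldsymbol C^{n}\leq Cg^{-\iota}$ (your $\mu/(4+\mu)$ is precisely the paper's $\iota$), the geometric comparison $\mathrm{dist}(r,\mathcal C_m(\lambda_1))\geq c|r-\lambda_1|$, and the final appeal to Remark~\ref{RRRR} all match the paper's argument. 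The only cosmetic difference is that the paper merges the two pieces into the single pointwise bound $\|(H^\theta-r)^{-1}\sigma_1\Psi^\theta_{\lambda_0}\|\leq C/|r-\lambda_1|$ before integrating, whereas you keep the pole and residual separate and record the sharper $g^{4/(4+\mu)}|\log g|$ bound on the latter.
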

\begin{proof}
Let $ \boldsymbol{c} $ be the constant introduced in \eqref{Im}, see   Remark \ref{RRRR}. The vertex of the cone $ \mathcal{C}_m( \lambda_1 - 2 \rho_{n} ^{1+ \mu/4}  e^{-i\nu})  $ belongs to the lower (open) half space of the complex plane if 
  $$ - g^2 \boldsymbol{c} +    2 \rho_{n} ^{1+ \mu/4} \sin(\nu) < 0  . $$ 
This is fulfilled if  
$$
n  >  \log \Big ( \frac{  g^2  \boldsymbol{c} }{   2 \sin(\nu) \rho_0^{1+ \mu/4}  }  \Big ) \frac{1}{ (1 + \mu/4) \log(\rho)  }.   
 $$  
 We fix $n_0 > 0$ to be the smallest integer number satisfying this  inequality. Then   
\begin{align}\label{noes}
n_0  \leq   \log \Big ( \frac{  g^2  \boldsymbol{c} }{   2 \sin(\nu) \rho_0^{1+ \mu/4}  }  \Big ) \frac{1}{ (1 + \mu/4) \log(\rho)  } + 1.    
 \end{align}  
For such $n_0$, the cone $ \mathcal{C}_m( \lambda_1 - 2 \rho_{n_0} ^{1+ \mu/4}  e^{-i\nu})  $ belongs to the lower (open) half space of the complex plane and, therefore,  $  \mathcal{L} $ is contained in the complement of this cone. This allows us to use   \eqref{resmainnn} and estimate
\begin{align}\label{nooo5}
 \Big \| \frac{1}{ H^{\theta} - r   } \Big \| \leq C \boldsymbol{C}^{n_0 +1} \frac{1}{  {\rm  dist} (r, \mathcal{C}_m(\lambda_1))  },
\end{align}
for every $r \in \mathcal{L}$.  
 Eq.\ \eqref{noes} implies that 
 \begin{align}\label{oc1}
\boldsymbol{C}^{n_0} \leq \boldsymbol{C}  \exp \Big [    -    \log \Big(     \frac{  g^2  \boldsymbol{c} }{   2 \sin(\nu) \rho_0^{1+ \mu/4}  }    \Big )   \Big ]^{  - \frac{\log(\boldsymbol{C})}{ (1 + \mu/4) \log(\rho)  }  }  & =   \boldsymbol{C} \Big (  \frac{  2 \sin(\nu) \rho_0^{1+ \mu/4}   }{ g^2  \boldsymbol{c}  } \Big ) ^{   - \frac{\log(\boldsymbol{C})}{ (1 + \mu/4) \log(\rho)  }   } \notag   \\  & \leq   C  g^{- \iota  }, 
\end{align}
where we use that $\boldsymbol{C}  \rho^{  \frac{1}{2}  \iota   (1 + \mu/4)  } \leq 1  $, see  \eqref{dorm2}.  This together with \eqref{resmainnn}
lead us to  
\begin{align}\label{noooo5}
 \Big \| \frac{1}{ H^{\theta} - r   } \Big \| \leq C g^{- \iota  } \frac{1}{  {\rm  dist} (r, \mathcal{C}_m(\lambda_1))  }, \hspace{1cm} \forall r \in \mathcal{L}.  
\end{align}
It is geometrically clear, because ${ \rm Im   } \lambda_1 < - g^2 \boldsymbol{c} < 0 $ - see \eqref{eq:impartres},  that there is a constant $C$ (that depends on $\boldsymbol{\nu}$ and $m$, but not on $g$)  such that, for  every $r \in \mathcal{L}$, 
\begin{align}\label{nooo1}
|  r - \lambda_1 |  \leq C  {\rm  dist} (r, \mathcal{C}_m(\lambda_1)).
\end{align}
Eqs. \eqref{noooo5} and \eqref{nooo1} yield 
\begin{align}\label{nooooo5}
 \Big \| \frac{1}{ H^{\theta} - r   } \Big \| \leq C g^{-  \iota } \frac{1}{  |  r - \lambda_1 |  }. 
\end{align}
Moreover, we observe from \eqref{projj1} that
\begin{align}\label{repetido}
\norm{\overline{P^\theta_1}  \sigma_1 \Psi^\theta_{\lambda_0}   }
= \norm{(1-P^\theta_1)  \sigma_1 P^\theta_0 \varphi_0\otimes \Omega   }
\leq \norm{(1-P_{\varphi_1}\otimes P_\Omega)   \varphi_1\otimes \Omega   } +Cg
 =  Cg .
\end{align}
Inserting $P^\theta_1+ \overline{P^\theta_1}=1$ in   the left hand side of  \eqref{nooo} and using  \eqref{nooooo5}, we find
\begin{align}\label{nooooo5'}
\norm{ \frac{1}{ H^{\theta} - r   }   \sigma_1 \Psi^\theta_{\lambda_0}   } \leq C\left( \frac{1}{|\lambda_1 -r|}  + g\norm{ \frac{1}{ H^{\theta} - r   }} \right).
\end{align}
This together with $\iota \in (0,1)$ and \eqref{nooooo5} yields 
 that  
\begin{align}
\label{nooooo5''}
\norm{ \frac{1}{ H^{\theta} - r   }   \sigma_1 \Psi^\theta_{\lambda_0}   } \leq C \frac{1}{|\lambda_1 -r|}  .
\end{align}
 From \eqref{nooooo5''} and Remark \ref{RRRR}, we obtain 
 \begin{align}\label{noooooo5}
\int_{\mathcal{L}} \mathrm{d}r\, \norm{ \frac{1}{ H^{\theta} - r   }   \sigma_1 \Psi^\theta_{\lambda_0}   }\leq C   \int_{\mathcal{L}} \frac{1}{  |  r - \lambda_1 |  }  \leq  C    |\log{g}|   .
\end{align}
\end{proof}
\begin{lemma}\label{kkk11}
For every bounded  measurable function $h$, there is a constant $C$ such that  for all natural numbers  $n   \in \N $  and  $\theta\in\mathcal S$
\begin{align}\label{nn0mm}
\int_{   \Big ( 
(B_{0}^{(1)} \cup B_{1}^{(1)}) \cap \mathbb{R}   \Big ) \setminus (\lambda_0-\epsilon_n,\lambda_0+ \epsilon_n)} dz |h(z)|  \Big \| \frac{1}{H^{\theta}-z} \sigma_1 
 \Psi_{\lambda_0}^{\theta}  \Big \|  \leq  
 C   |\log g |  \|  h \|_{\infty},
\end{align}
where $\norm{h}_\infty$ denotes the supremum of $|h|$.
\end{lemma}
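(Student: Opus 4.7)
My plan is to split the integration domain into its $B_0^{(1)}$ and $B_1^{(1)}$ pieces and handle them separately. Set $\Xi_0 := (B_0^{(1)} \cap \mathbb{R}) \setminus (\lambda_0 - \epsilon_n, \lambda_0 + \epsilon_n)$ and $\Xi_1 := B_1^{(1)} \cap \mathbb{R} = \mathcal{L}$. Since $\lambda_0 \approx 0$ by \eqref{ground1} while $\mathcal{L} \subset [e_1/2, 3 e_1/2]$ is bounded away from $\lambda_0$, the excluded neighborhood is disjoint from $\mathcal{L}$ for small $g$, so on $\Xi_1$ I simply bound $|h(r)| \le \|h\|_\infty$ and apply Lemma~\ref{malhe} to obtain the contribution $C|\log g|\,\|h\|_\infty$.

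The nontrivial part is $\Xi_0$, for which I would use a scale-matching argument. For each $r \in \Xi_0$, let $n(r) \in \mathbb{N}$ be the smallest natural number with $\epsilon_{n(r)} \le |r - \lambda_0|$; this exists because $r \in \Xi_0$ already supplies the outer $n$ itself as a candidate. By the geometric containment \eqref{caritaputa}, this choice guarantees that $r$ lies outside the cone $\mathcal{C}_m(\lambda_0 - 2 \rho_{n(r)}^{1+\mu/4} e^{-i\nu})$, so Lemma~\ref{kkk1} applied at scale $n(r)$ yields the pointwise bound $\| (H^{\theta}-r)^{-1} \sigma_1 \Psi^{\theta}_{\lambda_0} \| \le C\, \boldsymbol{C}^{\,n(r)+1}/\rho_{n(r)}$.

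The heart of the proof is converting this scale-dependent bound into an integrable function of $|r - \lambda_0|$. Two ingredients suffice: (i) minimality of $n(r)$ together with $\rho_n = \rho_0 \rho^n$ gives $\rho_{n(r)} \ge c\, |r - \lambda_0|^{1/(1 + \mu/4)}$ for a constant $c$ depending only on $\rho,\mu$; and (ii) the strengthened Parameter Constraint $\boldsymbol{C}\,\rho^{\iota(1+\mu/4)/2} \le 1$ from \eqref{dorm2} yields $\boldsymbol{C}^{n(r)} \le C\,\rho_{n(r)}^{-\iota(1 + \mu/4)/2}$. Combining them and using the identity $1/(1 + \mu/4) = 1 - \iota$ from \eqref{dorm222}, I would deduce
\begin{equation*}
\Bigl\| \frac{1}{H^{\theta}-r}\, \sigma_1 \Psi^{\theta}_{\lambda_0} \Bigr\| \le C\, |r - \lambda_0|^{-(1 - \iota/2)},
\end{equation*}
an integrable singularity since $\iota/2 > 0$. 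Integrating over the bounded interval $B_0^{(1)} \cap \mathbb{R}$ then produces a $\Xi_0$-contribution of size $\le C\,\|h\|_\infty$ uniformly in $n$ and $g$, which combined with the $\Xi_1$-contribution yields the claimed bound.

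The main technical obstacle is the exponent bookkeeping in step (iii): one must confirm that, with the specific $\iota$ fixed in \eqref{dorm222} and the strengthened constraint \eqref{dorm2}, the compounded exponent on $|r - \lambda_0|$ lands strictly above $-1$ so that integrability near $\lambda_0$ survives. Everything else (the split into $\Xi_0 \cup \Xi_1$, the geometric inclusion, and the application of Lemma~\ref{malhe}) is essentially routine.
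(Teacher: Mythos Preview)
Your proposal is correct and is essentially the paper's own argument recast pointwise: the paper decomposes $\Xi_0$ into dyadic annuli $I_l=[\lambda_0-\epsilon_{l-1},\lambda_0+\epsilon_{l-1}]\setminus(\lambda_0-\epsilon_l,\lambda_0+\epsilon_l)$, applies Lemma~\ref{kkk1} at scale $l$ on $I_l$, and sums the resulting geometric series $\sum_l \boldsymbol C^{l+1}\epsilon_{l-1}/\rho_l$, while you choose the optimal scale $n(r)$ pointwise and integrate the resulting $|r-\lambda_0|^{-(1-\iota/2)}$ singularity; your $n(r)$ is constant $(=l)$ on each $I_l$, so the two computations coincide. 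Both conclude with Lemma~\ref{malhe} for the $\Xi_1=\mathcal L$ contribution, which is the sole source of the $|\log g|$.
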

\begin{proof}
We set  
\begin{align}
\tilde h(z) = |h(z)|  \Big \| \frac{1}{H^{\theta}-z} \sigma_1 
 \Psi_{\lambda_0}^{\theta}   \Big \| .
\end{align}
  For any natural number $ l\geq 2$,  we set     $ I_l :=B_{0}^{(1)}  \cap [\lambda_0- \epsilon_{l-1},\lambda_0+ \epsilon_{l-1}] 
\setminus  (\lambda_0-\epsilon_{l },\lambda_0+ \epsilon_{l})   $. We define 
$  I_1:=  \mathbb{R}\cap B^{(1)}_0 \setminus  I_2 $. 
Then, we compute 
\begin{align}
\label{split1000}
\int_{( \mathbb{R} \cap B_{0}^{(1)}) \setminus (\lambda_0-\epsilon_n,\lambda_0+ \epsilon_n)} dz \, \tilde h(z)  
 &   =  \sum_{ l = 1   }^{n} \int_{I_l}dz \,\tilde h(z). 
\end{align}
Using Definition \ref{epsilonn}  and  Lemma \ref{kkk1}, we obtain that there is a constant $C$ such that
\begin{align}
\label{noop1}
 \sum_{ l = 1   }^n \int_{I_l}dz \,\tilde h(z)  & \leq C \sum_{ l =2    }^n 
 \boldsymbol{C}^{l+1} \frac{\epsilon_{l-1}}{\rho_l} \|  h\|_{\infty} + C  \|  h\|_{\infty}  \\ \notag & \leq C   \frac{1}{\rho}    \sum_{ l =2    }^n 
 \boldsymbol{C}^{l+1} \rho_{l-1}^{\mu/4}   \|  h\|_{\infty}  + C  \|  h\|_{\infty}
  \leq  C \|  h\|_{\infty},
\end{align}
where we use \eqref{dorm2}. Eq. \eqref{noop1} and Lemma \ref{malhe} imply \eqref{nn0mm}. 
\end{proof}
\begin{lemma}\label{kkk111}
For every bounded  measurable function $h$, there is a constant $C$ such that  for all natural numbers  $n \in \mathbb{N}$   and  $\theta\in\mathcal S$
\begin{align}\label{nn0mmbb}
\int_{ \Gamma_c(\epsilon_n) } dz |h(z)|  \Big \| \frac{1}{H^{\theta}-z} \sigma_1 
 \Psi_{\lambda_0}^{\theta}  \Big \|  \leq   \rho^{\mu/8}_n 
 C  \sup_{z \in \Gamma_c(\epsilon_n)} \{|h(z)| \}. 
\end{align}
\end{lemma}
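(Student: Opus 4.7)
The plan is to combine three simple observations: a length estimate on the contour, a pointwise resolvent bound coming from Lemma \ref{kkk1}, and the decay condition \eqref{dorm2} on the auxiliary parameters that absorbs the growing factor $\boldsymbol{C}^{n+1}$ against small powers of $\rho_n$.

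First I would verify that $\Gamma_c(\epsilon_n)$ lies inside the region where Lemma \ref{kkk1} applies. By Definition \ref{epsilonn}, $\Gamma_c(\epsilon_n)\subset D(\lambda_0,\epsilon_n)\subset B_0^{(1)}$, and because $\Gamma_c(\epsilon_n)$ sits in the (closed) upper half-plane while the cone $\mathcal C_m(\lambda_0-2\rho_n^{1+\mu/4}e^{-i\nu})$ has its vertex strictly below the real line and opens downward in the direction $e^{-i\nu}$, the semicircle is disjoint from this cone. Hence Lemma \ref{kkk1} yields
\begin{equation*}
\Big\|\frac{1}{H^\theta-z}\sigma_1\Psi^\theta_{\lambda_0}\Big\|\leq C\,\boldsymbol{C}^{n+1}\,\frac{1}{\rho_n},\qquad z\in\Gamma_c(\epsilon_n).
\end{equation*}

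Next I would estimate the integral trivially by the supremum of $|h|$ on the contour times the length. The length of the semicircle $\Gamma_c(\epsilon_n)$ is $\pi\epsilon_n=20\pi\,\rho_n^{1+\mu/4}$, so
\begin{equation*}
\int_{\Gamma_c(\epsilon_n)}dz\,|h(z)|\,\Big\|\frac{1}{H^\theta-z}\sigma_1\Psi^\theta_{\lambda_0}\Big\|\leq C\,\boldsymbol{C}^{n+1}\,\rho_n^{\mu/4}\,\sup_{z\in\Gamma_c(\epsilon_n)}|h(z)|.
\end{equation*}

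The only remaining point is to trade one factor of $\rho_n^{\mu/8}$ for the diverging constant $\boldsymbol{C}^{n+1}$. Writing $\rho_n^{\mu/4}=\rho_n^{\mu/8}\cdot\rho_n^{\mu/8}$ and using $\rho_n=\rho_0\rho^n$, one has
\begin{equation*}
\boldsymbol{C}^{n+1}\rho_n^{\mu/8}=\boldsymbol{C}\,\rho_0^{\mu/8}\,(\boldsymbol{C}\rho^{\mu/8})^n,
\end{equation*}
and from the eighth root of $\boldsymbol{C}^8\rho^\mu\leq 1/4$ in \eqref{dorm2} we obtain $\boldsymbol{C}\rho^{\mu/8}\leq(1/4)^{1/8}<1$. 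Thus $\boldsymbol{C}^{n+1}\rho_n^{\mu/8}$ is bounded by a constant independent of $n$, and the previous estimate becomes the desired bound $C\rho_n^{\mu/8}\sup_{\Gamma_c(\epsilon_n)}|h|$. There is no real obstacle here beyond checking the geometric inclusion and bookkeeping the exponents — this is the contour analogue of Lemma \ref{kkk11}, but simpler because the resolvent stays uniformly controlled by $\boldsymbol{C}^{n+1}/\rho_n$ on the whole semicircle and no telescoping over dyadic annuli is required.
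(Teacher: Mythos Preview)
Your approach is exactly the paper's: Lemma~\ref{kkk1} combined with~\eqref{dorm2}. The length/supremum bookkeeping and the splitting $\rho_n^{\mu/4}=\rho_n^{\mu/8}\cdot\rho_n^{\mu/8}$ together with $\boldsymbol C\rho^{\mu/8}\leq(1/4)^{1/8}<1$ are correct.

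There is one slip in your geometric justification. The vertex $\lambda_0-2\rho_n^{1+\mu/4}e^{-i\nu}$ has imaginary part $2\rho_n^{1+\mu/4}\sin\nu>0$, so it lies \emph{above} the real axis, not below; consequently the cone does meet the closed upper half-plane near its vertex, and your ``upper half-plane versus lower half-plane'' disjointness argument does not work as stated. This does not affect the proof, because the needed inclusion $\Gamma_c(\epsilon_n)\subset B_0^{(1)}\setminus\mathcal C_m(\lambda_0^{(n)}-2\rho_n^{1+\mu/4}e^{-i\nu})$ is already recorded at the end of Definition~\ref{epsilonn} (via~\eqref{carita}: the part of the cone lying in the relevant half-plane is trapped inside the \emph{open} disc $D(\lambda_0,\epsilon_n)$, whereas $\Gamma_c(\epsilon_n)$ sits on its boundary). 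Simply cite that inclusion instead of arguing it yourself.
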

\begin{proof}
This is a direct consequence of Lemma \ref{kkk1}  and \eqref{dorm2}. 
\end{proof}
\begin{lemma}\label{Lfaltap}
There is a constant $C$ such that (for $s > 0$) 
\begin{align}\label{falta1}
 \Big | \int_{\Gamma_d (R) \cup \Gamma_{-}(\epsilon_n, R) \setminus \Big ( \big ( B_0^{(1)}  \cup B_1^{(1)} \big ) \cap \mathbb{R} \Big ) }   \mathrm{d}z\, e^{-isz}
 \left\langle   \overline{P_1^{\overline \theta}} \sigma_1 \Psi^{\overline \theta}_{\lambda_0},\left( H^\theta-z
 \right)^{-1} \sigma_1 \Psi^\theta_{\lambda_0}\right\rangle \Big | \leq C g \frac{1}{s}. 
\end{align}
\end{lemma}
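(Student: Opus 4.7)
The strategy rests on two simple facts: first, the contour of integration is entirely contained in the region $A$ of Definition \ref{def:regionsAB}, where the resolvent satisfies the clean bound \eqref{resA}; second, the vector $\overline{P_1^{\overline\theta}}\sigma_1\Psi_{\lambda_0}^{\overline\theta}$ is of order $g$, by the same argument that yields \eqref{repetido}. The decay factor $1/s$ will be extracted by integrating by parts in $z$ against $e^{-isz}$. To verify that the contour lies in $A$, recall that $e_0=0$ and $\delta=e_1$, so removing $(B_0^{(1)}\cup B_1^{(1)})\cap\mathbb R=[-e_1/2,\,3e_1/2]$ from the real segment $[-R,R]$ leaves $(-R,-e_1/2)\cup(3e_1/2,R)\subset A_1\cup A_3$, while the two slanted branches of $\Gamma_d(R)$ lie in $A_1$ and $A_3$ for $R$ sufficiently large (a direct check using $\sin(\nu/2)\cos(\nu/4)>\sin(\nu/4)$).

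Introduce
\begin{equation*}
F(z):=\left\langle \overline{P_1^{\overline\theta}}\sigma_1\Psi_{\lambda_0}^{\overline\theta},\,(H^\theta-z)^{-1}\sigma_1\Psi_{\lambda_0}^\theta\right\rangle,
\end{equation*}
which is analytic in a neighborhood of the contour. Combining \eqref{resA} with $\|\overline{P_1^{\overline\theta}}\sigma_1\Psi_{\lambda_0}^{\overline\theta}\|\le Cg$ (the analogue of \eqref{repetido} for $\overline\theta$) gives the pointwise bounds $|F(z)|\le Cg/|z-e_1|$ and $|F'(z)|\le Cg/|z-e_1|^2$, where $F'(z)=\langle\overline{P_1^{\overline\theta}}\sigma_1\Psi_{\lambda_0}^{\overline\theta},(H^\theta-z)^{-2}\sigma_1\Psi_{\lambda_0}^\theta\rangle$.

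Now integrate by parts, writing $e^{-isz}=-(is)^{-1}(d/dz)e^{-isz}$. The boundary terms at $|z|\to\infty$ along the two branches of $\Gamma_d(R)$ vanish because $|e^{-isz}|=e^{s\,\Im z}\to 0$ for $s>0$ and $\Im z\to-\infty$, while $|F(z)|\to 0$. The corner contributions at $z=\pm R$ (where $\Gamma_-$ meets $\Gamma_d$) cancel pairwise since $F$ is continuous there. The only surviving boundary terms occur at $z=-e_1/2$ and $z=3e_1/2$ — the two edges of the removed region — and are bounded by $(1/s)\cdot|F(\pm)|\le Cg/s$. The remaining interior piece is controlled by
\begin{equation*}
\frac{1}{s}\int_{\Gamma}|F'(z)|\,|dz|\;\le\;\frac{Cg}{s}\int_{\Gamma}\frac{|dz|}{|z-e_1|^{2}}\;\le\;\frac{Cg}{s},
\end{equation*}
where the last integral converges since $|z-e_1|\ge e_1/2$ on the real segments and $|z-e_1|\sim u$ on the $\Gamma_d(R)$ branches parametrized by $u\ge 0$. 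Summing the boundary and interior contributions yields $Cg/s$ as required.

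The only real bookkeeping obstacle is keeping track of the various endpoints of the piecewise contour after removing $(B_0^{(1)}\cup B_1^{(1)})\cap\mathbb R$ and confirming the cancellation of the artificial corners at $\pm R$; the analytic substance is minimal once the uniform resolvent bound \eqref{resA} is in hand throughout $A$.
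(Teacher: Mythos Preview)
Your proof is correct and follows essentially the same approach as the paper: integrate by parts against $e^{-isz}$ to extract the factor $1/s$, control the resulting $(H^\theta-z)^{-2}$ term and the boundary contributions at $e_0-\delta/2=-e_1/2$ and $e_1+\delta/2=3e_1/2$ via the resolvent bound \eqref{resA} on $A$, and pick up the factor $g$ from $\|\overline{P_1^{\overline\theta}}\sigma_1\Psi_{\lambda_0}^{\overline\theta}\|\le Cg$. You supply more detail than the paper (the verification that the contour lies in $A$, the cancellation at the corners $\pm R$, and the vanishing at infinity), but the argument is the same.
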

\begin{proof}
After an integration by  parts, we obtain 
\begin{align}\label{falta2}
\notag  \Big | & \int_{\Gamma_d (R) \cup \Gamma_{-}(\epsilon_n, R) \setminus \Big ( \big ( B_0^{(1)}  \cup B_1^{(1)} \big ) \cap \mathbb{R} \Big ) }   \mathrm{d}z\, e^{-isz}
 \left\langle   \overline{P_1^{\overline \theta}} \sigma_1 \Psi^{\overline \theta}_{\lambda_0},\left( H^\theta-z
 \right)^{-1} \sigma_1 \Psi^\theta_{\lambda_0}\right\rangle \Big | \\  \leq & \frac{1}{s}  \notag 
 \int_{\Gamma_d (R) \cup \Gamma_{-}(\epsilon_n, R) \setminus \Big ( \big ( B_0^{(1)}  \cup B_1^{(1)} \big ) \cap \mathbb{R} \Big ) }   \mathrm{d}z\, \Big |  
 \left\langle   \overline{P_1^{\overline \theta}} \sigma_1 \Psi^{\overline \theta}_{\lambda_0},\left( H^\theta-z
 \right)^{-2} \sigma_1 \Psi^\theta_{\lambda_0}\right\rangle \Big |
 \\ & + \frac{1}{s}  \sum_{z \in \{ e_1+ \delta/2, e_0 - \delta/2  \} } \Big | \left\langle   \overline{P_1^{\overline \theta}} \sigma_1 \Psi^{\overline \theta}_{\lambda_0},\left( H^\theta- z
 \right)^{-1} \sigma_1 \Psi^\theta_{\lambda_0}\right\rangle \Big | .  
\end{align}
Now, we  recall \eqref{repetido}  
\begin{align}\label{buena1}
\norm{ \overline{P_1^{\overline \theta}} \sigma_1 \Psi^{\overline \theta}_{\lambda_0} }
= \norm{ \overline{P_1^{\overline \theta}} \sigma_1 \Psi^{\overline \theta}_{\lambda_0} - 
  \overline{P_{\varphi_1}}\otimes P_{\Omega}  \sigma_1  \varphi_0 \otimes \Omega   }\leq Cg, 
\end{align}
where we use Eq. \eqref{projj1} and  $\sigma_1 \varphi_0 = \varphi_1$. 
Eq \eqref{falta1} is a direct consequence of \eqref{falta2}, \eqref{buena1} and 
\eqref{resA}.
\end{proof}
\begin{lemma}\label{ching1}
For  real numbers $0<q  \leq e^{-1}  <1<Q<\infty$ and $\theta \in \mathcal S$, the term (recall \eqref{eq:G-def})
\begin{align}
\label{eq:thl1t}
R_1(q, Q  )  :  =  &\int^{  Q}_{ q} \mathrm{d}s  \int dr  \,G(r)
 e^{is(r+\lambda_0)}  \\ \notag & \times \int_{\Gamma(\epsilon_n, R)}  \mathrm{d}z\, e^{-isz}
 \left\langle   \overline{P_1^{\overline \theta}} \sigma_1 \Psi^{\overline \theta}_{\lambda_0},\left( H^\theta-z
 \right)^{-1} \sigma_1 \Psi^\theta_{\lambda_0}\right\rangle 
\end{align}
 fulfills
\begin{align}
\label{eq:thl11t}
|R_1(q, Q  ) | \leq & C g\left(  |\log(q)|+   |\log (g)|  \right) .  
\end{align}
Notice that $ R_1(q, Q  ) $ does not depend on $  \epsilon_n $ and $R$, 
 because a change in $\epsilon_n$ and $R$ implies a change in the contour of integration of the analytic function above. 
\end{lemma}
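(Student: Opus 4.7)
The plan is to decompose the contour $\Gamma(\epsilon_n,R)$ into three pieces and, exploiting the stated independence of $R_1(q,Q)$ from $\epsilon_n$ and $R$, to pass to the limit $n\to\infty$ at the end so that the upper-half-plane semicircle $\Gamma_c(\epsilon_n)$ contributes nothing. Specifically, I will write $\Gamma(\epsilon_n,R)=\Gamma_A\cup\Gamma_B\cup\Gamma_C$ with
\begin{align*}
\Gamma_A&:=\Gamma_d(R)\cup\Bigl(\Gamma_-(\epsilon_n,R)\setminus\bigl((B_0^{(1)}\cup B_1^{(1)})\cap\mathbb{R}\bigr)\Bigr),\\
\Gamma_B&:=\bigl((B_0^{(1)}\cup B_1^{(1)})\cap\mathbb{R}\bigr)\setminus(\lambda_0-\epsilon_n,\lambda_0+\epsilon_n),\\
\Gamma_C&:=\Gamma_c(\epsilon_n),
\end{align*}
denote the individual contour integrals by $F_X(s):=\int_{\Gamma_X}\mathrm{d}z\,e^{-isz}\langle\overline{P_1^{\overline\theta}}\sigma_1\Psi^{\overline\theta}_{\lambda_0},(H^\theta-z)^{-1}\sigma_1\Psi^\theta_{\lambda_0}\rangle$, and split $R_1(q,Q)=\sum_{X\in\{A,B,C\}}\int_q^Q\mathrm{d}s\int\mathrm{d}r\,G(r)e^{is(r+\lambda_0)}F_X(s)$.

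The first step is to assemble $s$-uniform estimates on each $F_X(s)$. Using \eqref{buena1} to extract the factor $\|\overline{P_1^{\overline\theta}}\sigma_1\Psi^{\overline\theta}_{\lambda_0}\|\leq Cg$ from the inner product, and combining with the estimates of Section \ref{kin}, I obtain $|F_A(s)|\leq Cg/s$ from Lemma \ref{Lfaltap}, $|F_B(s)|\leq Cg|\log g|$ from Lemma \ref{kkk11} applied with $|h|=1$ together with Lemma \ref{malhe} (noting $|e^{-isz}|=1$ on the real axis), and $|F_C(s)|\leq Cg\rho_n^{\mu/8}e^{s\epsilon_n}$ from Lemma \ref{kkk111}.

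The main obstacle is then uniformity in $Q$, since a direct $\mathrm{d}s$-integration of $|F_A|+|F_B|$ over $[q,Q]$ grows in $Q$. The key observation to overcome this is that $G$ lies in $C_c^\infty(\mathbb{R})$ with support bounded away from the origin (because $h,l\in\mathfrak{h}_0$ vanish on a neighborhood of $0\in\mathbb{R}^3$), and consequently for every $N\in\mathbb{N}$ there is a constant $C_N$ with $|\int\mathrm{d}r\,G(r)e^{is(r+\lambda_0)}|\leq C_N(1+|s|)^{-N}$. Splitting $\int_q^Q=\int_q^1+\int_1^Q$, on $[1,Q]$ I combine the rapid-decay bound (with say $N=2$) with the $s$-uniform bounds on $|F_X(s)|$ to control the contribution by a $Q$-independent quantity of order $Cg|\log g|+Cg\rho_n^{\mu/8}$; on $[q,1]$ I bound the $\mathrm{d}r$-integral trivially by $\|G\|_1$ and integrate $\int_q^1(Cg/s+Cg|\log g|+Cg\rho_n^{\mu/8}e^{\epsilon_n})\mathrm{d}s\leq Cg|\log q|+Cg|\log g|+Cg\rho_n^{\mu/8}$.

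Finally, since $R_1(q,Q)$ is $n$-independent, passing $n\to\infty$ eliminates the $\rho_n^{\mu/8}$-terms (and renders $e^{s\epsilon_n}\to 1$ uniformly on $[q,Q]$ for each finite $Q$), leaving $|R_1(q,Q)|\leq Cg(|\log q|+|\log g|)$, as claimed. The $|\log q|$ in the final bound traces to the integral $\int_q^1\mathrm{d}s/s$ arising from the $1/s$ decay of $|F_A|$, while the $|\log g|$ traces through Lemma \ref{malhe} to the Fermi-golden-rule lower bound on $|\Im\lambda_1|\gtrsim g^2\boldsymbol{c}$ from \eqref{eq:impartres}, through the fact that the integral of $1/|z-\lambda_1|$ over a bounded real interval is of order $|\log|\Im\lambda_1||=O(|\log g|)$ as quantified in Remark \ref{RRRR}.
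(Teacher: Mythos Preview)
Your proposal is correct and follows essentially the same approach as the paper: decompose the contour into the far region handled by Lemma~\ref{Lfaltap}, the near-real region handled by Lemma~\ref{kkk11}, and the semicircle handled by Lemma~\ref{kkk111}; extract the factor $g$ via \eqref{buena1}; use the rapid decay of $\int G(r)e^{is(r+\lambda_0)}\,dr$ to control large $s$; and finally pass $n\to\infty$ to kill the $\rho_n^{\mu/8}$ contribution. The only cosmetic difference is that the paper keeps the single weight $1/(1+s^2)$ on all of $[q,Q]$ rather than splitting at $s=1$, and bounds $e^{s\epsilon_n}\le e^{Q\epsilon_n}$ before letting $n\to\infty$; your sentence calling the $\Gamma_C$-contribution on $[1,Q]$ ``$Q$-independent of order $Cg\rho_n^{\mu/8}$'' is slightly imprecise for this reason, but you correctly repair it in the final step.
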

\begin{proof}
First, we    recall  that (see \eqref{buena1}) 
\begin{align}\label{buena}
\norm{ \overline{P_1^{\overline \theta}} \sigma_1 \Psi^{\overline \theta}_{\lambda_0} }
\leq Cg.
\end{align}
A two-fold integration by parts together with the fact $G\in\mathit{C}_c^\infty(\R\setminus\{0\},\C)$ (recall \eqref{eq:G-def})  shows  that there is a constant $C$ such that
\begin{align}
\label{intparts101}
\left| \int dr  \,G(r)
 e^{is(r+\lambda_0)}\right| \leq \frac{C}{1+s^2}, \qquad \forall s\in\R .
\end{align}
 Lemmas \ref{kkk11}, \ref{kkk111} and \ref{Lfaltap}, an Eq. \eqref{buena} imply that
\begin{align}\label{sky0}
|R_1(q, Q  )| \leq C g  \int_q^Q  ds \frac{1}{s^2 +1} \Big (\frac{1}{s}  + 
e^{\epsilon_n Q}  \rho^{\mu/8}_n  
 +  |\log g|    \Big ).
\end{align}
Since $  R_1(q, Q  ) $ does not depend on $n$, we can take $n$  to infinity and obtain the bound 
\begin{align}\label{sky1}
|R_1(q, Q  )| \leq C g  \int_q^Q  ds \frac{1}{s^2 +1} \Big (\frac{1}{s}  +  |\log g|      \Big ).
\end{align}
Eq. \eqref{sky1} and the condition $ 0 < q < e^{-1} $  imply \eqref{eq:thl11t}. Notice that 
 \begin{align}\label{sky2}
 \int_q^Q  ds \frac{1}{(s^2 +1) s } \leq \int_q^{1} \frac{1}{s} + \int_1^{\infty}  \frac{1}{s^2 + 1} \leq C | \log(q)|,  
 \end{align}
 since $  \int_1^{\infty}  \frac{1}{s^2 + 1}  $ is a constant and $ |\log(q)| \geq 1$. 
\end{proof}
\begin{lemma}\label{ching2}
For  real numbers $0<q<1<Q<\infty$ and $\theta \in \mathcal S$,  the term 
\begin{align}
\label{eq:thl2t}
P_1(q, Q  )  =  &\int^{  Q}_{ q} \mathrm{d}s  \int dr  \,G(r)
 e^{is(r+\lambda_0)}  \\ \notag & \times \int_{\Gamma(\epsilon_n, R)}  \mathrm{d}z\, e^{-isz}
 \left\langle  P_1^{\overline \theta} \sigma_1 \Psi^{\overline \theta}_{\lambda_0},\left( H^\theta-z
 \right)^{-1} \sigma_1 \Psi^\theta_{\lambda_0}\right\rangle.
\end{align}
 fulfills
\begin{align}
\label{eq:thl12t}
P_1(q, Q  ) = - 2 \pi  \int \mathrm{d}r     \,G(r)
 \frac{  1  }{  r + \lambda_0 - \lambda_1 } 
 \left\langle  \sigma_1 \Psi^{\overline \theta}_{\lambda_0}, P_1^{ \theta} \sigma_1 \Psi^\theta_{\lambda_0}\right\rangle  + R_2(q, Q),   
\end{align}
where 
\begin{align}
|R_2(q, Q)| \leq C \Big ( q + \frac{1}{Qg^2}\Big )  .
\end{align}
Notice that $ P_1(q, Q  )$ does not depend on $\epsilon_n$ and $R$, because a change in $\epsilon_n$ and $R$ implies a change in the contour of integration of the analytic function above.
\end{lemma}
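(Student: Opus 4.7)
The plan is to exploit the fact that $P_1^\theta$ is a rank-one Riesz projection onto the eigenspace of $H^\theta$ for the eigenvalue $\lambda_1$, so that $P_1^\theta (H^\theta - z)^{-1} = (\lambda_1 - z)^{-1} P_1^\theta$. The Riesz-integral formula for the projections combined with $(H^\theta)^* = H^{\overline\theta}$ gives $(P_1^{\overline\theta})^* = P_1^\theta$, so that the projection can be shifted across the inner product:
\begin{align}
\left\langle P_1^{\overline\theta}\sigma_1\Psi^{\overline\theta}_{\lambda_0}, (H^\theta - z)^{-1}\sigma_1\Psi^\theta_{\lambda_0}\right\rangle
= \frac{1}{\lambda_1 - z}\left\langle \sigma_1\Psi^{\overline\theta}_{\lambda_0}, P_1^\theta \sigma_1\Psi^\theta_{\lambda_0}\right\rangle.
\end{align}
The scalar prefactor is bounded uniformly in $g$ thanks to \eqref{projj2} and \eqref{projj1}, and it pulls out of the $s$-, $r$-, and $z$-integrals.

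For each fixed $s>0$ I would evaluate the remaining $z$-integral $\int_{\Gamma(\epsilon_n,R)} dz\, e^{-isz}/(\lambda_1 - z)$ by residues. The only singularity lies at $\lambda_1$, which by \eqref{eq:impartres} satisfies $\Im\lambda_1 \le -g^2\boldsymbol{c} < 0$ and hence sits strictly below the contour. Since $e^{-isz}$ decays exponentially in the lower half-plane for $s>0$, the contour can be closed by a large arc in the lower half-plane whose contribution vanishes in the limit. The closed loop winds clockwise around $\lambda_1$ and the residue of the integrand is $-e^{-is\lambda_1}$, so the $z$-integral equals $2\pi i\, e^{-is\lambda_1}$; this also explains the asserted $\epsilon_n$- and $R$-independence of $P_1(q,Q)$. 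After substituting and invoking Fubini (justified by compact support of $G$ and finiteness of $[q,Q]$), the $ds$-integration is elementary and produces
\begin{align}
P_1(q,Q) = 2\pi \left\langle \sigma_1\Psi^{\overline\theta}_{\lambda_0}, P_1^\theta \sigma_1\Psi^\theta_{\lambda_0}\right\rangle \int dr\, \frac{G(r)\bigl[e^{iQ(r+\lambda_0-\lambda_1)} - e^{iq(r+\lambda_0-\lambda_1)}\bigr]}{r+\lambda_0-\lambda_1}.
\end{align}

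Isolating the principal term by replacing $e^{iq(\cdot)}$ with $1$ and $e^{iQ(\cdot)}$ with $0$ produces $-2\pi\langle\cdots\rangle\int dr\, G(r)/(r+\lambda_0-\lambda_1)$, matching the stated formula. The remainder splits into two pieces, which I would bound separately. For the $q$-piece, the elementary estimate $|1-e^{iqx}| \le Cq|x|$ (valid on the bounded support of $G$ for $q\le 1$) cancels the denominator exactly and yields a bound $Cq \int|G| \le Cq$. For the $Q$-piece, a direct modulus estimate gives only $e^{-Qg^2\boldsymbol c}|\log g|$, which is too weak when $Qg^2$ is small. I would instead integrate by parts once in $r$ using $e^{iQ(\cdot)} = (iQ)^{-1}\partial_r e^{iQ(\cdot)}$; compact support of $G \in \mathit{C}_c^\infty(\R\setminus\{0\},\C)$ eliminates boundary terms and leaves
\begin{align}
-\frac{1}{iQ}\int dr\, e^{iQ(r+\lambda_0-\lambda_1)}\left[\frac{G'(r)}{r+\lambda_0-\lambda_1} - \frac{G(r)}{(r+\lambda_0-\lambda_1)^2}\right].
\end{align}
Taking moduli and invoking Remark \ref{RRRR} (which controls $\int dr/|r-\lambda_1|^{1+x}$ for $x\in\{0,1\}$ by $C|\log g|$ and $Cg^{-2}$ respectively) bounds this by $C(|\log g| + g^{-2})/Q \le C/(Qg^2)$ for small $g$.

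The main technical step is the integration by parts for the $Q$-piece: the crude modulus bound is insufficient, and one must exploit the smoothness and compact support of $G$ to trade a factor $Q^{-1}$ for a stronger singularity at $\lambda_1$, which is then absorbed by the sharp $g^{-2}$ bound of Remark \ref{RRRR}. Everything else reduces to residue calculus plus standard bookkeeping.
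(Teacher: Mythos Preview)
Your proposal is correct and follows essentially the same approach as the paper: both proofs move $P_1^{\overline\theta}$ across the inner product to reduce the resolvent matrix element to the scalar $(\lambda_1-z)^{-1}$, evaluate the contour integral by residues to obtain $2\pi i\,e^{-is\lambda_1}$, perform the elementary $s$-integration, and then estimate the $q$-piece via $|1-e^{iq(\cdot)}|\le q|r+\lambda_0-\lambda_1|$ and the $Q$-piece via a single integration by parts in $r$ combined with Remark~\ref{RRRR}. Your discussion even anticipates the paper's key observation that the naive modulus bound on the $Q$-piece is insufficient and that integration by parts is needed to gain the factor $Q^{-1}$ at the cost of an extra $g^{-2}$.
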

\begin{proof}
We have that 
\begin{align}
P_1(q, Q  )   =  &\int^{  Q}_{ q} \mathrm{d}s  \int dr  \,G(r)
 e^{is(r+\lambda_0)}   \left\langle  \sigma_1 \Psi^{\overline \theta}_{\lambda_0}, P_1^{ \theta} \sigma_1 \Psi^\theta_{\lambda_0}\right\rangle   \int_{\Gamma(\epsilon_n, R)}  \mathrm{d}z\, \frac{e^{-isz}}{\lambda_1 -z}. 
 \end{align}
 The residue theorem together with methods of complex analysis provides 
 \begin{align}
 \int_{\Gamma(\epsilon_n, R)}  \mathrm{d}z\, \frac{e^{-isz}}{\lambda_1 -z}=2\pi i e^{-is\lambda_1},
 \end{align}
and hence, we  obtain   
 \begin{align}
 \label{finp1}
&P_1(q, Q  )  =       2 \pi i \int^{  Q}_{ q} \mathrm{d}s  \int dr  \,G(r)
 e^{is(r+\lambda_0 - \lambda_1)} 
 \left\langle  \sigma_1 \Psi^{\overline \theta}_{\lambda_0}, P_1^{ \theta} \sigma_1 \Psi^\theta_{\lambda_0}\right\rangle  
\notag \\  
&=   
 2 \pi  \int \mathrm{d}r     \,G(r)
( e^{iQ( r+\lambda_0 - \lambda_1)} - e^{iq(r+\lambda_0 - \lambda_1)}   ) \frac{   1 }{  r + \lambda_0 - \lambda_1 } 
 \left\langle  \sigma_1 \Psi^{\overline \theta}_{\lambda_0}, P_1^{ \theta} \sigma_1 \Psi^\theta_{\lambda_0}\right\rangle 
\notag \\
&  = -2 \pi  \int \mathrm{d}r     \,G(r)
 \frac{  1  }{  r + \lambda_0 - \lambda_1 } 
 \left\langle  \sigma_1 \Psi^{\overline \theta}_{\lambda_0}, P_1^{ \theta} \sigma_1 \Psi^\theta_{\lambda_0}\right\rangle +r_1(Q)+ r_2(q) , 
\end{align}
where 
\begin{align}\label{PF1}
r_1(Q):=2 \pi  \int \mathrm{d}r     \,G(r)
 e^{iQ( r+\lambda_0 - \lambda_1)} \frac{   1 }{  r + \lambda_0 - \lambda_1 } 
 \left\langle  \sigma_1 \Psi^{\overline \theta}_{\lambda_0}, P_1^{ \theta} \sigma_1 \Psi^\theta_{\lambda_0}\right\rangle 
\end{align}
and 
\begin{align}\label{PF2}
r_2(q) := 2 \pi  \int \mathrm{d}r     \,G(r)
( 1 - e^{iq(r+\lambda_0 - \lambda_1)}   ) \frac{   1 }{  r + \lambda_0 - \lambda_1 } 
 \left\langle  \sigma_1 \Psi^{\overline \theta}_{\lambda_0}, P_1^{ \theta} \sigma_1 \Psi^\theta_{\lambda_0}\right\rangle .
\end{align}
 It follows from
\begin{align}
\label{contargq}
| 1  - e^{iq(r+\lambda_0 - \lambda_1)}| \leq | q(r+\lambda_0 - \lambda_1)   |,
\end{align}
 that there is a constant $C$ such that $|r_2(q) | \leq C q$. Applying the integration by parts formula in Eq.\ \eqref{PF1}, we obtain a factor $\frac{1}{Q}$  and the derivative of $  G(r)   \frac{1}{ (r + \lambda_0 - \lambda_1) } $. 
 We obtain 
 \begin{align}\label{PF111}
|r_1(Q)| & \leq C\frac{1}{Q}   \int \mathrm{d}r   \Big (  \,|G(r)|
  \frac{   1 }{ | r + \lambda_0 - \lambda_1 |^2 } +    \,|\frac{d}{dr}G(r)|
  \frac{   1 }{ | r + \lambda_0 - \lambda_1 | }  \Big )
| \left\langle  \sigma_1 \Psi^{\overline \theta}_{\lambda_0}, P_1^{ \theta} \sigma_1 \Psi^\theta_{\lambda_0}\right\rangle | 
\notag \\
 & \leq C \frac{1}{Q }( \frac{1}{g^2}  + |\log(g)| ) \leq  C 
\frac{1}{Q g^2 } ,
\end{align}
 where we use \eqref{rodos1}, with $x = 1$ and $x= 0$, and $r + \lambda_0$ instead of $r$. 
\end{proof}
\begin{lemma}\label{ching2tt}
For  real numbers $0<q<1<Q<\infty$ and $\theta \in \mathcal S$, we define the term 
\begin{align}
\label{eq:thl2ttt}
\widetilde P_1(q, Q  )  : =  &\int^{  Q}_{ q} \mathrm{d}s  \int dr  \,G(r)
 e^{is(r -  \lambda_0)}  \\ \notag & \times \int_{ \widetilde \Gamma(\epsilon_n, R)}  \mathrm{d}z\, e^{ isz}
 \left\langle  P_1^{ \theta} \sigma_1 \Psi^{ \theta}_{\lambda_0},\left( H^{\overline \theta}-z
 \right)^{-1} \sigma_1 \Psi^{\overline \theta}_{\lambda_0}\right\rangle,
\end{align}
where $  \widetilde \Gamma(\epsilon_n, R) $ is a positively oriented  curve  obtained by conjugating the elements of $  \Gamma(\epsilon_n, R) $.
It follows that
\begin{align}
\label{eq:thl12ttt}
\widetilde P_1(q, Q  ) =   2 \pi  \int \mathrm{d}r     \,G(r)
 \frac{  1  }{  r -  \lambda_0 + \overline{\lambda_1} } 
 \left\langle  P_1^{ \theta} \sigma_1 \Psi^\theta_{\lambda_0},  \sigma_1 \Psi^{\overline \theta}_{\lambda_0}\right\rangle   +  \widetilde R_2(q, Q),   
\end{align}
where
\begin{align}
|\widetilde R_2(q, Q)| \leq C   \Big ( q + \frac{1}{Qg^2}\Big ) .
\end{align}
\end{lemma}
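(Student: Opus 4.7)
The plan is to mirror the proof of Lemma~\ref{ching2}, with the two new ingredients being the adjoint identity $(P_1^\theta)^{*}=P_1^{\overline\theta}$ (which reduces the matrix element to an eigenprojection identity) and the correct orientation of the reflected contour $\widetilde\Gamma(\epsilon_n,R)$, which now closes at infinity in the \emph{upper} half plane.

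First, I would establish $(P_1^\theta)^{*}=P_1^{\overline\theta}$. Writing $P_1^\theta=\tfrac{1}{2\pi i}\oint_{C}(z-H^\theta)^{-1}\,dz$ for a small positively oriented circle $C$ around $\lambda_1$ and taking the adjoint, using $(H^\theta)^{*}=H^{\overline\theta}$ (which is immediate from the definitions \eqref{Hthetaaaa}--\eqref{def:thetafncts} and the form of $a(f^\theta),a(f^{\overline\theta})^*$), one obtains $(P_1^\theta)^{*}=\tfrac{-1}{2\pi i}\oint_{\overline C}(w-H^{\overline\theta})^{-1}\,dw$, where $\overline C$ denotes the pointwise conjugated contour. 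Since conjugation reverses orientation, $\overline C$ runs clockwise around $\overline{\lambda_1}$, and the two sign flips combine to yield $P_1^{\overline\theta}$. Inserting this together with the eigenprojection relation $P_1^{\overline\theta}(H^{\overline\theta}-z)^{-1}=(\overline{\lambda_1}-z)^{-1}P_1^{\overline\theta}$ into the integrand of~\eqref{eq:thl2ttt} gives
\begin{align*}
    \bigl\langle P_1^{\theta}\sigma_1\Psi^\theta_{\lambda_0},(H^{\overline\theta}-z)^{-1}\sigma_1\Psi^{\overline\theta}_{\lambda_0}\bigr\rangle=\frac{1}{\overline{\lambda_1}-z}\,\bigl\langle P_1^{\theta}\sigma_1\Psi^\theta_{\lambda_0},\sigma_1\Psi^{\overline\theta}_{\lambda_0}\bigr\rangle.
\end{align*}

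Second, I would evaluate the $z$-integral by residues. By construction, $\widetilde\Gamma(\epsilon_n,R)$ closed at infinity in the upper half plane forms a counter-clockwise loop enclosing the single pole at $\overline{\lambda_1}$, which lies in the open upper half plane because $\Im\lambda_1<0$ by the Fermi golden rule bound~\eqref{eq:impartres}. The closure contributes nothing for $s>0$ since $|e^{isz}|=e^{-s\Im z}\to 0$ as $\Im z\to +\infty$. The residue is $-e^{is\overline{\lambda_1}}$, so $\int_{\widetilde\Gamma(\epsilon_n,R)}e^{isz}/(\overline{\lambda_1}-z)\,dz=-2\pi i\,e^{is\overline{\lambda_1}}$, and substitution yields
\begin{align*}
    \widetilde P_1(q,Q)=-2\pi\,\bigl\langle P_1^{\theta}\sigma_1\Psi^\theta_{\lambda_0},\sigma_1\Psi^{\overline\theta}_{\lambda_0}\bigr\rangle\int dr\,G(r)\,\frac{e^{iQ\alpha(r)}-e^{iq\alpha(r)}}{\alpha(r)},
\end{align*}
with $\alpha(r):=r-\lambda_0+\overline{\lambda_1}$ satisfying $\Im\alpha(r)=-\Im\lambda_1>0$.

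Third, decomposing $e^{iQ\alpha}-e^{iq\alpha}=-1+(1-e^{iq\alpha})+e^{iQ\alpha}$ isolates the principal term asserted in~\eqref{eq:thl12ttt} together with two error contributions in direct analogy with $r_2(q)$ and $r_1(Q)$ in the proof of Lemma~\ref{ching2}. The $q$-piece is bounded by $Cq$ via the elementary inequality $|1-e^{iq\alpha}|\leq q|\alpha|$, which cancels the denominator. The $Q$-piece is treated by a single integration by parts in $r$, which produces a factor $1/(iQ)$ acting on $G'(r)/\alpha(r)-G(r)/\alpha(r)^{2}$; since $G\in C^\infty_c(\R\setminus\{0\},\C)$, the boundary terms vanish, and the two resulting $r$-integrals are bounded by $C|\log g|$ and $C/g^2$ respectively via Remark~\ref{RRRR} (applied after the shift that turns the denominator into $|r-\lambda_1|^{1+x}$ for $x=0,1$), producing a total bound of $C/(Qg^2)$. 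Altogether $|\widetilde R_2(q,Q)|\leq C(q+1/(Qg^2))$. The essentially only non-routine step is the adjoint identity and the induced orientation reversal of the contour; the remainder of the argument is a mechanical adaptation of the proof of Lemma~\ref{ching2}.
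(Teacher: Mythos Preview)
Your proposal is correct and follows essentially the same route as the paper's proof, which is deliberately terse (``very similar to the proof of Lemma~\ref{ching2}'') and only records the two intermediate identities $\widetilde P_1(q,Q)=-2\pi i\langle\cdots\rangle\int_q^Q\!ds\int dr\,G(r)e^{is(r-\lambda_0+\overline{\lambda_1})}$ and its $s$-integrated form. You have simply made explicit the two steps the paper leaves to the reader: the adjoint identity $(P_1^\theta)^*=P_1^{\overline\theta}$ combined with the eigenprojection relation for $H^{\overline\theta}$ at $\overline{\lambda_1}$, and the residue computation on the reflected contour $\widetilde\Gamma(\epsilon_n,R)$ closed in the upper half plane; the remainder estimates for the $q$- and $Q$-pieces are handled exactly as in Lemma~\ref{ching2} via $|1-e^{iq\alpha}|\le q|\alpha|$ and a single integration by parts together with Remark~\ref{RRRR}.
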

\begin{proof}
This proof is very similar to the proof of Lemma \ref{ching2}: 
We have that 
\begin{align}
\widetilde P_1(q, Q  )   =  & - 2 \pi i \int^{  Q}_{ q} \mathrm{d}s  \int dr  \,G(r)
 e^{is(r - \lambda_0 +  \overline{\lambda_1})}  \left\langle  P_1^{ \theta} \sigma_1 \Psi^\theta_{\lambda_0},  \sigma_1 \Psi^{\overline \theta}_{\lambda_0}\right\rangle 
 \end{align}
and hence, we infer 
 \begin{align}
 \label{finp1tt}
P_1(q, Q  )  = &  
  -2 \pi  \int \mathrm{d}r     \,G(r)
( e^{iQ( r -\lambda_0 + \overline{\lambda_1})} - e^{iq(r - \lambda_0 +
 \overline{\lambda_1})}   ) \frac{   1 }{  r - \lambda_0 + \overline{\lambda_1} } 
\left\langle  P_1^{ \theta} \sigma_1 \Psi^\theta_{\lambda_0},  \sigma_1 \Psi^{\overline \theta}_{\lambda_0}\right\rangle 
\notag \\
    = &  2 \pi  \int \mathrm{d}r     \,G(r)
 \frac{  1  }{  r - \lambda_0 + \overline{\lambda_1} } 
\left\langle  P_1^{ \theta} \sigma_1 \Psi^\theta_{\lambda_0},  \sigma_1 \Psi^{\overline \theta}_{\lambda_0}\right\rangle  + \widetilde R_2(q,Q).  
\end{align}
We conclude the proof as in the proof of Lemma \ref{ching2}. 
\end{proof}

\subsection{Proof of Theorem \ref{FKcor}}\label{main}

In this section, we give the proof of the main theorem based on the previous
results.
\begin{proof}[Proof of Theorem \ref{FKcor}]
Let $h,l\in\mathfrak h_0$; c.f.\ \eqref{def:h0}.
    Recall  the definition of $W$ given in \eqref{def:W1st} and the form factor
$f$ in \eqref{eq:f}.
Since $f\in \mathit C^\infty(\R^3\setminus \{ 0\},\C)$ 
we find
\begin{align}
\label{eq:hflftt}
hf,lf, W\in\mathfrak{h}_0.
\end{align}
Theorem~\ref{intker}, i.e., equation~\eqref{eq:Tprecise1}, together with
Lemma~\ref{lemmadiff} (iv) yields
\begin{align}
T(h, l)
&=
-2\pi i g\norm{\Psi_{\lambda_0}}^{-2} \left\langle a_-(W) \sigma_1 \Psi_{\lambda_0}, \Psi_{\lambda_0}\right\rangle
=-2\pi i g\norm{\Psi_{\lambda_0}}^{-2} \left\langle [a_-(W) ,\sigma_1] \Psi_{\lambda_0}, \Psi_{\lambda_0}\right\rangle ,
\end{align}
and
furthermore, recalling $\omega(k)=|k|$,
 equation
\eqref{a_-} in Lemma~\ref{lemmadiff}  (ii)  
implies
\begin{align}
\label{eq:thl0tt}
T(h,l)
&=
 2\pi(ig)^2\norm{\Psi_{\lambda_0}}^{-2} \int_{-\infty}^0 \mathrm{d}s   \,
\overline{\langle W_s,f\rangle_2}
\left\langle \left[e^{isH}\sigma_1
e^{-isH},  \sigma_1\right]  \Psi_{\lambda_0}, \Psi_{\lambda_0}\right\rangle 
\notag \\
&=   2\pi  g^2\norm{\Psi_{\lambda_0}}^{-2} \int^{\infty}_0 \mathrm{d}s \,
    \langle f, W_{-s}\rangle_2
    \left\langle \left[e^{-isH}\sigma_1
e^{isH},  \sigma_1\right]  \Psi_{\lambda_0}, \Psi_{\lambda_0}\right\rangle 
\notag \\
&= 2 \pi  g^2 \norm{\Psi_{\lambda_0}}^{-2} \left(  T^{(1)}- T^{(2)} \right) 
,
\end{align}
where  we used the abbreviations 
\begin{align}
\label{eq:abr.termstt}
T^{(j)}:=    T^{(j)}(0, \infty) 
\end{align}
 for $j=1,2$ with
\begin{align}
\label{eq:first-termtt}
T^{(1)}(q, Q)  :=  &\int_q^Q  \mathrm{d}s  \int \mathrm{d^3} k   \,
W(k)f(k) e^{is(|k|+\lambda_0)}  \left\langle   \sigma_1 \Psi_{\lambda_0},
e^{-isH} \sigma_1\Psi_{\lambda_0}\right\rangle 
\\ \notag = & \int_q^Q  \mathrm{d}s  \int \mathrm{d} r  G(r) e^{is(r+\lambda_0)}  \left\langle   \sigma_1 \Psi_{\lambda_0},
e^{-isH} \sigma_1\Psi_{\lambda_0}\right\rangle 
\end{align}
and 
\begin{align}
T^{(2)}(q,Q) :=   \int_q^Q  \mathrm{d}s  \int \mathrm{d} r   G(r) e^{is(r-\lambda_0)}     \left\langle \sigma_1  \Psi_{\lambda_0},
e^{isH}\sigma_1\Psi_{\lambda_0}\right\rangle .
\label{eq:second-termtt}
\end{align}
We recall  the definitions \eqref{def:W1st} and \eqref{eq:G-def}:
\begin{align}
\label{def:W1sttt}
W( k) =|k|^2 l(k) \int\mathrm{d}\Sigma \, \overline{h(|k|,\Sigma)}f(|k|,\Sigma), 
\hspace{.5cm} G(r)=   \int \mathrm{d}\Sigma \mathrm{d}\Sigma' \,  r^4  \overline{h(r,\Sigma)} l(r,\Sigma') f(r)^2.
\end{align}
We observe that there is a constant $C$ such that 
\begin{align}\label{comien}
| T^{(i)}(q, Q) - T^{(i)}(0, Q)  | \leq C q.  
\end{align}
 We start with analyzing the term  $  T^{(1)}(q, Q) $.
Lemma \ref{laplace}  together with the identity $P_1^{\overline \theta}+\overline{P_1^{\overline \theta}}=1$ allows us to write this term as
\begin{align}\label{comien2'}
 T^{(1)}(q, Q) = & \frac{1}{2 \pi i } P_1(q, Q) +
 \frac{1}{2 \pi i } R_1(q, Q) 
\end{align}
for all $0<q<Q<\infty$.
 Here, $P_1(q, Q)$ and $R_1(q, Q) $ are defined in   \eqref{eq:thl2t} and  \eqref{eq:thl1t}, respectively. Moreover, Lemma \ref{ching2} implies
\begin{align}\label{comien2}
 T^{(1)}(q, Q) &= - 2 \pi  \frac{1}{2 \pi i}  \int \mathrm{d}r     \,G(r)
 \frac{  1  }{  r + \lambda_0 - \lambda_1 } 
 \left\langle  \sigma_1 \Psi^{\overline \theta}_{\lambda_0}, P_1^{ \theta} \sigma_1 \Psi^\theta_{\lambda_0}\right\rangle 
 \notag \\
 &+ \frac{1}{2 \pi i} \Big (   R_1(q, Q) +    R_2(q, Q) \Big ), 
\end{align}
where Lemmas \ref{ching1} and  \ref{ching2} provide the estimates: 
\begin{align}\label{comien3}
| R_1(q, Q)| \leq   C g\left(  |\log(q)|+   |\log (g)|  \right)  , \qquad
 |R_2(q, Q)| \leq C 
  (q + \frac{1}{g^2 Q})  
\end{align}
  for $0<q \leq e^{-1}  < 1 < Q$. 
 As explained in Lemmas \ref{ching1} and \ref{ching2}, 
the terms $  P_1(q, Q) $ and  $ R_1(q, Q)  $ do not depend on $ n $ and $R$ because both are given by  contour  integrals of analytic functions and a change of these parameters signifies a change in the contour of integration only. Taking the limit $Q$ to infinity
and $q = g$, we obtain from Eqs. \eqref{comien}, \eqref{comien2} and \eqref{comien3}:
\begin{align}\label{comien4}
T^{(1)}(0, \infty) =  i   \int \mathrm{d}r     \,G(r)
 \frac{  1  }{  r + \lambda_0 - \lambda_1 } 
 \left\langle  \sigma_1 \Psi^{\overline \theta}_{\lambda_0}, P_1^{ \theta} \sigma_1 \Psi^\theta_{\lambda_0}\right\rangle + R_3
\end{align}
and that there is a constant $C$ such that
\begin{align}\label{comien6}
|R_3| \leq C  |\log (g)|    .  
\end{align}
The term  $T^{(2)}(0, \infty)$ can  be inferred 
by repeating the calculation with $\theta$ replaced by $\overline\theta$ and reflecting the
path of integration $\Gamma (\epsilon,R)$ on the real axis when applying Lemma
\ref{laplace}. In this case one  has to consider  the Hamiltonian
$H^{\overline \theta}$.    Notice that in this case the factor $\frac{1}{2 \pi i}$ in Eq.\ \eqref{eq:laplace}  is substituted by $- \frac{1}{2 \pi i}$,    which is produced from  the change of orientation of the integration curve.    
Due to the similarity of the calculation, we omit  the  proof
   and   only state the
result (it follows from Lemma \ref{ching2tt} and similar  computations)
\begin{align}\label{comien7}
T^{(2)}(0, \infty) =\frac{- 1}{2 \pi i }2 \pi  \int \mathrm{d}r     \,G(r)
 \frac{  1  }{  r -  \lambda_0 + \overline{\lambda_1} }  \left\langle  \sigma_1 \Psi^{ \theta}_{\lambda_0}, P_1^{ \overline\theta} \sigma_1 
 \Psi^{\overline \theta}_{\lambda_0}\right\rangle 
   + R_4
\end{align}
and that there is a constant $C$ such that
\begin{align}\label{comien8}
|R_4| \leq C   |\log (g)|   .  
\end{align}
The identities \eqref{eq:thl0tt},   \eqref{comien4}  and \eqref{comien7}, together with  \eqref{comien6}, \eqref{comien8} and \eqref{projj1}   imply
\begin{align}\label{mm}
T(h,l) & =  2 \pi  g^2 \norm{\Psi_{\lambda_0}}^{-2} \left(  T^{(1)}- T^{(2)} \right) + R
 \\ \notag & =   2 \pi i g^2 \norm{\Psi_{\lambda_0}}^{-2}   \int \mathrm{d}r     \,G(r)
\Big (  \frac{  1  }{  r + \lambda_0 - \lambda_1 } -  \frac{  1  }{  r -  \lambda_0 + \overline{\lambda_1} } \Big ) + R
\\ \notag & =   4\pi i g^2  \norm{\Psi_{\lambda_0}}^{-2}  \int \mathrm{d}r     \,G(r)
\Big (  \frac{ \Re \lambda_1 -  \lambda_0 }{ \big (  r + \lambda_0 - \lambda_1 \big ) \big (  r -  \lambda_0 + \overline{\lambda_1}    \big )  }  \Big ) + R, 
\end{align} 
where 
$|R| \leq C  |\log (g)|  $. 
\end{proof}
\begin{remark}\label{Constant}
The constant $C(h,l)$ in Theorem \ref{FKcor} depends on $h$ and $l$. From our methods, this dependence can be made explicit. However, for the sake of simplicity and clarity we do not present this analysis in this paper, but indicate instead how to do it. The key ingredients are Eqs. \eqref{intparts101} and \eqref{PF2} (notice that \eqref{PF1} does not play a role because the corresponding term vanishes    when   $Q$ tends to infinity). These terms give a contribution of the form
\begin{align}\label{contri1}
C\int dr \Big [ |G(r)| + \Big |\frac{d}{d r }G(r)\Big | +  \Big | \frac{d^2}{dr^2}G(r) \Big | \Big ],
\end{align} 
for a constant $C$ that does not depend on $h$ and $l$. Moreover, with respect to \eqref{intparts101}, a minor change in the proof of  Lemma \ref{ching1} would make the second derivative term unnecessary because we have an extra factor of the form $s^{-1}$ in  \eqref{falta1}.    This is essentially the only necessary contribution that comes from $h$ and $l$. However, in order to simplify our  final formula, we substituted the inner products in  Eqs.\ 
 \eqref{comien4} and \eqref{comien7} by the constant $1$ (using  \eqref{projj1}). This produces (explicit) error  terms  that contribute differently as \eqref{contri1}, as we can see from our arguments below  \eqref{comien8}. 
\end{remark}

\begin{appendix}
\section{Standard Estimates}
\label{app:sa}
In the following we shall use the well-known standard inequalities
\begin{align} 
\begin{split}
    \|a(h)\Psi\|&\leq \|h/\sqrt\omega\|_2 \, \|H_f^{1/2}\Psi\|
    \\
     \|a(h)^*\Psi\|&\leq \|h/\sqrt\omega\|_2 \, \|H_f^{1/2}\Psi\| + \|h \|_2 \, \| \Psi\| 
    \end{split}
      \label{eq:st-est}
\end{align}
which hold for all  $h , h/\sqrt\omega \in\mathfrak h$  and $\Psi\in\mathcal H$ such that the
left- and right-hand side are well-defined; see \cite[Eq.\ (13.70)]{spohn_dynamics_2008}. 
\begin{lemma}
\label{lemma:standardest1}
Let $h, h/\sqrt{\omega}\in \mathfrak{h}$. Then, we have the following 
estimates:
\begin{align}
\label{eq:standartesta}
\norm{a(h)^* (H_f+1)^{-\frac{1}{2}}}
&\leq 
\norm{h}_2 + \norm{h/\sqrt{\omega}}_2
,\\
\norm{a(h) (H_f+1)^{-\frac{1}{2}}}&\leq
\norm{h/\sqrt{\omega}}_2 ,\\
\norm{ V \left( H_f +1  \right)^{-\frac{1}{2}}}
&\leq \norm{f}_2 +2\norm{f/\sqrt{\omega}}_2 .
\label{eq:Vest}
\end{align}
\end{lemma}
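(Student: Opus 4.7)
The plan is to deduce all three bounds directly from the standard inequalities \eqref{eq:st-est} stated at the beginning of the appendix, combined with elementary spectral calculus for the self-adjoint operator $H_f$. Fix an arbitrary $\phi\in\mathcal H$ in the domain of $(H_f+1)^{-1/2}$ (which is all of $\mathcal H$ since $(H_f+1)^{-1/2}$ is bounded). I will substitute $\Psi:=(H_f+1)^{-1/2}\phi$ into \eqref{eq:st-est} and estimate the resulting right-hand sides in terms of $\|\phi\|$.

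The key spectral inputs are the two operator-norm bounds
\begin{equation*}
\bigl\|H_f^{1/2}(H_f+1)^{-1/2}\bigr\|\leq 1,\qquad \bigl\|(H_f+1)^{-1/2}\bigr\|\leq 1,
\end{equation*}
both of which follow from $x/(x+1)\leq 1$ and $1/(x+1)\leq 1$ for $x\geq 0$ by the spectral theorem applied to $H_f\geq 0$. Inserting $\Psi=(H_f+1)^{-1/2}\phi$ in the first line of \eqref{eq:st-est} gives
\begin{equation*}
\|a(h)(H_f+1)^{-1/2}\phi\|\leq \|h/\sqrt\omega\|_2\,\|H_f^{1/2}(H_f+1)^{-1/2}\phi\|\leq \|h/\sqrt\omega\|_2\,\|\phi\|,
\end{equation*}
which, upon taking the supremum over $\|\phi\|\leq 1$, yields the second estimate. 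Doing the same with the second line of \eqref{eq:st-est} yields
\begin{equation*}
\|a(h)^*(H_f+1)^{-1/2}\phi\|\leq \|h/\sqrt\omega\|_2\|\phi\|+\|h\|_2\|\phi\|,
\end{equation*}
which is \eqref{eq:standartesta}.

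Finally, for the bound on $V$, I recall $V=\sigma_1\otimes (a(f)+a(f)^*)$ with $\|\sigma_1\|=1$, identifying $H_f\equiv \mathbbm 1_{\mathcal K}\otimes H_f$ as in Remark~\ref{R} so that $(H_f+1)^{-1/2}$ commutes with $\sigma_1\otimes\mathbbm 1$. Then the triangle inequality and the two bounds already established give
\begin{equation*}
\|V(H_f+1)^{-1/2}\|\leq \|a(f)(H_f+1)^{-1/2}\|+\|a(f)^*(H_f+1)^{-1/2}\|\leq \|f/\sqrt\omega\|_2+\bigl(\|f\|_2+\|f/\sqrt\omega\|_2\bigr),
\end{equation*}
which is exactly \eqref{eq:Vest}. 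There is no real obstacle here; the only point worth being careful about is that the hypothesis $h, h/\sqrt\omega\in\mathfrak h$ ensures the right-hand sides of \eqref{eq:st-est} are finite, so that the operators $a(h)(H_f+1)^{-1/2}$ and $a(h)^*(H_f+1)^{-1/2}$ extend from a core (for instance $\mathcal F_0$) to bounded operators on all of $\mathcal H$, and the norm bounds just derived are inherited by these closures.
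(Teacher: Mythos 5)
Your proposal is correct and follows essentially the same route as the paper: substitute $(H_f+1)^{-1/2}\phi$ into the standard estimates \eqref{eq:st-est}, use the spectral theorem to bound $\|H_f^{1/2}(H_f+1)^{-1/2}\|$ and $\|(H_f+1)^{-1/2}\|$ by $1$, and obtain \eqref{eq:Vest} from the boundedness of $\sigma_1$ and the triangle inequality. The only differences are cosmetic (supremum over $\|\phi\|\leq 1$ versus normalizing $\|\Psi\|=1$), so nothing further is needed.
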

\begin{proof}
Let $\Psi\in \mathcal F [\mathfrak{h}]$ with $\|\Psi \|_{\mathcal H}=1$.
Applying \eqref{eq:st-est} and the spectral theorem, we find
\begin{align}
    \| a(h)^* (H_f+1)^{-\frac{1}{2}}\Psi \|
& \leq \|h\|_2  \|(H_f+1)^{ -\frac12}\Psi\|+ \|h/\sqrt\omega\|_2 \|H_f^{\frac12}(H_f+1)^{ -\frac12}\Psi\|
\notag \\
    &\leq \|h\|_2+\|h/\sqrt\omega\|_2,
    \\
    \| a(h) (H_f+1)^{-\frac{1}{2}}\Psi \|
    &
    \leq \|h/\sqrt\omega\|_2 \|H_f^{\frac12}(H_f+1)^{  - \frac12}\Psi\| 
    \leq \|h/\sqrt\omega\|_2.
\end{align} 
The inequality \eqref{eq:Vest} is implied by the boundedness of $\sigma_1$
and the triangle inequality:
\begin{align}
&\norm{ V \left( H_f +1  \right)^{-\frac{1}{2}}} 
 \leq \norm{ \sigma_1  \otimes a(f)\left( H_f +1  \right)^{-\frac{1}{2}}  }+ \norm{\sigma_1  \otimes a(f)^*\left( H_f +1  \right)^{-\frac{1}{2}}  }
\notag \\
& \leq  \norm{ a(f) \left( H_f +1  \right)^{-\frac{1}{2}} } +\norm{ a(f)^*\left( H_f +1  \right)^{-\frac{1}{2}}  }
 \leq \norm{f}_2 +2\norm{f/\sqrt{\omega}}_2.
\end{align}
This completes the proof.
\end{proof}
As preparation of the proof of Lemma~\ref{lemmadiff}  (in Appendix \ref{app:welldef} below)   we recall that the
Hamiltonians $H$, c.f.\ \eqref{eq:H}, as well as $H_f$, c.f.\ \eqref{h0def},
are self-adjoint on the common domain $D(H)=\mathcal K\otimes \mathcal D(H_f)$  and bounded below by the constant $b\in\R$;
c.f.\ Proposition~\ref{thm:Hsa}  and \eqref{bbdbelow}. By spectral calculus we can define the operators
$H_f^{1/2}$,  $(H-b+1)^{1/2}$  and $(H_f+1)^{-1/2}$,  $(H-b+1)^{-1/2}$  which are
closed and densely defined and the latter two are even bounded by $1$.  For the
proof Lemma~\ref{lemmadiff} we shall need the following lemma.

\begin{lemma}
    \label{lem:bhh}
    The following operators are bounded:
    \begin{align}
        H_f^{\frac12}(H-b+1)^{-\frac12} \label{eq:bhh1},\\
        (H-b+1)^{\frac12}(H_f+1)^{-\frac12}. \label{eq:bhh2}
    \end{align}
\end{lemma}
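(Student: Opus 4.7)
The plan is to deduce both bounds from a single two-sided form comparison between $H-b+1$ and $H_f+1$, valid on the common form domain $\mathcal K\otimes \mathcal D(H_f^{1/2})$. The input is the operator bound $\|V(H_f+1)^{-1/2}\|\leq C_V$ from Lemma \ref{lemma:standardest1} (inequality \eqref{eq:Vest}). Via Cauchy--Schwarz and $2ab\leq \epsilon a^2+\epsilon^{-1}b^2$, this upgrades to an infinitesimal form bound: for every $\epsilon>0$ there is $C_\epsilon$ with
\begin{equation}
|\langle\psi,gV\psi\rangle| \leq g C_V \|(H_f+1)^{1/2}\psi\|\,\|\psi\| \leq \epsilon\,\langle\psi,(H_f+1)\psi\rangle + C_\epsilon\|\psi\|^2
\end{equation}
for all $\psi\in\mathcal K\otimes\mathcal D(H_f^{1/2})$. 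Since $K$ is bounded, the same holds with $H_0$ in place of $H_f$, and by KLMN the form domains of $H$, $H_0$ and $H_f$ all agree with $\mathcal K\otimes\mathcal D(H_f^{1/2})$.

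Applying this with $\epsilon=1/2$ and writing $H=K+H_f+gV$, I get for all $\psi$ in this common domain
\begin{equation}
\tfrac{1}{2}\langle\psi,(H_f+1)\psi\rangle - C_1\|\psi\|^2 \leq \langle\psi,H\psi\rangle \leq \tfrac{3}{2}\langle\psi,(H_f+1)\psi\rangle + C_2\|\psi\|^2 .
\end{equation}
Adding suitable multiples of $\|\psi\|^2$ and using that $b\leq H$, I obtain constants $\alpha,\beta>0$ with
\begin{equation}
\alpha\,\langle\psi,(H_f+1)\psi\rangle \leq \langle\psi,(H-b+1)\psi\rangle \leq \beta\,\langle\psi,(H_f+1)\psi\rangle .
\end{equation}

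From the right-hand inequality $\|(H-b+1)^{1/2}\psi\|^2\leq \beta\|(H_f+1)^{1/2}\psi\|^2$. For any $\phi\in\mathcal H$, the vector $\psi:=(H_f+1)^{-1/2}\phi$ lies in $\mathcal D((H_f+1)^{1/2})$, which is the common form domain, so the estimate applies and yields $\|(H-b+1)^{1/2}(H_f+1)^{-1/2}\phi\|\leq\sqrt{\beta}\,\|\phi\|$, giving \eqref{eq:bhh2}. From the left-hand inequality, combined with the trivial bound $\langle\psi,H_f\psi\rangle \leq \langle\psi,(H_f+1)\psi\rangle$, I get $\|H_f^{1/2}\psi\|^2\leq \alpha^{-1}\|(H-b+1)^{1/2}\psi\|^2$. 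Substituting $\psi=(H-b+1)^{-1/2}\phi$, which similarly lies in the common form domain, yields \eqref{eq:bhh1}.

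The proof has no genuine obstacle; the only thing to be careful about is separating operator-domain from form-domain statements, so that the substitutions $\psi=(H_f+1)^{-1/2}\phi$ and $\psi=(H-b+1)^{-1/2}\phi$ are legitimate. This is handled by the spectral theorem: $(H_f+1)^{-1/2}$ and $(H-b+1)^{-1/2}$ are bounded with ranges inside $\mathcal D((H_f+1)^{1/2})$ and $\mathcal D((H-b+1)^{1/2})$ respectively, and the KLMN-based identification of form domains ensures that either range lies in the common domain on which the form inequalities are valid.
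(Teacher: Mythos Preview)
Your proof is correct and follows essentially the same route as the paper: both rely on the relative form-smallness of $gV$ with respect to $H_f$ (via Lemma~\ref{lemma:standardest1}) together with the boundedness of $K$ to compare $H-b+1$ and $H_f+1$. The paper carries out the computation separately for each of the two operators rather than first isolating the two-sided form inequality $\alpha(H_f+1)\leq H-b+1\leq\beta(H_f+1)$, but the underlying estimate and the $\epsilon$-absorption trick are identical.
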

\begin{proof}
    Let $\Psi\in\mathcal
    H$ with $\|\Psi\|=1$. The boundedness of \eqref{eq:bhh1} follows from the
    equality
    \begin{align}
        \|H_f^{\frac12}(H-b+1)^{-\frac12}\Psi\|^2 
        &= \langle(H-b+1)^{-\frac12}\Psi,H_f(H-b+1)^{-\frac12}\Psi\rangle
    \notag    \\ 
        &= \langle(H-b+1)^{-\frac12}\Psi,(H-K-gV)(H-b+1)^{-\frac12}\Psi\rangle
    \end{align}
    and the fact that $K$ is bounded by $|e_1|$ and that for all $\epsilon>0$ 
    \begin{align}
       & |\langle(H-b+1)^{-\frac12}\Psi,gV(H-b+1)^{-\frac12}\Psi\rangle|
        \leq
        \|(H-b+1)^{-\frac12}\Psi\|\,\|gV(H-b+1)^{-\frac12}\Psi\|
    \notag    \\
        &\leq \frac{g}{\epsilon}  2\|f/\sqrt\omega\|_2 \, 
        \epsilon \|H_f^{\frac12}(H-b+1)^{-\frac12}\Psi\|  
        +\|f\|_2  \|  \Psi  \|  
   \notag     \\
        &\leq \left(\frac{g}{\epsilon} 2\|f/\sqrt\omega\|_2\right)^2
        + \epsilon^2
        \|H_f^{\frac12}(H-b+1)^{-\frac12}\Psi\|^2  
        +\|f\|_2  
    \end{align} 
    holds, which is a consequence of \eqref{eq:st-est}.
    Choosing $0<\epsilon<1$ an explicit bound is
    \begin{align}
        \|H_f^{\frac12}(H-b+1)^{-\frac12}\Psi\|^2 
        \leq
        \frac{1 + |e_1| + \left(\frac{g}{\epsilon} 2\|f/\sqrt\omega\|_2\right)^2    
        +\|f\|_2    }{1-\epsilon^2}
        <\infty.
    \end{align}
    The boundedness of \eqref{eq:bhh2} is implied by
    \begin{align}
        \|(H-b+1)^{\frac12}(H_f+1)^{-\frac12}\Psi\|^2 
        &=
        \langle(H_f+1)^{-\frac12}\Psi,(K+H_f+gV-b+1)(H_f+1)^{-\frac12}\Psi\rangle
    \end{align}
    and, again as a consequence of \eqref{eq:st-est}, 
    \begin{align}
        |\langle(H_f+1)^{-\frac12}\Psi,gV(H_f+1)^{-\frac12}\Psi\rangle|
        &\leq g  2\|f/\sqrt\omega\|_2 \, 
        \|H_f^{\frac12}(H_f+1)^{-\frac12}\Psi\|    
        +\|f\|_2    \\ \notag   &\leq  \|f\|_2+ 2 \|f/\sqrt\omega\|_2.
    \end{align} 
\end{proof}

\section{Proofs for Section \ref{sec:dil}}
\label{app:dil}
 It is well-known that there is a dense domain of analytic vectors; for example
$$\mathcal D = \left\{ \chi_{[-R,R]}(A)\Psi : \Psi \in \mathcal H , R>0
\right\}$$ with  $A$ being the generator of $U_\theta$ and $\chi$ the
corresponding spectral projection (c.f.\ \cite{bach,jaksic}).

\begin{proof}[Proof of Lemma \ref{lemma:specdilh0}]
    Let $\theta\in\C$. Definition in \eqref{h0def} implies that $H^\theta_0=K
    \otimes \mathbbm
    1_{\mathcal F[\mathfrak h]} + \mathbbm 1_{\mathcal K} \otimes H^\theta_f $
    is a sum of commuting self-adjoint operators and
    $\sigma(K)=\left\lbrace e_0,e_1 \right\rbrace$. As shown in
    \cite{reedsimon2}, we have $\sigma(H_f)=\R_0^+$ and it
    follows from the definition of $H_f^\theta=e^{-\theta}H_f$ in
    \eqref{eq:Hftheta} that $\sigma (H^\theta_f)= \left\lbrace  e^{-\theta} r :
    r\geq 0 \right\rbrace$. The claim then follows from  the  spectral  theorem for
    two commuting normal operators.
\end{proof}

\section{Asymptotic creation/annihilation operators}
\label{app:welldef}

\begin{proof}[Proof of Lemma~\ref{lemmadiff}]
    Let $h,l\in\mathfrak h_0$ and $\Psi\in \mathcal K \otimes \mathcal D(H_f^{1/2})$. Thanks
    to Lemma~\ref{lem:bhh} we have $\mathcal K \otimes\mathcal D(H_f^{\frac12})=\mathcal
    D((H - b + 1 )^{\frac12})$. We prove claims (i)-(vi) separately:
    \begin{enumerate}
        \item[(ii)] The subspace of $\mathcal
            H_0$, defined in \eqref{eq:H0},   is dense  in the domain of $(H-b+1)^{1/2}$  w.r.t.\ the graph norm
            {$\|\cdot\|_{(H-b+1)^{1/2}}$} of $(H-b+1)^{\frac12}$ so that
            there is a sequence $(\Psi_n)_{n\in\N}$ in $\mathcal K \otimes\mathcal
            F_{\text{fin}}[\mathfrak h_0]$ with $\Psi_n\to\Psi$ in this norm as
            $n\to\infty$.  For all $n\in\N$, the definition in \eqref{asymptop}
            together with the group properties $(e^{-itH})_{t\in\R}$, in
            particularly, the strong continuous differentiability on $D(H)$,
            justify
            \begin{align}
                a_t(h)\Psi_n 
                &= e^{itH}a(h_t)e^{-itH}
             = a(h)\Psi_n + \int_0^t ds\, \frac{d}{ds} e^{isH} a(h_s)
                e^{-isH}\Psi_n
                \notag
                \\
                &= a(h)\Psi_n -ig \int_0^t ds\, \langle h_s,f\rangle_2 
                e^{isH} \sigma_1 e^{-isH}\Psi_n,
                \label{eq:a-t-psi-n}
            \end{align}
            where the last integrand was computed by observing the 
            CCR (c.f.\ \eqref{eq:ccr})
            \begin{align}
                [V ,a(h_s)]
                &= \sigma_1 \otimes
                [a(f)+a(f)^*,a(h_s)]=-\sigma_1  \left\langle h_s,
                f\right\rangle_2.
            \end{align}
            We may now take the limit $n\to\infty$ of
            identity \eqref{eq:a-t-psi-n} and find
            \begin{align}
                a_t(h)\Psi
                &= a(h)\Psi -ig \int_0^t ds\, \langle h_s,f\rangle_2 \,
                e^{isH} \sigma_1 e^{-isH}\Psi
                \label{eq:ident-t}
            \end{align}
            because of the following two ingredients:
            First, by definition \eqref{asymptop}, the standard estimate
            \eqref{eq:st-est} and Lemma~\ref{lem:bhh}, for all $m\in\mathfrak
            h_0$, there is a finite
            constant $C_{\eqref{eq:m-t-const}}$ such that
            \begin{align}
                \|a_t(m)(\Psi-\Psi_n)\| 
                &= \|a(m_t)(H-b+1)^{-\frac12}e^{-itH}(H-b+1)^{\frac12}(\Psi-\Psi_n)\|
                \notag
                \\
                &\leq 
                \|m/ \sqrt \omega\|_2 
                \, \|H_f^{\frac12}(H-b+1)^{-\frac12}\| 
                \, \|(H-b+1)^{\frac12}(\Psi-\Psi_n)\|
                \notag
                \\
                &= C_{\eqref{eq:m-t-const}}
                \|\Psi-\Psi_n\|_{(H-b+1)^{1/2}},
                \label{eq:m-t-const}
            \end{align}
            and likewise
            \begin{align}
                \|a(m)(\Psi-\Psi_n)\| 
                &= \|a(m)(H-b+1)^{-\frac12}(H-b+1)^{\frac12}(\Psi-\Psi_n)\|
                \notag
                \\
                &\leq 
                \|m/ \sqrt \omega\|_2 
                \, \|H_f^{\frac12}(H-b+1)^{-\frac12}\| 
                \, \|(H-b+1)^{\frac12}(\Psi-\Psi_n)\|
                \notag
                \\
                &= C_{\eqref{eq:m-t-const}}
                \|\Psi-\Psi_n\|_{(H-b+1)^{1/2}}.
            \end{align}
            Second, the integrand in \eqref{eq:a-t-psi-n} is continuous in $s$ 
            and, for sufficiently large $n$, fulfills an $n$-independent bound 
            \begin{align}
                \|e^{isH} \sigma_1 e^{-isH}(\Psi-\Psi_n)\|
                \leq
                \|\sigma_1 \|\,\|\Psi-\Psi_n\|
                \leq 1
            \end{align}
            so dominated convergence can be applied to interchanging
            the integral and the $n\to\infty$ limit to prove \eqref{eq:ident-t}.

            Finally, a stationary phase argument in $\omega(k)=|k|$ as well as
            the facts that $h\in\mathfrak h_0$ and $f\in\mathcal
            C^\infty(\R\setminus\{0\})$, c.f.\ \eqref{eq:f}, provide the estimate 
            \begin{align}
                \langle h_s,f\rangle = C \frac{1}{1 + |s|^2}
                \label{eq:stat-phase}
            \end{align}
            for  all $s\in\R$, thanks to a two-fold partial integration.
            Hence, me way finally carry out the limit $t\to\pm\infty$ to find
            \begin{align}
                a_\pm(h)\Psi 
                = 
                \lim_{t\to\pm\infty} a_t(h)\Psi =
                a(h)\Psi -ig 
                \int_0^{\pm\infty} 
                ds\, \langle h_s,f\rangle_2 
                e^{isH} \sigma_1 e^{-isH}\Psi
            \end{align}
            as the indefinite integral exists thanks to \eqref{eq:stat-phase}
            and the continuity of the integrand in $s$. We omit the proof for
            the asymptotic creation operator $a^*_\pm $ as the argument is almost
            the same.
        \item[(i)] This follows from (ii).
        \item[(iii)] Next, we calculate
            \begin{align}
                e^{-isH} a_- (h)^*\psi
                &=\lim\limits_{t\to -\infty} e^{-isH}  e^{itH}a(h_t)^*e^{-itH} \psi
                \notag \\
                &=\lim\limits_{t\to -\infty}  e^{i(t-s)H}a(h_{(t-s)+s})^* e^{-i(t-s)H} e^{-isH} \psi
                \notag \\
                &=\lim\limits_{t'\to -\infty}  e^{it'H} a(h_{t'+s})^* e^{-it'H} e^{-isH} \psi
                = a_- (h_s)^*  e^{-isH}\psi 
            \end{align}
            which proves the pull-through formula in (iii).
        \item[(iv)] First, for all $t\in\R$ we observe
            \begin{align}
                \|a_t(h)\Psi_{\lambda_0}\|
                =
                \|e^{itH}a(h_t)e^{-itH}\Psi_{\lambda_0}\|
                =
                \|a(h_t)\Psi_{\lambda_0}\|
                \label{eq:a-psi-lambda}
            \end{align}
            due to the ground state property  in \eqref{gsprop}.  
            Second, for
            $\Psi=\Psi_{\lambda_0}\in \mathcal D(H)\subset \mathcal K \otimes \mathcal
            D(H_f^{1/2})$, we employ the
            same sequence $(\Psi_n)_{n\in\N}$ as in (ii) to compute 
            \begin{align}
                \label{eq:aht-psi-n}
                \|a(h_t)\Psi_n\|^2=\sum_{l\in\N}\sqrt{l+1}\int d^3k_1\ldots d^3k_l
                \left|
                \int d^3k \, e^{it\omega(k)}\overline{h(k)}\psi_n^{(l+1)}(k,k_1,\ldots,k_l)
                \right|^2,
            \end{align}
            where we used the Fock vector representation
            $\Psi_n=(\psi_n^{(l)})_{l\in\N_0}$. We observe that $\Psi_n\in\mathcal
            H_0$ implies $\psi^{(l)}_n\in \mathcal K \otimes C_0^{\infty} ( \mathbb{R}^{3l} \setminus \{ 0\} )$ and, by definition of
            $\mathcal H_0$, c.f.\ \eqref{eq:H0},  there is a constant
            $L$ such that 
            $\psi^{(l)}_{n}=0$ for  $l\geq L$ . A stationary phase argument in
            $\omega(k)=|k|$ and a partial integration in $k$ gives
            \begin{align}
              &  \left|\int d^3k \, e^{it\omega(k)}\overline{ h(k)}  \psi_n^{(l+1)}(k,k_1,\ldots,k_{l}) \right|
               \notag \\ & \leq \frac{1}{t}
                \int d^3k \,  |k|^{-2}  | \partial_{|k|} (    |k|^2  \overline{h( |k|,\Sigma)}\psi_n^{(l+1)}( |k|,\Sigma, |k_1|,\Sigma_1,\ldots, |k_l|,\Sigma_l)) | ,
            \end{align}
        where we use spherical coordinates $k=(|k|,\Sigma)$   and $k_i=(|k_i|,\Sigma_i)$. Here, $\Sigma$ and $\Sigma_i$ denote the solid angles.  Then, we find
            \begin{align}
                \eqref{eq:aht-psi-n}\leq & \frac{1}{t} 
                \sum_{0\leq l< L}  \sqrt{l+1} \int d^3k_1\ldots d^3k_l
                \\\notag
               & \times \left(\int d^3k \, |k|^{-2}  | \partial_{|k|} ( |k|^{2}  \overline{h(|k|,\Sigma)} \Psi_n^{(l+1)}( |k|,\Sigma, |k_1|,\Sigma_1,\ldots, |k_l|,\Sigma_l) |\right)^2
            \end{align} 
            which converges to zero for $t\to\pm\infty$.
            In conclusion, for all $n\in\R$ we have
            \begin{align}
               \lim_{t\to\pm\infty}a(h_t)\Psi_n = 0. 
               \label{eq:a-t-W-0}
            \end{align}
            Moreover, there is a $t$-independent, finite constant
            $C_\eqref{eq:bound-H+i}(h)$ such
            that
            \begin{align}
                \|a_t(h)(\Psi_{\lambda_0}-\Psi_n)\|
                &= 
                \|e^{itH}a(h_t)e^{-itH}(\Psi_{\lambda_0}-\Psi_n)\|
               \notag  \\
                &= 
                \|a(h_t)(H-b+1)^{-\frac12}e^{-itH}(H-b+1)^{\frac12}(\Psi-\Psi_n)\|
                \notag
                \\
                &\leq 
                \||h|/ \sqrt \omega\|_2 
                \, \|H_f^{\frac12}(H-b+1)^{-\frac12}\| 
                \|\Psi-\Psi_n\|_{(H-b+1)^{1/2}}
              \notag  \\
                &=C_{\eqref{eq:bound-H+i}}(h)
                \|\Psi-\Psi_n\|_{(H-b+1)^{1/2}}
                \label{eq:bound-H+i}
            \end{align}
            and
            \begin{align}
                \| a_\pm(h)\Psi_{\lambda_0} \| &\leq \lim_{t\to\pm\infty} \left(
                    \|a_t(h)(\Psi_{\lambda_0}-\Psi_n)\| 
                    +
                    \|a_t(h)\Psi_n\| 
                \right)
          \notag      \\
                &\leq
                C_{\eqref{eq:bound-H+i}}(h)
                \|\Psi-\Psi_n\|_{(H-b+1)^{1/2}}
                \label{eq:bound-H+i2}
            \end{align}
            holds true for all $n\in\N$, where we have use the standard
            inequalities \eqref{eq:st-est}, Lemma~\ref{lem:bhh} and \eqref{eq:a-t-W-0}. 
            Taking the limit $n\to\infty$ proves the claim (iv).
        \item[(v)] We consider the same sequence $(\Psi_n)_{n\in\N}$ as in
            (iv) and, for all $n\in\N$, we observe that, by (i) and definition
            in \eqref{asymptop}, it holds
            \begin{align}
                \langle a(h)^*_\pm\Psi_{\lambda_0},
                a(l)^*_\pm\Psi_{\lambda_0}\rangle
                &=
                \lim_{t\to\pm\infty}
                \langle a(h_t)^*\Psi_{\lambda_0},
                a(l_t)^*\Psi_{\lambda_0}\rangle.
                \label{eq:comm-lambda}
            \end{align}
            Furthermore, using the CCR in \eqref{eq:ccr}, we find for all
            $n\in\N$ that
            \begin{align}
               & \langle a(h_t)^*\Psi_{\lambda_0},
                a(l_t)^*\Psi_n\rangle
                =
                \langle \Psi_{\lambda_0},
                a(h_t)a(l_t)^*\Psi_n\rangle
                \\ \notag
                & =
                \langle \Psi_{\lambda_0},
                \left(a(l_t)^*a(h_t)+[a(h_t),  a(l_t)^*]\right)\Psi_n\rangle
                =
                \langle a(l_t)\Psi_{\lambda_0},
                a(h_t)\Psi_n\rangle
                + \langle\Psi_{\lambda_0},\Psi_n\rangle \, \langle h,
                l\rangle_2
                \label{eq:comm-n}
            \end{align}
            holds. We may control the limit $n\to\infty$ of this identity by
            \begin{align}
                &|\langle a(h_t)^*\Psi_{\lambda_0},
                a(l_t)^*(\Psi_{\lambda_0}-\Psi_n)\rangle|
                \leq
                \|a(h_t)^*\Psi_{\lambda_0}\| \,
                \|a(l_t)^*(\Psi_{\lambda_0}-\Psi_n)\|
                \\\notag
                &\leq
                (\|h\|_2 + \|h/\sqrt\omega\|_2)
                \|\Psi_{\lambda_0}\|_{(H-b+1)^{1/2}}
                (\|l\|_2 + \|l/\sqrt\omega\|_2)
                \|\Psi_{\lambda_0}-\Psi_n\|_{(H-b+1)^{1/2}},
            \end{align}
            and likewise,
            \begin{align}
                &|\langle a(l_t)\Psi_{\lambda_0},
            a(h_t)(\Psi_{\lambda_0}-\Psi_n)\rangle|
                \leq
                \|a(l_t)\Psi_{\lambda_0}\| \,
                \| a(h_t)(\Psi_{\lambda_0}-\Psi_n)\|
                \\\notag
                &\leq
                (\|l\|_2 + \|l/\sqrt\omega\|_2)
                \|\Psi_{\lambda_0}\|_{(H-b+1)^{1/2}}
                (\|h\|_2 + \|h/\sqrt\omega\|_2)
                \|\Psi_{\lambda_0}-\Psi_n\|_{(H-b+1)^{1/2}},
            \end{align}
            which are ensured by the standard estimates \eqref{eq:st-est}
            and Lemma~\ref{lem:bhh}.
            These bounds allow to take the limit $n\to\infty$ of identity
            \eqref{eq:comm-n} which yields
            \begin{align*}
                \langle a(h_t)^*\Psi_{\lambda_0},
                a(l_t)^*\Psi_{\lambda_0}\rangle
                &=
                \langle a(l_t)\Psi_{\lambda_0},
                a(h_t)\Psi_{\lambda_0}\rangle
                + \langle\Psi_{\lambda_0},\Psi_{\lambda_0}\rangle \, \langle h,
                l\rangle_2
            \end{align*}
            Finally, recalling \eqref{eq:comm-lambda} and
            exploiting (iv) that states $a_\pm(h)\Psi_{\lambda_0}=0$, we find
            \begin{align*}
                \langle a(h)^*_\pm\Psi_{\lambda_0},
                a(l)^*_\pm\Psi_{\lambda_0}\rangle
                &=
                \lim_{t\to\pm\infty}\langle a(h_t)^*\Psi_{\lambda_0}
                a(l_t)^*\Psi_{\lambda_0}\rangle
                =
                \langle\Psi_{\lambda_0},\Psi_{\lambda_0}\rangle \, \langle h,
                l\rangle_2
            \end{align*}
            which concludes the proof of (v).
        \item[(vi)] 
            Let $t\in\R$. Thanks to the standard estimate
            \eqref{eq:st-est}, we find
            \begin{align}
             &   \|a_t(h)(H_f+1)^{-\frac12}\|
                =
                \|e^{itH}a(h_t)(H-b+1)^{-\frac12}e^{-itH}(H-b+1)^{\frac12}(H_f+1)^{-\frac12}\|
                \notag
                \\
                &\leq \|a(h_t)(H-b+1)^{-\frac12}\| \, \|(H-b+1)^{\frac12}(H_f+1)^{-\frac12}\|
                \notag
                \\
                &\leq \|h/\sqrt\omega\|_2\, \|H_f^{\frac12}(H-b+1)^{-\frac12}\|
                \, \|(H-b+1)^{\frac12}(H_f+1)^{-\frac12}\|.
                \label{eq:a-h-hf}
            \end{align}
            Lemma~\ref{lem:bhh} ensures that the right-hand side 
            of \eqref{eq:a-h-hf} is bounded by a finite
            constant $C(h)$ which depends only on $h$. This proves the first
            inequality of (vi). The proof of the second is omitted here as it
            is almost identical.
    \end{enumerate}
\end{proof}

\section{The principle term $T_p(h,l)$}
\label{app:order}

In the section, we prove that    if $G \equiv G(h, l) $ is positive and strictly positive at $\Re \lambda_1 - \lambda_0$ then the absolute of the principal term 
$  T_P(h,l)  $ can be bounded by a strictly positive constant times $g^2$. 
\begin{lemma}
Suppose that  $G \equiv G(h, l) $ is positive and strictly positive at $\Re \lambda_1 - \lambda_0$, then,  for small enough $g$ (depending on $G$), there is a constant $C(h,l)>0$ (independent of $g$) such that 
\begin{align}
\label{claim0}
|T_P(h,l)|\geq  C(h,l) g^2.
\end{align} 
\end{lemma}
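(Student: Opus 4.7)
The plan is to show that $\Re T_P(h,l)$ is bounded below by $C(h,l)\,g^2$; since $|z|\ge|\Re z|$, this implies the claim. Set $A := \Re\lambda_1-\lambda_0$ and $\gamma := -\Im\lambda_1>0$; by \eqref{ground1} and \eqref{Im}, $A=e_1+O(g)$ is uniformly bounded above and below, while $\boldsymbol c\,g^2\le\gamma\le Cg^2$. A partial fraction decomposition
\[
\frac{A}{(r-A+i\gamma)(r+A+i\gamma)} \;=\; \frac{1}{2}\Big[\frac{1}{r-A+i\gamma}-\frac{1}{r+A+i\gamma}\Big]
\]
rewrites \eqref{scatteringformulapp1} as
\[
T_P(h,l) \;=\; 2\pi i\,g^2\,\norm{\Psi_{\lambda_0}}^{-2}\int_{\R} dr\,G(r)\Big[\tfrac{1}{r-A+i\gamma}-\tfrac{1}{r+A+i\gamma}\Big].
\]
The factor $1/(r+A+i\gamma)$ is bounded and smooth on $\supp G\subset(0,\infty)$ with imaginary part of order $\gamma=O(g^2)$. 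The factor $1/(r-A+i\gamma)$ is a nascent Sokhotski--Plemelj kernel, and it is the delta-function contribution it produces, which picks up $G(A)$, that will drive the desired lower bound.

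First I would make the Sokhotski--Plemelj extraction quantitative. Fix a cut-off $\chi\in C_c^\infty(\R\setminus\{0\})$ equal to $1$ on a neighbourhood of $A$ containing $\supp G$, and split $G(r)=G(A)\chi(r)+(G(r)-G(A)\chi(r))$. The textbook estimate yields
\[
\int dr\,\frac{\chi(r)}{r-A+i\gamma} \;=\; \mathrm{P.V.}\!\int dr\,\frac{\chi(r)}{r-A}\,-\,i\pi\,+\,O(\gamma|\log\gamma|),
\]
while dominated convergence handles the remainder (whose integrand is uniformly bounded since $G(r)-G(A)\chi(r)=O(r-A)$ near $A$ by Lipschitz continuity of $G$). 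Combining these with the $O(\gamma)$ imaginary part of $1/(r+A+i\gamma)$, I obtain
\[
\Im\!\int dr\,G(r)\Big[\tfrac{1}{r-A+i\gamma}-\tfrac{1}{r+A+i\gamma}\Big] \;=\; -\pi\,G(A)\,+\,O(g^2|\log g|).
\]

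Inserting this into $\Re T_P(h,l)=-2\pi g^2\norm{\Psi_{\lambda_0}}^{-2}\,\Im(\cdots)$ yields
\[
\Re T_P(h,l) \;=\; 2\pi^2\,g^2\,\norm{\Psi_{\lambda_0}}^{-2}\,G(A)\;+\;O(g^4|\log g|).
\]
By \eqref{projj1}, $\Psi_{\lambda_0}\to\varphi_0\otimes\Omega$ (of norm $1$) as $g\to0$, so $\norm{\Psi_{\lambda_0}}^{-2}$ is bounded above and below uniformly for small $g$. The hypothesis $G(\Re\lambda_1-\lambda_0)>0$, combined with the smoothness of $G$ and $A=e_1+O(g)$, yields a uniform lower bound $G(A)\ge c_G>0$ for all $g$ sufficiently small. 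The inequality $|T_P(h,l)|\ge C(h,l)g^2$ then follows with $C(h,l)=\pi^2 c_G$ (up to a factor $1+o(1)$).

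I do not anticipate a substantive obstacle. The only technical point is the uniformity in $g$ of the Sokhotski--Plemelj error, but since $o(1)$ precision in the bracketed integral is already enough (the $g^2$ coming from the prefactor), the standard $O(\gamma|\log\gamma|)$ rate is far more than sufficient; the lower bounds on $\norm{\Psi_{\lambda_0}}^{-2}$ and on $G(A)$ in the relevant regime are immediate from the ingredients just cited.
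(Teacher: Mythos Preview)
Your proof is correct and in fact slightly sharper than the paper's: you compute $\Re T_P(h,l)$ to leading order via a partial-fraction decomposition and the Sokhotski--Plemelj formula, extracting the $-i\pi\,\delta(r-A)$ contribution that picks up $G(A)$, and your argument only uses the pointwise condition $G(A)>0$ rather than global nonnegativity of $G$.

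The paper takes a different, more hands-on route. It writes $T_P=g^2E_1MI$ with $I=\int G(r)\big[(r-\alpha-ig^2E_1)(r+\alpha-ig^2E_1)\big]^{-1}\,dr$, then expands the integrand into real and imaginary parts and lower-bounds $|\Im I|$ directly: using $G\ge0$ everywhere, it replaces the Lorentzian-type kernel by one with constant width, substitutes $s=r^2$, restricts to a window around $\alpha^2+g^4E_1^2$ where $G(\sqrt{s})\ge C>0$, and rescales $\tau=s/g^2$ to obtain a $g$-independent lower bound. So the paper's argument is a direct Lorentzian integral estimate that genuinely consumes the hypothesis $G\ge 0$, while your Sokhotski--Plemelj extraction is more conceptual, yields the exact leading constant $2\pi^2\|\Psi_{\lambda_0}\|^{-2}G(A)$, and would work just as well for sign-changing $G$ provided $G(A)\neq 0$. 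Both approaches are short; yours gives more information, the paper's avoids invoking distributional limits.
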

\begin{proof}
We set 
\begin{align}
\label{I}
I:=\int \mathrm{d}r      \frac{ G(r) }{  (  r + \lambda_0    - \Re \lambda_1- i g^2 E_1    )  (  r -  \lambda_0 + \overline{\lambda_1}     )  },
\end{align}
and take small enough $g$.
Recalling \eqref{scatteringformulapp1}, we observe  that  
\begin{align}
\label{TP}
T_P(h,l)=g^2 E_1  M I.
\end{align}   
We recall from the discussion below Definition \ref{def:G} that $E_1=E_I +g^{ a }\Delta $,  where  $a>0$, $\Delta\equiv \Delta(g)$ is  uniformly bounded and  $E_I $ is a strictly negative constant that does not depend on $g$, see \eqref{EI}.   
Additionally, it follows from \eqref{projj1}  together with $\norm{\varphi_0\otimes \Omega}=1$ that $\norm{\Psi_{\lambda_0}}\geq C>0$, for some constant $C$ that does not depend on $g$. 
Moreover, we conclude from \eqref{ground1}  that $\Re \lambda_1 -\lambda_0 \geq C>0$ for some constant $C$ (independent of $g$).
Consequently, \eqref{constM} guarantees that there is a constant $C$ (independent of $g$) such that  $|M|\geq C>0$.

This together with \eqref{TP} implies that it suffices to show that there is a constant $C(h,l)>0$ such that 
\begin{align}
\label{claim}
| I |\geq C(h,l),
\end{align}
in order to conclude \eqref{claim0}.

For $\alpha\equiv \alpha_g:=\Re \lambda_1-\lambda_0$ and recalling \eqref{def:e1}, we observe
\begin{align}
\label{I0}
I=\int \mathrm{d}r   \frac{ G(r) }{ (  r -\alpha - i g^2 E_1   )( r +\alpha - i g^2 E_1  )  } 
=\int \mathrm{d}r   \frac{ G(r) \left(   r^2 -\alpha^2 -g^4E_1^2 + 2i g^2 E_1r  \right) }{  ( r^2 -\alpha^2 -g^4E_1^2)^2 +4g^4 E_1^2r^2     }.
\end{align}
Let $c > 0$ be such that $G$ is supported in the complement of the ball or radius $c$
 and center $0$.  Then, we have 
\begin{align}
|\Im (I)|\geq |E_1| \int  \mathrm{d}r G(r)  \frac{ 2 g^2 r  }{  ( r^2 -\alpha^2 -g^4E_1^2)^2 +4g^4 E_1^2c^2     }.
\end{align}
Substituting $s=r^2$, yields 
\begin{align}
|\Im (I)|\geq |E_1| \int \mathrm{d}s G( \sqrt{s} )  \frac{  g^2   }{  ( s -\alpha^2 -g^4E_1^2)^2 +4g^4 E_1^2c^2     }.
\end{align}
Since $G(\alpha) \ne 0$, then for small enough $g$ there is a constant $ r_0  $, that does not depend on $g$  and a constant $ C > 0 $ (independent of $g$) such that $ G(\sqrt{s}) \geq C  $, for every $s \in 
[   \alpha^2 + g^4E_1^2 - r_0,   -\alpha^2 -g^4E_1^2 + r_0     ]$. We apply the change of variables 
$ u = s -  \alpha^2 - g^4E_1^2  $ and obtain
\begin{align}
|\Im (I)|\geq  C |E_1| \int_{- r_0}^{r_0} \mathrm{d}s  \frac{  g^2   }{  s^2 +4g^4 E_1^2c^2     }.
\end{align} 
Finally, we change to the variable $ \tau = s / g^2 $ to obtain:
\begin{align}
|\Im (I)|\geq  C |E_1|  \int_{- r_0/g^2}^{r_0/g^2} \mathrm{d}\tau   \frac{  1   }{  \tau^2 +4 E_1^2c^2     } \geq C |E_1| ,
\end{align}
for small enough $g$ (depending on $G$).  
\end{proof}
\end{appendix}

\section*{List of main notations}
In this section we provide of list of main notations and their place of definition used in this
\begin{center}
    \begin{longtable}{l | l}
    Symbol & Place of definition  \\ \hline
 $E_1$ & below \eqref{eq:lorentzian} \\
 $H_0$, $K$, $H_f$ & \eqref{h0def}\\
 $e_0$, $e_1$ & below \eqref{h0def}\\
  $\omega$ & below \eqref{h0def}\\
   $V$, $\sigma_1$ &  \eqref{interaction}\\
    $f$ &  \eqref{eq:f}\\
     $\mu$ &  \eqref{const:mu}\\
      $H$ &  \eqref{eq:H}\\
       $g$ & below \eqref{eq:H},  see also   Definition \ref{gggg},    \eqref{dorm2}  and Definition 4.3 in \cite{bdh-res}   \\
        $\mathcal H$, $\mathcal K$ &  \eqref{hilbertspace}\\
        $\mathcal F[\mathfrak{h}]$, $\mathfrak{h}$ &  \eqref{fockspace}\\
        $\odot $ & below  \eqref{fockspace}\\
        $\Omega $ &   \eqref{Omega}\\
        $\mathcal F_0 $ & \eqref{fock0}\\
        $\mathit S(\R^3,\C) $ & below  \eqref{fock0}\\
        $a(h) $ &   \eqref{def:a}\\
         $a(h)^* $ &   \eqref{def:a1}\\
         $a(k) $ &   \eqref{eq:aformal}\\
         $a(k)^* $ &   \eqref{eq:akstar}\\
         $\varphi_0 $, $\varphi_1$  &   \eqref{varphi}\\
         $\mathcal D(\bullet)$ & below  \eqref{varphi}\\
         $\sigma(\bullet)$ & below  \eqref{varphi}\\
         $\theta$, $u_\theta$, $U_\theta$ & Definition \ref{complextransf} \\
         $H^\theta$ & \eqref{Hthetaaaa}\\
         $H_f^\theta$, $V^\theta$ & \eqref{eq:Hftheta}\\
         $\omega^\theta$, $f^\theta$ & \eqref{def:thetafncts}\\
         $D(\bullet,\bullet)$ & \eqref{eq:def-disc}\\
         $\lambda_0$, $\lambda_1$ & below Lemma \ref{lemma:specdilh0}\\
         $\Psi_{\lambda_0}$, $\Psi_{\lambda_1}$ & \eqref{eq:gsvec}\\
         $\mathfrak{h}_0$ & \eqref{def:h0}\\
         $a_\pm(h) $ &   \eqref{asymptop}\\
         $a_\pm(h)^* $ &   below \eqref{asymptop} \\
         $\mathcal K^\pm$, $\mathcal H^\pm$ & \eqref{asympthilbert}\\
         $\Omega_\pm$ & \eqref{intertwining}\\
         $S(h,l)$ & \eqref{eq:2bodyscat}\\
         $T(h,l)$ & \eqref{eq:Tmatrix}\\
         $G$ & \eqref{eq:G-def}\\
          $\iota$ &   \eqref{dorm2} and \eqref{dorm222}    \\ 
         $T_P(h,l)$ & \eqref{scatteringformulapp1}\\
         $R(h,l)$ & \eqref{scatteringkernel}\\
         $\nu$ & \eqref{eq:nu}\\
         $\mathcal S$ & \eqref{def:setS}\\
         $\boldsymbol \nu$ & below \eqref{def:setS}\\
         $\rho_0$, $\rho $ & \eqref{rhos}    and \eqref{dorm2}  \\
         $A$ & \eqref{region:A}\\
         $B_0^{(1)}$,  $B_1^{(1)}$ & \eqref{region:Bi1}\\
         $\mathcal C_m(z)$ & \eqref{eq:defcone}\\
         $E_I$ & \eqref{EI}\\
         $\rho_n$ & \eqref{rho22}\\
         $H^{(n),\theta}$ & \eqref{Hntheta}\\
         $H_f^{(n),\theta}$ & \eqref{Hfntheta}\\
         $V^{(n),\theta}$ & \eqref{Vntheta}\\
         $\mathcal H^{(n)}$, $\mathfrak{h}^{(n)}$ & \eqref{hilbertn}\\
         $\tilde H^{(n)}$ & \eqref{Htilde}\\
         $\mathfrak{h}^{(n, \infty)}$ & \eqref{hilbertrest}\\
         $\Omega^{(n, \infty)}$, $P_{\Omega^{(n, \infty)}}$ & below \eqref{hilbertrest}\\
         $\lambda_0^{(n)}$, $\lambda_1^{(n)}$ & above \eqref{projections}\\
          $P_0^{(n),\theta}$, $P_1^{(n),\theta}$ &  \eqref{projections}\\
          $P_0^{\theta}$, $P_1^{\theta}$ &  \eqref{ana}\\
          $\boldsymbol C$ & below    \eqref{dorm222}    \\
          $\mathfrak{F}$, $\mathfrak{F}^{-1}$ & Definition \ref{def:fourier-distri}\\
          $W$ & \eqref{def:W1st}\\
          $\Sigma$ & below \eqref{def:W1st}\\
          $\mathcal H_0$ & \eqref{eq:H0}\\
          $\mathcal F_\text{fin}[\mathfrak{h}_0]$ & \eqref{def:denseF}\\
          $\norm{\bullet}_{\bullet}$ & below  \eqref{def:denseF}\\
          $\Gamma(\epsilon,R)$ & above \eqref{Gamma-parts}\\
          $\Gamma_-(\epsilon,R)$ &  \eqref{Gamma-parts}\\
          $\Gamma_d(R)$ &  \eqref{Gamma-parts}\\
          $\Gamma_c(\epsilon)$ &  \eqref{Gamma-parts}\\
          $\epsilon_n$ & \eqref{eq:epsn}\\
          $R_1(q,Q)$ & \eqref{eq:thl1t}\\
          $P_1(q,Q)$ & \eqref{eq:thl2t}\\
          $\tilde P_1(q,Q)$ & \eqref{eq:thl2ttt}
    \end{longtable}
\end{center}

\section*{Acknowledgement}
D.\ -A.\ Deckert and F.\ H\"anle would like to thank the IIMAS at UNAM and M.\
Ballesteros  the Mathematisches Institut at LMU Munich   for their hospitality. This
project was partially funded by the DFG Grant DE 1474/3-1,  the grants PAPIIT-DGAPA
UNAM  IN108818, SEP-CONACYT 254062, and the junior research group ``Interaction
between Light and Matter'' of the Elite Network Bavaria. M.\  B.\  is a
Fellow of the Sistema Nacional de Investigadores (SNI).
F.\ H.\ gratefully acknowledges financial support by the ``Studienstiftung des deutschen Volkes''
Moreover, the authors express their gratitude for the fruitful discussions with
V.\ Bach, J.\ Faupin, J.\ S.\ M\o ller, A.\ Pizzo and W.\ De Roeck, R. Weder and P. Barberis.

\bibliographystyle{amsplain}
\bibliography{ref}

\begin{thebibliography}{10}

\bibitem{sigal}
W.~K. Abou~Salem, J.~Faupin, J.~Fr\"ohlich, and I.~M. Sigal.
\newblock On the theory of resonances in non-relativistic quantum
  electrodynamics and related models.
\newblock {\em Adv. in Appl. Math.}, 43:201--230, 2009.

\bibitem{bbf}
V.~Bach, M.~Ballesteros, and J.~Fr\"ohlich.
\newblock Continuous renormalization group analysis of spectral problems in
  quantum field theory.
\newblock {\em J. Funct. Anal.}, 268(5):749--823, 2015.

\bibitem{bbkm}
V.~Bach, M.~Ballesteros, M.~K{\"o}nenberg, and L.~Menrath.
\newblock Existence of ground state eigenvalues for the {S}pin-{B}oson model
  with critical infrared divergence and multiscale analysis.
\newblock {\em J. Math. Anal. and Appl.}, 453(2):773--797, 2017.

\bibitem{bach}
V.~Bach, M.~Ballesteros, and A.~Pizzo.
\newblock Existence and construction of resonances for atoms coupled to the
  quantized radiation field.
\newblock {\em ArXiv perprint: arXiv:1302.2829}, 2013.

\bibitem{bbp}
V.~Bach, M.~Ballesteros, and A.~Pizzo.
\newblock Existence and construction of resonances for atoms coupled to the
  quantized radiation field.
\newblock {\em Adv. Math.}, 314:540--572, 2017.

\bibitem{bcfs}
V.~Bach, T.~Chen, J.~Fr\"ohlich, and I.~M. Sigal.
\newblock Smooth {F}eshbach map and operator-theoretic renormalization group
  methods.
\newblock {\em J. Funct. Anal.}, 203:44--92, 2003.

\bibitem{bfp}
V.~Bach, J.~Fr\"ohlich, and A.~Pizzo.
\newblock An infrared-finite algorithm for rayleigh scattering amplitudes, and
  bohr's frequency condition.
\newblock {\em Comm. Math. Phys.}, 2007.

\bibitem{bfs1}
V.~Bach, J.~Fr\"ohlich, and I.~M. Sigal.
\newblock Mathematical theory of nonrelativistic matter and radiation.
\newblock {\em Lett. Math. Phys.}, 34(3):183--201, 1995.

\bibitem{bfs3}
V.~Bach, J.~Fr\"ohlich, and I.~M. Sigal.
\newblock Quantum electrodynamics of confined nonrelativistic particles.
\newblock {\em Adv. Math.}, 137(2):299--395, 1998.

\bibitem{bfs2}
V.~Bach, J.~Fr\"ohlich, and I.~M. Sigal.
\newblock Renormalization group analysis of spectral problems in quantum field
  theory.
\newblock {\em Adv. Math.}, 137(2):205--298, 1998.

\bibitem{bfs100}
V.~Bach, J.~Fr\"ohlich, and I.~M. Sigal.
\newblock Spectral analysis for systems of atoms and molecules coupled to the
  quantized radiation field.
\newblock {\em Comm. Math. Phys.}, 207(2):249--290, 1999.

\bibitem{bkz}
V.~Bach, F.~Klopp, and H.~Zenk.
\newblock Mathematical analysis of the photoelectric effect.
\newblock {\em Adv. Theor. Math. Phys.}, 5:969--999, 2001.

\bibitem{bmw}
V.~Bach, J.~S. M{\o}ller, and M.~C. Westrich.
\newblock Beyond the van {H}ove timescale.
\newblock {\em preprint in preperation}.

\bibitem{bdh-res}
M.~Ballesteros, D.-A. Deckert, and F.~H\"anle.
\newblock Analyticity of resonances and eigenvalues and spectral properties of
  the massless {S}pin-{B}oson model.
\newblock {\em arXiv:1801.04021}, 2018.

\bibitem{bffs}
M.~Ballesteros, J.~Faupin, J.~Fr\"ohlich, and B.~Schubnel.
\newblock Quantum electrodynamics of atomic resonances.
\newblock {\em Comm. Math. Phys.}, 337(2):633--680, 2015.

\bibitem{fau3}
J.-F. Bony, J.~Faupin, and I.~Sigal.
\newblock Maximal velocity of photons in non-relativistic {QED}.
\newblock {\em Adv. Math.}, 231(5):3054--3078, 2012.

\bibitem{rgk}
W.~De~Roeck, M.~Griesemer, and A.~Kupiainen.
\newblock Asymptotic completeness for the massless {S}pin-{B}oson model.
\newblock {\em Adv. Math.}, 268:62--84, 2015.

\bibitem{rk}
W.~De~Roeck and A.~Kupiainen.
\newblock Approach to ground state and time-independent photon bound for
  massless {S}pin-{B}oson models.
\newblock {\em Ann. Henri Poincar\'e}, 14(2):253--311, 2013.

\bibitem{rk2}
W.~De~Roeck and A.~Kupiainen.
\newblock Minimal velocity estimates and soft mode bounds for the massless
  {S}pin-{B}oson model.
\newblock {\em Ann. Henri Poincar\'e}, 16(2):365--404, 2015.

\bibitem{derezinski}
J.~Derezi\'nski and C.~G\'erard.
\newblock Asymptotic completeness in quantum field theory. {M}assive
  {P}auli-{F}ierz {H}amiltonians.
\newblock {\em Rev. in Math. Phys.}, 11(4):383--450, 1999.

\bibitem{f}
J.~Faupin.
\newblock Resonances of the confined hydrogen atom and the {L}amb-{D}icke
  effect in non-relativistic qed.
\newblock {\em Ann. Henri Poincar\'e}, 9:743--773, 2008.

\bibitem{fau2}
J.~Faupin and I.~M. Sigal.
\newblock Minimal photon velocity bounds in non-relativistic quantum
  electrodynamics.
\newblock {\em J. Stat. Phys.}, 154(1-2):58--90, 2014.

\bibitem{fau1}
J.~Faupin and I.~M. Sigal.
\newblock On {R}ayleigh scattering in non-relativistic quantum electrodynamics.
\newblock {\em Comm. Math. Phys.}, 328(3):1199--1254, 2014.

\bibitem{fgs2}
J.~Fr\"ohlich, M.~Griesemer, and B.~Schlein.
\newblock Asymptotic completeness for {R}ayleigh scattering.
\newblock {\em Ann. Henri Poincar\'e}, 3:107--170, 2002.

\bibitem{fgs1}
J.~Fr\"ohlich, M.~Griesemer, and B.~Schlein.
\newblock Asymptotic completeness for {C}ompton scattering.
\newblock {\em Comm. Math. Phys.}, 252(1):415--476, 2004.

\bibitem{fgs100}
J.~Fr\"ohlich, M.~Griesemer, and I.~M. Sigal.
\newblock Spectral renormalization group.
\newblock {\em Rev. in Math. Phys.}, 21:511--548, 2009.

\bibitem{feshbach}
M.~Griesemer and D.~Hasler.
\newblock On the smooth {F}eshbach-{S}chur map.
\newblock {\em J. Funct. Anal.}, 254(9):2329--2335, 2008.

\bibitem{hasler1}
D.~Hasler and I.~Herbst.
\newblock Ground states in the {S}pin {B}oson model.
\newblock {\em Ann. Henri Poincar\'e}, 12(4):621--677, 2011.

\bibitem{hasler}
D.~Hasler, I.~Herbst, and M.~Huber.
\newblock On the lifetime of quasi-stationary states in non-relativistic {QED}.
\newblock {\em Ann. Henri Poincar\'e}, 9(5):1005--1028, 2008.

\bibitem{spohnrad}
M.~H\"ubner and H.~Spohn.
\newblock Radiative decay: Nonperturbative approaches.
\newblock {\em Rev. in Math. Phys.}, 07(03):363--387, 1995.

\bibitem{spohnspin}
M.~H\"ubner and H.~Spohn.
\newblock Spectral properties of the {S}pin-{B}oson {H}amiltonian.
\newblock {\em Ann. d'I.H.P Section A}, 64(2):289--323, 1995.

\bibitem{jaksic}
V.~Jak{\v s}i{\'c} and C.~A. Pillet.
\newblock On a model for quantum friction. i. {F}ermi's golden rule and
  dynamics at zero temperature.
\newblock {\em Ann. de l'I.H.P. Physique th{\'e}orique}, 62(1):47--68, 1995.

\bibitem{pizzo1}
A.~Pizzo.
\newblock One-particle (improper) states in {N}elson's massless model.
\newblock {\em Ann. Henri Poincar\'e}, 4:439-- 86, 2003.

\bibitem{pizzo2}
A.~Pizzo.
\newblock Scattering of an infraparticle: The one particle sector in {N}elson's
  massless model.
\newblock {\em Ann. Henri Poincar\'e}, 6:553--606, 2005.

\bibitem{reedsimon1}
M.~Reed and B.~Simon.
\newblock {\em Methods of modern mathematical physics I: Analysis of
  Operators}.
\newblock Academic Press, 1978.

\bibitem{reedsimon2}
M.~Reed and B.~Simon.
\newblock {\em Methods of modern mathematical physics II: Fourier Analysis,
  Self-adjointness}.
\newblock Academic Press, 1978.

\bibitem{sf}
W.~K.~A. Salem and J.~Fr\"ohlich.
\newblock Adiabatic theorems for quantum resonances.
\newblock {\em Comm. Math. Phys.}, 273(3):651--675, 2006.

\bibitem{s}
I.~M. Sigal.
\newblock Ground state and resonances in the standard model of the
  non-relativistic {QED}.
\newblock {\em J. Stat. Phys.}, 134(5-6):899--939, 2009.

\bibitem{simonnbody}
B.~Simon.
\newblock Resonances in n-body quantum systems with dilatation analytic
  potentials and the foundations of time-dependent perturbation theory.
\newblock {\em Ann. of Math. Sec. Series}, 97(2):247--274, 1973.

\bibitem{spohn_dynamics_2008}
H.~Spohn.
\newblock {\em Dynamics of {Charged} {Particles} and their {Radiation}
  {Field}}.
\newblock Cambridge University Press, Cambridge, 1 edition, 2008.

\end{thebibliography}
\end{document}